\definecolor{nodecol}{RGB}{240,240,220}
\definecolor{nodeedge}{RGB}{240,240,225}
\definecolor{edgecol}{RGB}{130,130,130}
\tikzset{%
  fshadow/.style={      preaction={
      fill=black,opacity=.3,
      path fading=circle with fuzzy edge 20 percent,
      transform canvas={xshift=1mm,yshift=-1mm}
    }} 
}
\newcommand{\cgather}[2][\EQSP]{\begingroup\setlength\abovedisplayskip{#1}\setlength\belowdisplayskip{#1}\begin{gather} #2 \end{gather}\endgroup\noindent}
\newcommand{\calign}[2][\EQSP]{\begingroup\setlength\abovedisplayskip{#1}\setlength\belowdisplayskip{#1}\begin{align} #2 \end{align}\endgroup\noindent}
\newcommand{\mnp}[2]{\begin{minipage}{#1}#2\end{minipage}} 
\newtheorem{thm}{Theorem}
\newtheorem{defn}{Definition}
\newtheorem{rem}{Remark}
\newif\ifproof
\newcommand{\etal}{\textit{et} \mspace{3mu} \textit{al.}}
\newcommand{\HA}{\boldsymbol{\mathds{H}}}
\newcommand{\abs}[1]{\left\lvert #1 \right\rvert}%
\DeclareMathOperator*{\argmin}{arg\,min}
\newcommand{\captionN}[1]{\caption{\color{darkgray} \sffamily \fontsize{9}{10}\selectfont #1  }}
\renewcommand{\rmdefault}{phv} %
\renewcommand{\sfdefault}{phv} %
\edef\keptrmdefault{\rmdefault}
\edef\keptsfdefault{\sfdefault}
\edef\keptttdefault{\ttdefault}
\edef\rmdefault{\keptrmdefault}
\edef\sfdefault{\keptsfdefault}
\edef\ttdefault{\keptttdefault}
\tikzset{wiggle/.style={decorate, decoration={random steps, amplitude=10pt}}}
\newcolumntype{L}[1]{>{\rule{0pt}{2ex}\raggedright\let\newline\\\arraybackslash\hspace{0pt}}m{#1}}
\newcolumntype{C}[1]{>{\rule{0pt}{2ex}\centering\let\newline\\\arraybackslash\hspace{0pt}}m{#1}}
\newcolumntype{R}[1]{>{\rule{0pt}{2ex}\raggedleft\let\newline\\\arraybackslash\hspace{0pt}}m{#1}}
\def\DISCLOSURE#1{\def\disclosure{#1}}
\pgfqpoint{\hatchdistance}{\hatchdistance}}
\pgfqpoint{\hatchdistance}{0pt}}
\renewcommand\section{\@startsection {section}{1}{\z@}%
  {-2ex \@plus -1ex \@minus -.2ex}%
  {1ex \@plus.1ex}%
  {\Large\bfseries\scshape}}
\renewcommand\subsection{\@startsection {subsection}{1}{\z@}%
  {-2ex \@plus -.25ex \@minus -.2ex}%
  {0.1ex \@plus.0ex}%
  {\fontsize{11}{10}\selectfont\bfseries\sffamily\color{black}}}
\renewcommand\subsubsection{\@startsection {subsubsection}{1}{\z@}%
  {0ex \@plus -.5ex \@minus -.2ex}%
  {0.0ex \@plus.5ex}%
  {\fontsize{9}{9}\selectfont\bfseries\itshape\sffamily\color{darkgray}}}
\renewcommand\paragraph{\@startsection {paragraph}{1}{\z@}%
  {-.2ex \@plus -.5ex \@minus -.2ex}%
  {0.0ex \@plus.5ex}%
  {\fontsize{9}{9}\selectfont\itshape\sffamily\color{darkgray}}}
\newcounter{Dcounter}
\newcommand{\qn}[1][i]{\Phi_{#1}}
\newcommand{\D}[1][i]{\mathscr{D}\left ( {\Sigma_#1} \right ) }
\newcommand{\Dx}{\mathscr{D}}
\def\J{\mathds{J}}
\def\M{\omega}
\newcommand{\mem}[1]{\M_{#1}}
\newcommand\transformxdimension[1]{
    \pgfmathparse{((#1/\pgfplots@x@veclength)+\pgfplots@data@scale@trafo@SHIFT@x)/10^\pgfplots@data@scale@trafo@EXPONENT@x}
}
\newcommand\transformydimension[1]{
    \pgfmathparse{((#1/\pgfplots@y@veclength)+\pgfplots@data@scale@trafo@SHIFT@y)/10^\pgfplots@data@scale@trafo@EXPONENT@y}
}
\pgfplotsset{
    discard if/.style 2 args={
        x filter/.code={
            \edef\tempa{\thisrow{#1}}
            \edef\tempb{#2}
            \ifx\tempa\tempb
                \def\pgfmathresult{inf}
            \fi
        }
    },
    discard if not/.style 2 args={
        x filter/.code={
            \edef\tempa{\thisrow{#1}}
            \edef\tempb{#2}
            \ifx\tempa\tempb
            \else
                \def\pgfmathresult{inf}
            \fi
        }
    }
  }
\def\commatononei#1,{#1}
\def\commatononej#1,#2,{#1#2}
\def\commatonone#1{\expandafter\commatononei#1}
\def\commatononeT#1{\expandafter\commatononej#1}
\newcommand{\Sum} [2] {#1 + #2 = \the\numexpr #1 + #2 \relax \\}
\newcounter{Ccounter}
\def\HA{\mathcal{H}}
\def\NA{\mathcal{N}}
\def\allha{\mathds{E}^\mathcal{H}}
\def\allna{\mathds{E}^\mathcal{N}}
\newif\iftikzX
\newif\ifFIGS
\newif\ifdraftQ
\def\TITLE{\enet: A Digital Twin of Sequence Evolution for Scalable
Emergence Risk Assessment of Animal \infl Strains}
\def\authore{Kevin Yuanbo Wu}
\def\authora{ Jin Li}
\def\authorc{Aaron Esser-Kahn}
\def\authord{Ishanu Chattopadhyay}
\def\addressa{Department of Medicine, University of Chicago, IL, USA}
\def\addressc{Committee on Quantitative Methods in Social, Behavioral, and Health Sciences, University of Chicago, IL, USA}
\def\addressd{Pritzker School of Molecular Engineering, University of Chicago, Chicago, IL, USA}
\def\addresse{Committee on Immunology, University of Chicago, Chicago, IL, USA}
\newif\ifdraftQ
\title{\LARGE \TITLE}
\author{\sffamily  \fontsize{10}{12}\selectfont   \authore$^{1}$,\authora$^{1}$,  \authorc$^{2,3}$, and \authord$^{1,4,5\bigstar}$\\                                                                
\vspace{10pt}                                                                   

\sffamily  \fontsize{10}{12}\selectfont                                         
$^{1}$\addressa\\   
$^{2}$\addressd\\
$^{3}$\addresse\\
$^{5}$\addressc                                                                 
\vskip 1em                                                                      
$^\bigstar$To whom correspondence should be addressed: e-mail: \texttt{ishanu@uchicago.edu}.}
\def\hcov{SARS-CoV-2\xspace}
\def\RATG13{RaTG13\xspace}
\def\qnet{Enet\xspace}
\def\enet{Emergenet\xspace}
\def\qnet{\enet}%
\def\erisk{E-risk\xspace}
\def\qdist{E-distance\xspace}
\def\infl{Influenza A\xspace}
\def\dst{x_\star^{t+\delta}}
\def\dsta{x^{t+\delta}}
\definecolor{colh1n1x}{HTML}{551177}
\definecolor{colh3n2x}{HTML}{0077FF}
\definecolor{colh5n1x}{HTML}{991133}
\definecolor{colh7n9x}{HTML}{aa6622}
\definecolor{colh9n2x}{HTML}{8877FF}
\definecolor{colh3n3x}{HTML}{FF0000}
\definecolor{colh1n2x}{HTML}{337733}
\newif\iftikzX
\def\EXTENDED{Extended Data\xspace}
\def\SUPPLEMENTARY{Supplementary\xspace}
\newif\ifFIGS
\def\METHODS{Methods \& Materials\xspace}
\pgfplotsset{compat=1.16}
\begin{document}  
\maketitle 

{\bf \sffamily \fontsize{10}{12}\selectfont \noindent   
  {\normalfont \itshape Abstract:} Despite having triggered devastating pandemics in the past, our ability to quantitatively assess the emergence potential of individual strains of animal influenza viruses remains limited. This study introduces Emergenet, a tool to infer a digital twin of sequence evolution to chart how new variants might emerge in the wild. Our predictions based on Emergenets built only using 220,151 Hemagglutinnin (HA) sequences consistently outperform WHO seasonal vaccine recommendations for H1N1/H3N2 subtypes over two decades (average match-improvement: 3.73 AAs, 28.40\%), and are at par with state-of-the-art approaches that use more detailed phenotypic annotations. Finally, our generative models are used to scalably calculate the current odds of emergence of animal strains not yet in human circulation, which strongly correlates with CDC’s expert-assessed Influenza Risk Assessment Tool (IRAT) scores (Pearson’s $r = 0.721, p = 10^{-4}$). A minimum five orders of magnitude speedup over CDC’s assessment (seconds vs months) then enabled us to analyze 6,354 animal strains collected post-2020 to identify 35 strains with high emergence scores ($> 7.7$). The Emergenet framework opens the door to preemptive pandemic mitigation through targeted inoculation of animal hosts before the first human infection.}

\vspace{10pt} 
 
\subsection*{Introduction} 
Influenza viruses constantly evolve~\cite{dos2016influenza}, sufficiently altering surface protein structures to evade the prevailing host immunity, and cause the recurring seasonal  epidemic. These periodic  infection peaks claim a quarter to half a million lives~\cite{huddleston2020integrating} globally,  and currently our response hinges on annually  inoculating  the  human population with a  reformulated  vaccine~\cite{boni2008vaccination,dos2016influenza}.  Among numerous factors that hinder optimal design of the flu shot, failing to correctly predict the future frequency-dominant strains  dramatically reduces vaccine effectiveness~\cite{tricco2013comparing}. Despite  recent advances~\cite{neher2014predicting,huddleston2020integrating} such predictions remain imperfect. In addition to  the seasonal  epidemic, influenza strains spilling over into humans from animal reservoirs have triggered  pandemics  at least four times (1918 Spanish flu/H1N1, 1957 Asian flu/H2N2, 1968 Hong Kong flu/H3N2, 2009 swine flu/H1N1) in the past 100 years~\cite{shao2017evolution}. With the memory of the  sudden \hcov emergence
fresh in our minds, a looming question  is whether we can  preempt and mitigate such events in the future. \infl, partly on account of its segmented genome and its wide prevalence in common animal hosts, can easily incorporate genes from multiple strains and (re)emerge as novel human pathogens~\cite{reid2003origin,vergara2014ns}.  

\def\MXCOL{black}
\def\FXCOL{Orchid3}
\def\MNCOL{SeaGreen4}
\def\FNCOL{SeaGreen4}
\def\NCOL{SeaGreen4}
\def\XCOL{Tomato}
\def\WCOL{Tomato}
\def\YCOL{DodgerBlue4}
\def\TEXTCOL{gray}
\def\AXISCOL{white}
\ifFIGS
\begin{figure*}[!ht]
  \tikzexternalenable
  \tikzsetnextfilename{scheme}
  \centering
  \iftikzX  
  \def\EPATH{e2}
  \input{Figures/scheme}
 \else
  \includegraphics[width=\textwidth]{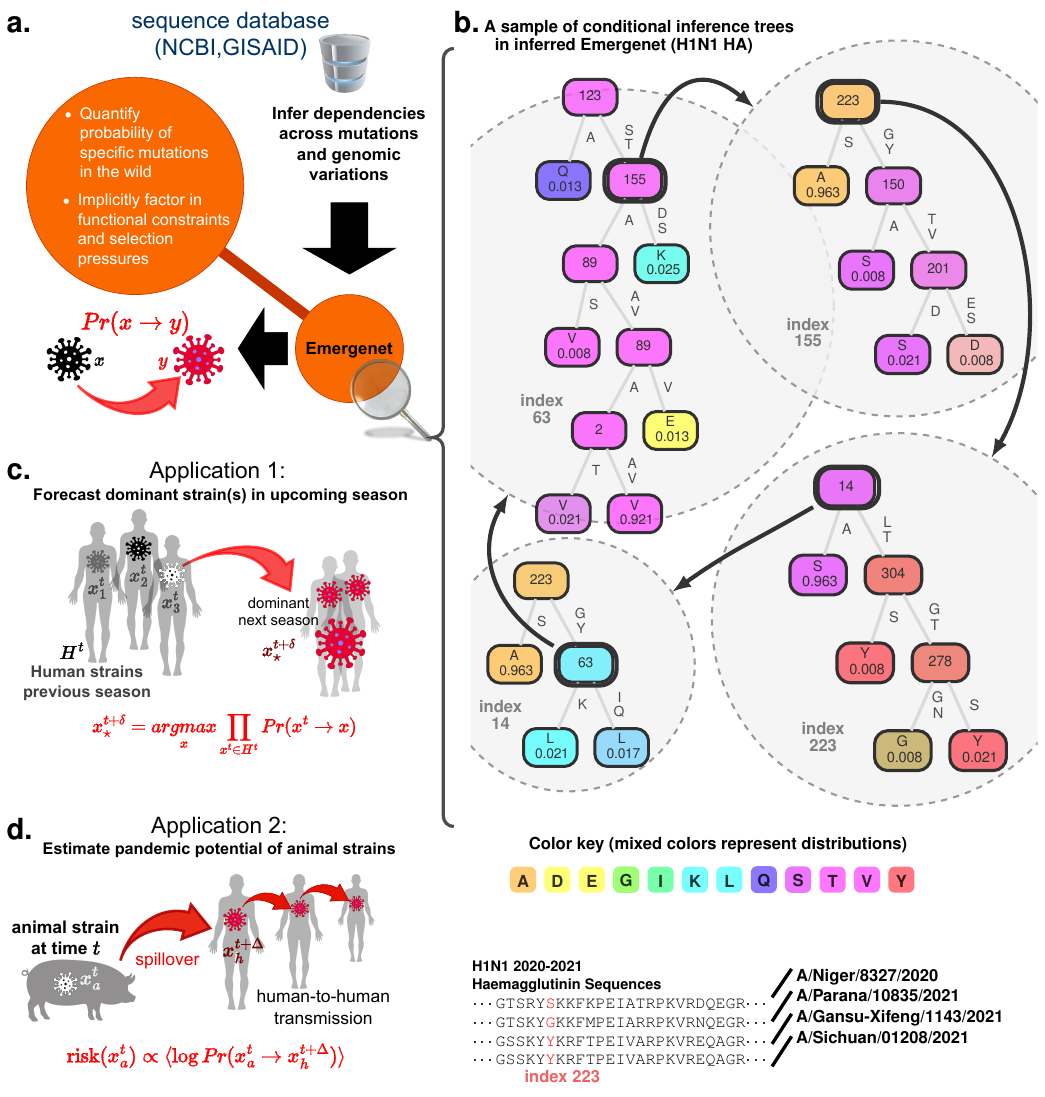}
  \fi
  \vspace{-20pt}
  
 \captionN{\textbf{\enet inference and applications}. \textbf{Panel a} Variations of genomes for identical subtypes of \infl are analyzed to infer a recursive forest of conditional inference trees~\cite{Hothorn06unbiasedrecursive} -- the \enet --   which maximally captures the emergent  dependencies between an a priori unspecified number of   mutations. With these inferred dependencies we can    estimate the numerical odds of specific mutations, and by extension, the numerical value of the probability of one strain giving rise to another in the wild, under  complex selection pressures from the background. \textbf{Panel b} Snapshot of decision trees  from the \enet inferred for H1N1 HA sequences collected in 2020-2021, which reveals a cyclic dependency. In general, every internal node of a component tree can be ``expanded'' into its own tree, underscoring the recursive structure of the \enet. \textbf{Panel c} First application: forecast  dominant strain(s) for the next flu season, using only  sequences collected up to six months prior and the inferred \enet, using data from the past year. \textbf{Panel d} Second application: estimation of the pandemic risk posed by individual animal strains that are still not known to circulate in humans.}\label{figscheme}.
\end{figure*}
\else
\refstepcounter{figure}\label{figscheme}
\fi

\ifFIGS
\begin{figure*}[!ht]
  \centering
  \tikzexternalenable
   \tikzsetnextfilename{seasonalpred_both}

  \iftikzX
  \begin{tikzpicture}
  \def\HGT{.350in}
  \def\WDT{2.75in}
  \def\YST{-.3in}

   \node[,label={[font=\bf\sffamily,yshift=-.60in]90:\underline{Single Recommendation}}] (AAA) at (0,0) {\input{Figures/seasonalpred_single.tex}};
 \node[anchor=north,label={[font=\bf\sffamily]90:\underline{Two Recommendations}}] (BBB) at ([yshift=-.25in]AAA.south) {\input{Figures/seasonalpred_two.tex}};
     \node[anchor=center,rotate=90,align=center] (Lh) at ([xshift=.35in]$(AAA.south west)!.5!(BBB.north west)$) 
   {\large Improvement in edit distance from future population};

\end{tikzpicture}
   \else 
   \includegraphics[width=\textwidth]{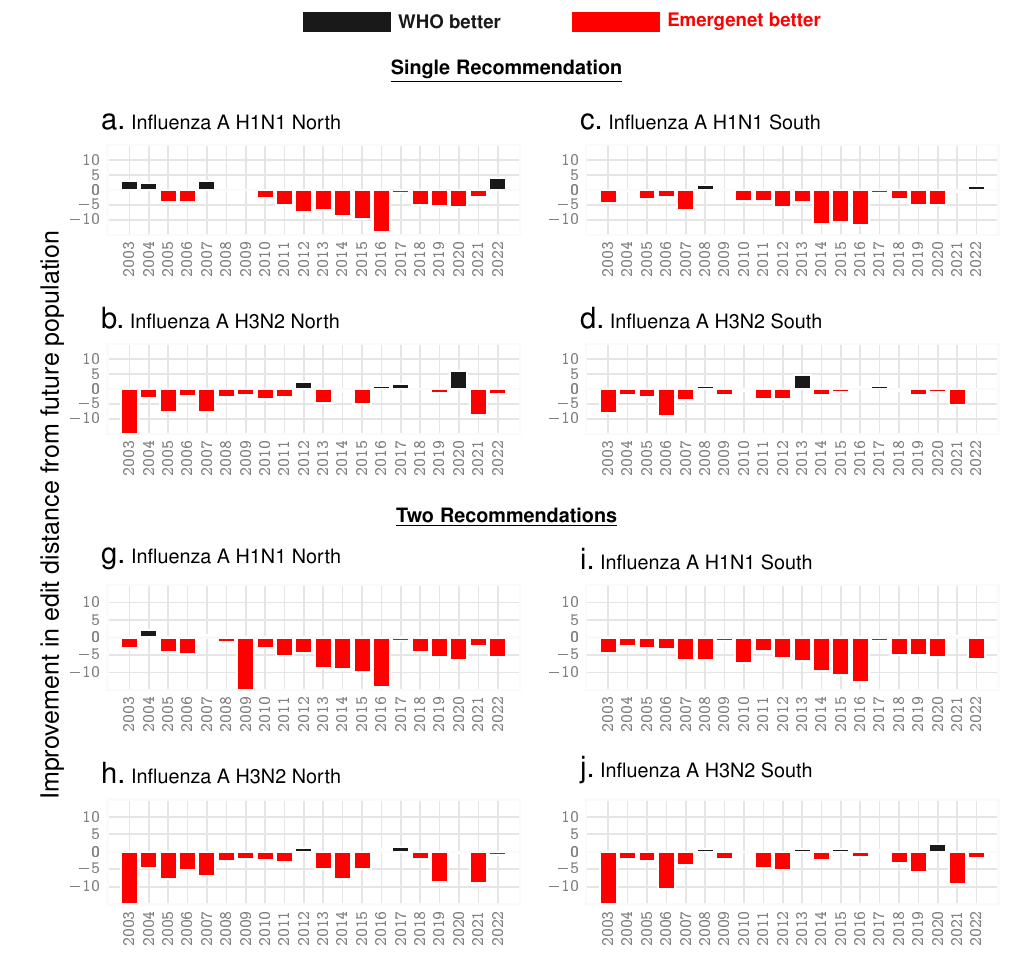}
   \fi
   \captionN{\textcolor{black}{\textbf{Seasonal predictions for Influenza A.} Relative out-performance of \enet predictions against WHO recommendations for H1N1 and H3N2 subtypes for Hemagglutinin (HA) over the both hemispheres. The negative bars (red) indicate the reduced average Hamming distance between the predicted sequence and the sequence population that season. Providing two recommendations shows a significant improvement over providing a single recommendation. Note that the recommendations for the north are given in February, while that for the south are given in September, keeping in mind that the southern flu season begins a few months earlier (e.g. for the 2022-2023 flu season, northern data is labelled `2022').}}\label{figseasonal}
\end{figure*}
\else
\refstepcounter{figure}\label{figseasonal}
\fi

\ifFIGS
\begin{figure*}[!ht]
  \tikzexternalenable
  \tikzsetnextfilename{figpred}
  \centering
  \iftikzX 
   \input{Figures/figpred_v3}
   \else 
   \includegraphics[width=\textwidth]{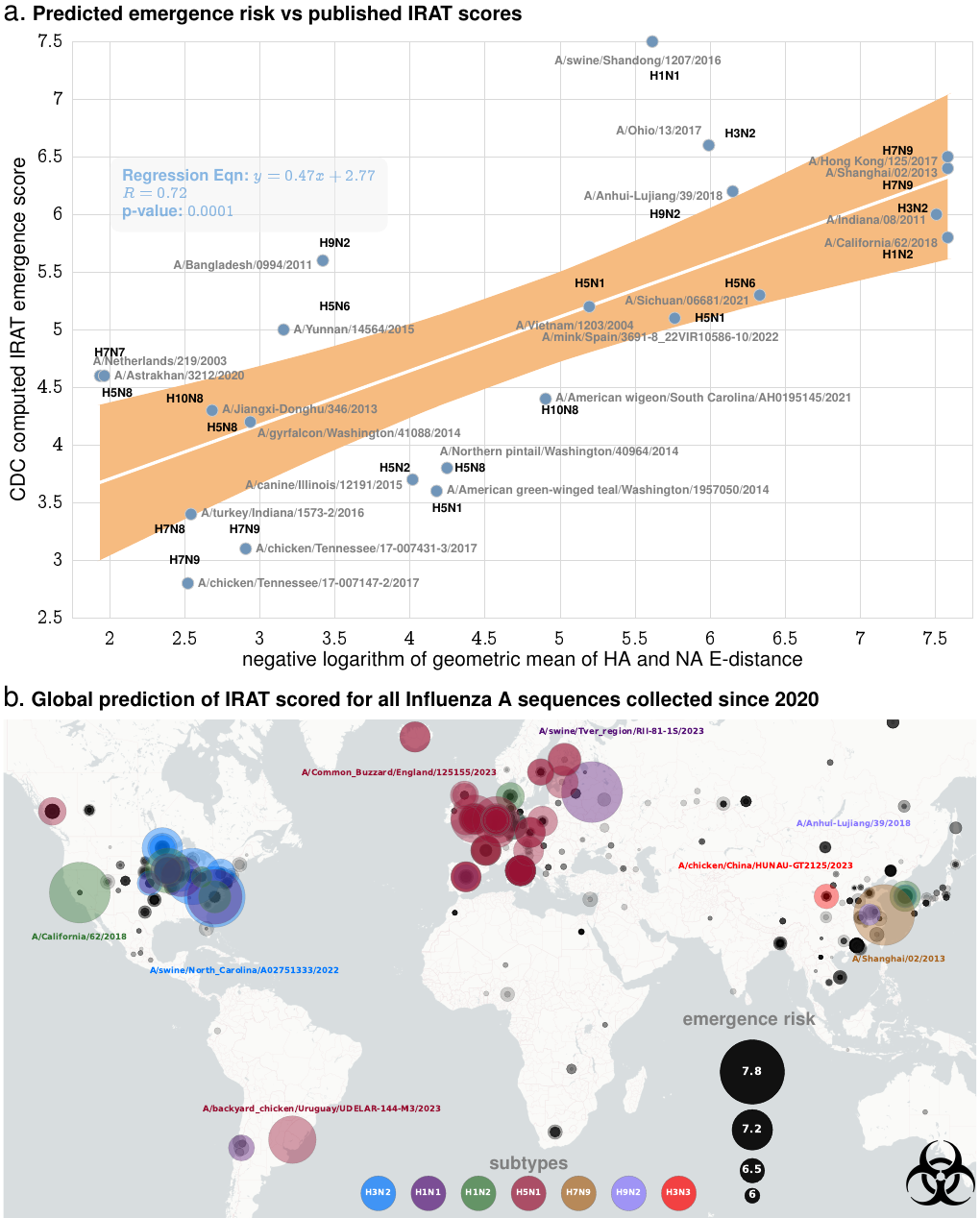}
   \fi
   \vspace{-7pt}
   
  \captionN{\textcolor{black}{\textbf{\enet based estimation of IRAT scores for animal strains. Panel a} We find  an approximate linear relationship between the negative logarithm of the geometric mean of the smallest \qdist for  HA and NA sequence of a target strain from circulating human strains in the year of estimation and the CDC published IRAT emergence scores.  \textbf{Panel b} Identifying risky \infl strains amongst those collected 2020-2023  using our approach.
  }}\label{figirat}
\end{figure*}
\else
\refstepcounter{figure}\label{figirat}
\fi
\ifFIGS

\begin{figure}[!ht]\centering
\centering
 \tikzexternalenable
  \tikzsetnextfilename{figphylo}
  \centering
  \iftikzX 
   \input{Figures/figphylo}  
   \else 
   \includegraphics[width=.78\textwidth]{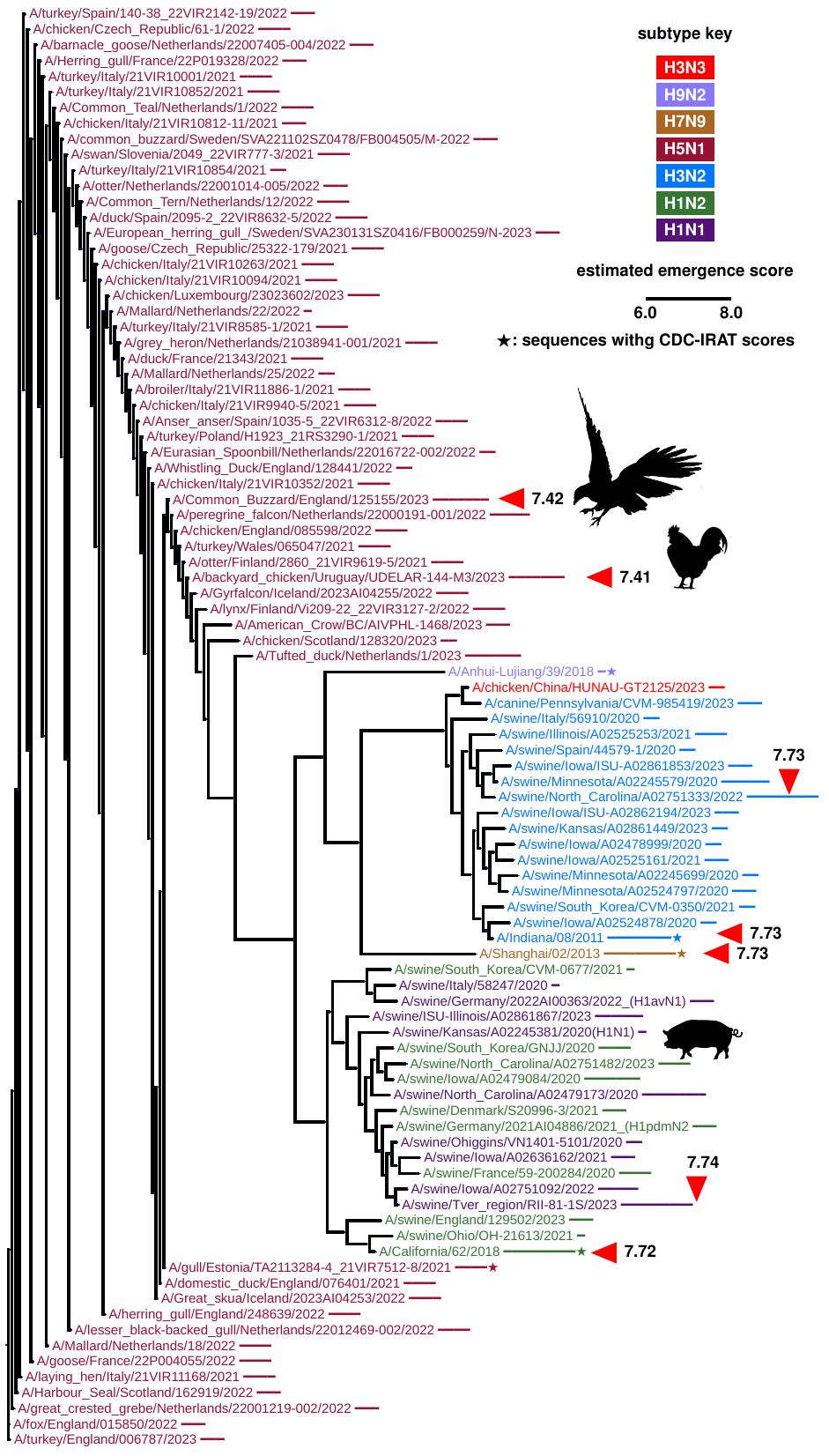}
   \fi
   \vspace{-10pt}

\captionN{\textcolor{black}{\textbf{Phylogeny constructed with edit distances,  with all  \infl strains collected post 2020, with estimated IRAT emergence risk $> 6.0$.} Strains with CDC-computed IRAT scores are also included (shown with $\bigstar$). Leaves have been collapsed which differ by less than 20 edits in the HA, displaying the most risky strains in the collapsed group,  which comes from diverse animal hosts, and geographic regions. Top strains are indicated by arrowheads.}}\label{figphylo}
\end{figure}
\else
\refstepcounter{table}\label{figphylo}
\fi

\ifFIGS

\begin{figure}[!ht]\centering
\centering
 \tikzexternalenable
  \tikzsetnextfilename{figshap}
  \centering
  \iftikzX 
   \input{Figures/figshap}  
   \else 
   \includegraphics[width=\textwidth]{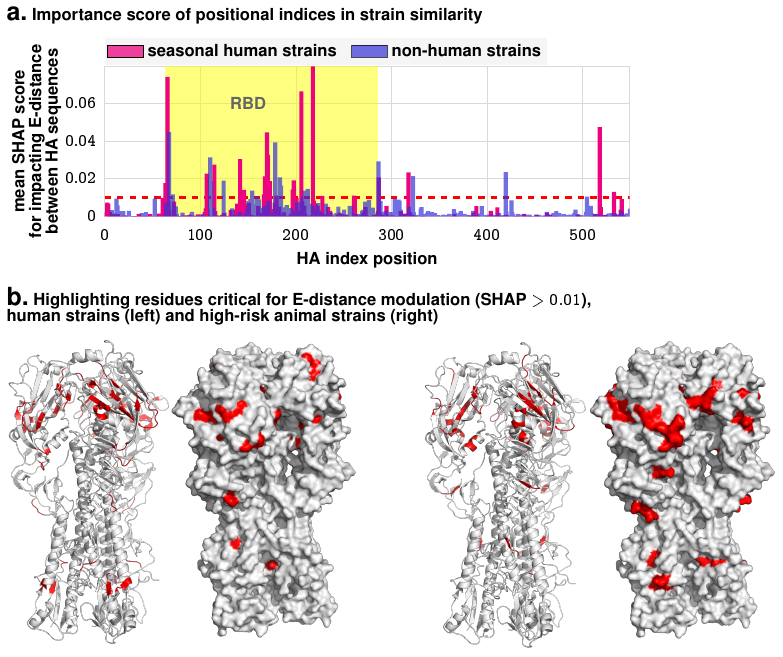}
   \fi
   \vspace{-10pt}

\captionN{\textbf{SHAP analysis of RBD.} Using a SHAP analysis we quantify the relative importance of individual residues in modulating the \qdist between two HA sequences, on average. \textbf{Panel a} The residues within the RBD is clearly having more impact for both seasonal \infl strains and high-risk animal strains (high-risk as determined by IRAT). \textbf{Panel b} Visualizing the residues on the molecular structure of HA (approximate, uses the same atom positions for all renderings (1ruz.pdb)), we find that the high SHAP residues (high importance in modulating \qdist between strains) are largely localized near the globular head domain, and have surface exposure. We also note that the seasonal and animal strains have similar but distinct high-value residues. }\label{figshap}
\end{figure}
\else
\refstepcounter{table}\label{figshap}
\fi

One possible approach to mitigating such risk is to identify  animal strains  that do not yet circulate in humans, but is likely to spill-over and quickly achieve human-to-human (HH) transmission capability. While global surveillance efforts collect wild specimens from diverse hosts and geo-locations annually, our  ability to objectively, reliably and scalably  risk-rank individual strains remains limited~\cite{wille2021accurately}, despite some recent progress~\cite{pulliam2009ability,grewelle2020larger,grange2021ranking}.
 
The Center for Disease Control's (CDC) current solution to this problem is the Influenza Risk Assessment Tool (IRAT)~\cite{Influenz24:online}.  Subject matter experts (SME) score strains based on  the number of  human infections, infection and transmission in laboratory animals, receptor binding characteristics, population immunity, genomic analysis, antigenic relatedness, global prevalence,  pathogenesis, and  treatment options, which are averaged to obtain two scores (between 1 and 10) that  estimate 1) the emergence  risk and 2) the potential public health impact on sustained transmission. IRAT scores  are potentially subjective, and  depend on multiple experimental assays, possibly taking  weeks to compile for a single strain. This results in  a scalability bottleneck, particularly with    thousands of strains being sequenced annually.

Here we introduce a pattern recognition algorithm to automatically parse out emergent evolutionary constraints operating on \infl viruses in the wild, to provide a less-heuristic, theory-backed scalable solution to emergence prediction. Our approach is centered around numerically estimating the probability $Pr(x \rightarrow y)$ of a strain $x$  giving rise to  $y$ which is well-adapted to humans. We show that this capability is key to preempting  strains which are expected to be in future circulation, and  %
approximate IRAT scores of non-human strains without  experimental assays or SME scoring.

\subsection*{\enet: Digital Twin of Sequence Evolution in the Wild}
To uncover relevant evolutionary constraints, we analyze  variations (point substitutions and indels) of the  residue  sequences  of key proteins implicated  in cellular entry and exit~\cite{gamblin2010influenza,shao2017evolution}, namely HA and NA respectively. By capturing the emergent cross-dependencies between these sequence variations our model, the \enet (Enet), estimates the  odds of a specific mutation to arise in future, and consequently the probability of a specific strain   evolving into another (Fig.~\ref{figscheme}a).  Our calculations are are based on first inferring the variation of mutational probabilities, and the potential residue replacements which vary along a sequence. The many well-known classical  DNA  substitution models~\cite{posada1998modeltest} or standard phylogeny inference tools which assume a constant species-wise mutational characteristics,  are not applicable here. Similarly, newer algorithms such  as FluLeap~\cite{eng2014predicting}  which identifies host tropism from sequence data, or estimation of species-level risk~\cite{grange2021ranking} do not allow for strain-specific assessment.

The dependencies we uncover are shaped by  a  functional necessity of conserving/augmenting  fitness. Strains must be sufficiently common  to be recorded, implying that the sequences from public databases that we train  with have  high replicative fitness. Lacking kinetic proofreading, \infl integrates  faulty nucleotides   at a relatively high rate ($10^{-3}-10^{-4}$) during  replication~\cite{ahlquist2002rna,chen2006avian}. However, few variations are actually viable, manifesting discernible  dependencies between such mutations. These fitness constraints also vary  over time. The background strain distribution, and selection pressure from the evolution of cytotoxic T lymphocyte  epitopes~\cite{woolthuis2016long,fan2012role,van2016differential,berkhoff2007assessment,van2012evasion} in humans can change quickly. However, with a sufficient number of unique samples to train on for each flu season, the \enet (recomputed for each time-period) is expected to automatically factor in the evolving host immunity, and the current background environment.  

Structurally, an \enet comprises an interdependent collection of  local predictors, each aiming to predict the  residue at a particular index  using as features  the residues   at other  indices  (Fig.~\ref{figscheme}b). Thus,  an \enet comprises almost as many such  position-specific predictors as the length of the sequence. These individual predictors are implemented as conditional inference trees~\cite{Hothorn06unbiasedrecursive}, in which  nodal splits  have  a minimum pre-specified significance in differentiating the  child nodes. Thus, each predictor yields an estimated conditional residue distribution  at each index. The set of residues acting as features in each predictor are automatically identified, \textit{e.g.}, in the fragment of the  H1N1 HA \enet (2020-2021, Fig.~\ref{figscheme}b), the predictor for residue 63 is dependent on   residue  155, and the predictor for  155 is dependent on  223, the predictor for  223 is dependent on  14, and the residue at  14 is again dependent on  63, revealing a cyclic dependency. The complete \enet harbors a vast number of such  relationships, wherein each internal node of a tree may be  ``expanded'' to its own tree. Owing to this recursive expansion,  a complete \enet substantially captures the complexity of the rules guiding evolutionary change as evidenced by our out-of-sample validation.

Before we apply our approach to quantifying risk from animal strains, we demonstrate identification of vaccine
strains for the seasonal flu epidemics, with only sequence information, with performance at par with approaches
using deep mutational scanning (DMS) assays~\cite{huddleston2020integrating}.

For that purpose,  we used  H1N1 and H3N2 HA sequences from \infl strains in the public NCBI and GISAID databases recorded between 2001-2023 ($220,151$ in total, \SUPPLEMENTARY Table~S-\ref{tabseq}). We  construct \enet{s} separately for both subtypes, yielding $84$ models for predicting seasonal dominance. 
The \enet predictors induce an intrinsic distance metric (\qdist) between strains (Eq.~\eqref{q-distance} in \METHODS), defined as the square-root of the Jensen-Shannon (JS) divergence~\cite{cover} of the conditional residue distributions, averaged over the sequence. Unlike the classical approach of measuring the number of edits between sequences, the \qdist is informed by the \enet-inferred  dependencies, and adapts to the specific subtype, allele frequencies, and environmental variations.  We show  (Theorem~\ref{thmbnd} in \METHODS)   that the \qdist  scales as  the log-likelihood of spontaneous change $i.e.$ $\log Pr(x \rightarrow y )$.

Phenotypic differences are often driven by specific loci, and antigenic differences in influenza are typically strongly affected by changes within the receptor binding domain (RBD)~\cite{caton1982antigenic, koelle2006epochal}. With the complex cross-talk between loci uncovered by the \enet, evaluating the impact of individual loci in modulating phenotypic characteristics becomes  non-trivial.   We investigated the relative importance of specific  loci in modulating the \qdist  between strains,  and consequently  phenotypic changes, via  a standard SHAP analysis~\cite{lou2012intelligible,lundberg2017unified} (Fig.~\ref{figshap}). Such analyses, rooted in cooperative game theory,  has been successfully used in the literature to uncover average impact of individual inputs in complex machine learning models~\cite{shapley1953value}.   We first estimate the impact of each  residue in determining the \qdist between a given strain, and a ``consensus'' strain, and then calculate the  average impact of the individual residues  over a representative set of strains. Our results indicate  that in both seasonal analysis and emergence risk of animal strains, we have 1) a complex distribution of important residues  along the HA sequence, 2) residues within the RBD are clearly playing dominant roles  (Fig.~\ref{figshap}a), 3) relatively few residues among the 566 possible locations are driving most of the effect (Fig.~\ref{figshap}b).

As a baseline comparison without using SHAP analysis, we also estimated  the correlation of \qdist between animal  and  common  human strains (\EXTENDED Table~\ref{baselinetab}), which shows that edit distance  on the RBD is strongly correlated to \qdist, both for the CDC-analyzed strains and animal strains collected within 2020-2022 period, but were not informative or trended incorrectly when we looked at selected individual residues of known functional relevance, suggesting complicated impact sharing by individual loci.

Note that despite general correlation between \qdist and edit-distance, the \qdist between fixed strains can change if only the background environment changes (\SUPPLEMENTARY Table~S-\ref{tabex}, S-\ref{tabcor}) with the strains remaining unchanged. Thus, we re-learn models using recent historical circulation, and only predict for the near future. In in-silico experiments, we find that while random mutations to genomic sequences produce rapidly diverging sets, \enet-constrained replacements produce verifiably meaningful sequences (See ``In-silico Corroboration of Emergenet’s Capability To Capture Biologically Meaningful Structure" in \METHODS and \SUPPLEMENTARY Fig.~S-\ref{figsoa}).

We then investigated how sampling of the observed strains, which naturally arises by how strains are collected,  affects model training (\SUPPLEMENTARY Fig.~S-\ref{randomsampling}). Fixing a particular time-period (2018-19 northern hemisphere shown), we trained 100 \enet models with different random samples of 3000 strains, picked two random strains from two of the largest clusters, and computed their \qdist under each of the 100 models. We see that the \qdist{s} are relatively stable, with distances between two strains in the same cluster  substantially smaller  than the inter-cluster distances, demonstrating model stability (and that cluster identity is largely preserved under such variations). Later we also investigate that the uncertainty from the  limited sampling of animal strains leads to small standard error of mean (SEM) in the estimated emergence scores.

Finally, we note that while   learning a precise model of jump from one strain to another is  infeasible from the available data, our approach can feasibly uncover a subset of the transition rules  that are best supported by statistical evidence, allowing us to probabilistically constrain the set of future possibilities. Viewing each locus in the  sequence individually as a target variable, and using all other loci as features to obtain a classification model, results in $\leqq 566$ conditional inference trees (for HA), as described above. Using conditional inference trees for the individual classification models implies that  only statistically significant splits are allowed in the inferred decision trees. When no such splits are found, we do not return a classification model for that index. Finally, this forest of trees collectively comprise a \enet model. Each such model  induces a distance metric between strains, referred to as the \qdist metric, which is  shown (Theorem~\ref{thmbnd} in \METHODS) to scale with the log-likelihood of the  jump probability. Estimating the numerical odds of a  jump $ Pr(x \rightarrow y)$ (Fig.~\ref{figscheme}) allows us to derive quantitative solutions to  the problem of forecasting  vaccine strain(s), and that of estimating the  emergence potential of an animal strain  (Fig.~\ref{figscheme}c-d, Eq.~\eqref{dompred0}-\eqref{eqrho}).

\subsection*{Predicting Seasonal Vaccine Strains}
A vaccine strain for an upcoming  season may be identified as one which maximizes the joint probability of simultaneously arising from each (or most)  of the currently circulating strains (Fig.~\ref{figscheme}c). This might not be a correct mechanistic picture per se (in which clades and their relative fitness might be more important~\cite{huddleston2020integrating}), but can still serve as an useful conceptualization if we are interested in estimating  the vaccine strain, and not  the full strain distribution. The vaccine strain identified in this manner does not deterministically specify a frequency-dominant strain, but a strain satisfying this criterion  has  high odds of being close to the distribution of circulating strains that emerge in the targeted future time-frame. Simplifying, we ultimately obtain  the following search criteria (See derivation in ``Predicting  Seasonal Vaccine Strains" in \METHODS) to identify  historical strain(s) that are  expected to be maximally representative of the future  distribution of circulation:
\calign{\label{dompred0}
&\dst = \argmin_{y \in \cup_{\tau \leqq t} H^\tau}  \left ( \sum_{x\in H^t}  \theta_{m_t}(x,y) - \abs{H^t}A \ln \mem{y}  \right )
}%
where $\dst$ is a predicted vaccine strain  at time $t+\delta$, $H^t$ is the set of currently circulating human strains at time $t$  observed over the past year, $\theta_{m_t}$ is the \qdist induced  by  \enet $m_t$ inferred with sequences in $H^t$, $\mem{y}$ is the persistence probability (See Def.~\ref{defmem} in \METHODS) of strain $y$ in the inferred \enet which can be shown to be proportional to its replicative fitness (See ``Relating Persistence Probability to Replicative Fitness" in \METHODS), and $A$ is a constant dependent on the sequence length and significance threshold. Thus, we have a straightforward interpretation for Eq.~\eqref{dompred0}: the first term requires the solution to be as close as possible to  the centroid of the current strain distribution (in the \qdist metric), while  the second term penalizes for low replicative fitness (See ``Predicting Seasonal Vaccine Strains'' in \METHODS). Eq.~\eqref{dompred0} does not attempt to replicate the future strain distribution, and thus is not claimed to be a mechanistically correct picture of the evolutionary process; but  a computational criteria for identifying a vaccine strain candidate.

For validation of this scheme, we define the distance to the future circulation at  time $t+\delta$ for sequence $x$ as the average Hamming  distance, $h$, between $x$ and each sequence $y$ in the future population $H^{t+\delta}$, weighted by its frequency $f_{t+\delta}(y)$, following recent evaluation approaches in the literature~\cite{huddleston2020integrating}. We predict one year into the future, $\delta=1$, and thus:
\calign{\label{dist}
&d_{\delta}(x) = \sum_{y \in H^{t+\delta}} f_{t+\delta}(y)h(x, y)
}%
In Eq.~\eqref{dompred0}, we are making the assumption that $H^t$ is a single cluster of strains. In practice, this might not be the case; we often see several distinct clusters arise in each season, which are functionally similar to the notion of clades~\cite{huddleston2020integrating}. Identifying clusters  on the \qdist matrix $H_1^t, H_2^t,\dots H_n^t$ with $\bigcup_{i=1}^nH_i^t = H^t$, we  compute cluster-specific predictions $x_{i*}^{t+\delta}$ using Eq.~\eqref{dompred0}, and obtain the unique vaccine recommendation as the weighted centroid of these cluster-specific recommendations, where the weights are the cluster sizes, $i.e.$:
\cgather{
\dst = \argmin_{x \in \{x_{i*}^{t+\delta}\}} \sum_{i=1}^n h(x_{i*}^{t+\delta},x)|H_i^t|
  }%
Using instantaneous clusters instead of  standard clades allows for simplified calculations with no phylogenetic tree construction. This approach,  which while possibly  losing some information related to common ancestors,  actually improves predictive performance. Also, we can make multiple predictions per season, $e.g.$, reporting the recommendations of the two largest clusters, as is shown in Fig.~\ref{figseasonal}g-j, which dramatically improves predictive performance as expected.

Prediction of the seasonal vaccine strain as  a close match to a historical strain  allows out-of-sample validation against past World Health Organization (WHO) recommendations for the flu shot, which  is  reformulated February and September of each year for the northern and southern hemispheres, respectively, based on a  cocktail of historical strains determined via global surveillance~\cite{agor2018models}. For each year of the past two decades, we constructed a separate \enet for the southern and northern flu seasons using HA strains available from the previous season. For seasons with $>3000$ strains available, we randomly sampled $3000$ strains which provide an accurate representation of the population. We show in \SUPPLEMENTARY Fig.~S-\ref{randomsampling} that sampling has little effect on our results.

Our \enet-informed forecasts outperform  WHO/CDC recommended flu vaccine compositions consistently over the past two decades, for both H1N1 and H3N2 subtypes, across both hemispheres (which have distinct recommendations~\cite{boni2008vaccination}). Over the last two decades, the  \enet H1N1 recommendations $\dst$ outperform WHO's vaccine recommendations by, on average, $3.73$ amino acids (AAs) in the north and $4.19$ AAs in the south. For H3N2, this out-performance is $3.32$ AAs in the north and $2.01$ AAs in the south. Detailed predictions are tabulated in \EXTENDED Tables~\ref{tabrec0} through~\ref{enetimprovement}, including averages over the last decade. Visually, Fig.~\ref{figseasonal} illustrates the relative \enet improvement.

We also compare our performance with recently reported vaccine-strain prediction results~\cite{huddleston2020integrating} at identical time points with better or comparable performance. Huddleston $\etal$'s model  using mutational load and local branching index (LBI) outperforms the WHO vaccine by $1.93$ AAs on average, while  over the same time points, we outperform the WHO vaccine by $2.05$ AAs. Our ability to obtain comparable performance with only sequence features  demonstrates the added predictive information that the \enet is able to extract. The full comparison can be found in \EXTENDED table~\ref{huddlestoncomparison}. While using  Hemagglutinnin inhibition (HI) antigenic novelty data Huddleston $\etal$ outperforms the WHO vaccine by $2.33$ AAs on average, this added information requires experimental evaluation of individual strains precluding scale-up and presenting challenges in generalizing to other viruses.

To evaluate  a strategy of two-strain predictions, we  evaluated two-cluster predictions $x_{1*}^{t+\delta}$ and $x_{2*}^{t+\delta}$ by taking the average minimum Hamming distance between $x_{1*}^{t+\delta}$ and $x_{2*}^{t+\delta}$ and the sequence population, we see a performance improvement. Over the last two decades, the two-cluster \enet H1N1 recommendations $\dst$ outperform WHO's vaccine recommendations by $4.69$ AAs in the north and $5.52$ AAs in the south. For H3N2, this out-performance is $4.64$ AAs in the north and $3.43$ AAs in the south.

To confirm the robustness of our method, we also ask how our recommendations perform against selecting random strains. We select a random strains from each season to be our random ``predictions'', choosing from strains circulating in the past one year leading up to vaccine selection, and compare them using our distance to the future metric (Eq.~\eqref{dist}). We repeat this for 20 random strains per season and report the mean.
Over the last two decades, the \enet H1N1 recommendations $\dst$ outperform the random recommendations by $3.98$ AAs (29.69\%) in the north and $7.31$ AAs (46.03\%) in the south. For H3N2, this out-performance is $2.94$ AAs (37.37\%) in the north and $3.23$ AAs (35.54\%) in the south (\EXTENDED Table~\ref{enetimprovementrandom}). Over one decade the \enet out-performance is higher for H1N1 ($>70\%$), while for H3N2 it remains comparable ($>30\%$).

Comparing the \enet inferred strain (ENT) against the one recommended by the WHO, we find that the residues that only the  \enet recommendation matches correctly with a common strain denoted as DOM (near dominant or strain observed with maximal frequency), while the WHO recommendation fails,  are largely localized within the RBD, with $>57\%$ occurring within  the RBD on average (\EXTENDED Fig.~\ref{figseq}a), and when the WHO strain deviates from  the ENT/DOM   matched residue, the ``correct'' residue is often replaced  in the WHO recommendation with one that has very different side chain, hydropathy  and/or chemical properties (\EXTENDED Fig. \ref{figseq}b-f), suggesting deviations in recognition characteristics~\cite{carugo2001normalized,righetto2014comparative}. Combined with the fact that we find circulating strains are almost always within a few edits of the DOM (\SUPPLEMENTARY Fig.~S-\ref{figdom}), these observations suggest that  hosts vaccinated with the ENT recommendation is can have season-specific antibodies that recognize a larger cross-section of the circulating strains.

\subsection*{Estimating Emergence Risk of Non-human Strains}
Our primary claim is the ability to estimate the emergence potential of  animal strains that do not yet circulate in humans.  While modeling approaches for seasonal vaccine strain choice has been explored before, the problem of modeling  emergence risk for animal strains is largely open.   We propose a  time-dependent  Emergence Potential score \erisk $\rho_t(x)$ for an animal strain $x$ in Eq.~\eqref{eqrho}. Our risk measure is defined as the negative logarithm of the geometric mean of minimum \qdist to recent  human HA and NA sequences, where the \qdist correspond to \enet{s} constructed in a subtype specific manner from human sequences.

Before  going into details,  we need some more notation, which we keep consistent as much as possible with Eq.~\eqref{dompred0}. For any strain $y$, we denote the HA and NA sequences as $y^{\HA},y^{\NA}$ respectively, and for a \qdist function induced for  \enet $r$ is denoted as $\theta_r(\cdot,\cdot)$. To validate our score against CDC-estimated IRAT emergence scores using \erisk, we constructed \enet models for HA and NA sequences using subtype-specific human strains, collected within the  year prior to the assessment date, \textit{e.g.},  the  assessment date for A/swine/Shandong/1207/2016 is 06/2020, and  we  use all human strains collected  between 01/07/2019 and 06/30/2020 for the \enet inference. More specifically, for each HA subtype for which sufficient number of sequences are collected in the span of one year leading up to that point  ($e.g.$ H1Nx, H3Nx, etc.), we construct a separate  \enet model. The set of all \enet models so inferred for HA strains are denoted as  $\allha$, and  all models for NA sequences are denoted as  $\allna$.  With these notations, we define:
\cgather{\label{eqrho}
\rho_t(x) \triangleq -\log \min_{\begin{subarray}{c}y,z \in H^t \\ r \in \allha,s \in \allna\end{subarray}} \sqrt{\theta_r\left (x^{\HA},y^{\HA}\right )  \theta_{s}\left (x^{\NA},z^{\NA}\right )}, \textrm{ where } H^t \textrm{ defined as before in Eq.~\eqref{dompred0}}
}%
We claim that larger the  value of $\rho_t(x)$  larger the  probability of jumping to a  well-adapted human strain (See ``Measure of Emergence Potential" in \METHODS). Strains observed to circulate in humans ( $i.e.$ those in $H^t$) are obviously well-adapted to humans, since we are only likely to sample common strains. Using Theorem~\ref{thmbnd} (See \METHODS):
\cgather{
  \left \lvert \ln  \frac{Pr(x \rightarrow x_h)}{Pr(x_h \rightarrow x_h)} \right \rvert \leqq \beta \theta(x,x_h),  
}%
where $x, x_h$ are respectively protein sequences (HA or NA) corresponding to an animal and a well-adapted human strain, and $\beta$ is a constant. Importantly, when $\theta(x,x_h)$ is close to zero and $Pr(x_h \rightarrow x_h)$ is large, it follows that  $Pr(x \rightarrow x_h)$ is large as well, justifying the claim (See ``Measure of Emergence Potential" in \METHODS). Thus, in both vaccine strain prediction and emergence risk estimation, our approach is predicated on our inferred models to correctly relate \qdist $\theta(x,y)$  to the probability $Pr(x \rightarrow y)$, and hence good performance on the first problem gives confidence about the results in the second. The geometric mean ensures equal weighting  of  both HA and NA (this is the simplest assumption), and the   logarithm in Eq.~\eqref{eqrho} ensures that scale of our measure matches with that of CDC-estimated IRAT scores, and this comparison provides quantitative validation of our framework.
 
  Our approach ensures that, we can  obtain a \erisk score for animal sequences that do not have any human sequences of the same subtype. For example, $\allha$ comprises \enet models for H1,H3, H5, H7, and $\allna$ comprises models for N1,N2,N9 (for some years H5 \enet is not constructed due to not enough sequences found in the previous one year).
Now suppose we observe a (yet non existent)  animal strain H14N12. We will carry out the minimizations stated in Eq.~\eqref{eqrho} with the HA sequence of this strain considering all human HA sequences observed in the past year using each of the HA \enet{s} in $\allha$, and with the NA sequence of this strain again against all human NA sequences of strains observed within the past year using each of the NA \enet{s} in $\allna$: it does not matter that we do not have a \enet for H14N12, or that this strain is yet to be observed in humans.

 Considering IRAT emergence scores of $23$ strains published by the CDC, we find strong out-of-sample support  (total least squares correlation: $0.721$, p-value: $0.00010$, Fig.~\ref{figirat}) for this claim. We also fit total least squares models against IRAT's ``Mean Low Acceptable Emergence", a lower bound on the estimate of the emergence score (correlation: $0.685$, p-value: $0.00061$) and ``Mean High Acceptable Emergence", an upper bound (correlation: $0.733$, p-value: $0.00016$). We take approximately $30$ seconds (on a 28-core Intel processor) as opposed to potentially weeks taken by IRAT experimental assays and SME evaluation. Full results can be found in \EXTENDED Table~\ref{irattab}.

Note that the nature of risk implies that a
``high-risk" strain is not guaranteed to emerge, and neither IRAT nor our assessment should be dismissed on the basis that a high-risk strain did not cause a pandemic. However, to interpret our high correlative association with  the CDC-computed IRAT scores as validation of our claim, there needs to be some evidence that the IRAT scores themselves are good predictors of emergence. In addition to  the CDC's expert-consensus driven methodology being the best alternative to direct gold-standard checks (which are impossible for ethical and legal reasons), we also have evidence that some  high-risk strains did corroborate with emergence events (\SUPPLEMENTARY Table~S-\ref{iratoutbreak}), including  A/swine/Shandong/1207/2016, A/Ohio/13/2017, A/Hong Kong/125/2017, and A/California/62/2018; the last one was indeed  isolated in human hosts~\cite{cdph_immunization_2017}.

Since biosurveillance is expected to be only sparsely sampling the wild reservoirs, we investigated how limited and incomplete sampling of the animal strain distribution impacts our estimated emergence scores (\EXTENDED Table~\ref{irattabsample}), to quantify uncertainty in the estimates. Instead of using all available human strains, we chose a random sample of 75\% of the available human strains of the appropriate sub-types to construct the \enet models, and used the same sample to compute the emergence scores. As shown in the \EXTENDED Table~\ref{irattabsample}, the SEM of the score estimates are at least an  order of magnitude smaller than the estimates themselves, suggesting that the scores stabilize for a reasonable sampling of the wild distributions.

A second demonstration of our approach is obtained by computing the \erisk scores for ``variants'' (animal \infl strains isolated in human hosts), which might be  expected to  pose high risk due to their successful replication in human cells, even if the possibility of HH transmission is not yet observed or guaranteed.  We analyze 15 such variants isolated post-2015 listed as antigenic prototypes for canidate vaccine viruses by the CDC~\cite{who_influenza_2023} (\SUPPLEMENTARY Table~S-\ref{tabvariant}). Our \erisk estimates puts approximately 30\% of these variants in the high risk category (the $5\%$ tail of the distribution of emergence scores amongst all animal strains collected post 2020, \SUPPLEMENTARY Fig.~S-\ref{animalfrequency}-a). More specifically (\SUPPLEMENTARY Table~S-\ref{tabvariantpc}), we note that $26.7\%$ of the variants have emergence score $>7.0$ compared to $6.4\%$ of the strains in the animal reservoirs, suggesting, as expected, the variants pose much higher risk. Indeed $4$ out of the 15 variants have scores greater than $7.0$. Additionally, we see in \SUPPLEMENTARY Fig.~S-\ref{figvariantrisk}, that the risk posed by the high-risk variants can be seen to be generally increasing over time, up to the point of collection in human hosts.

Finally, we estimate the IRAT scores of all $6,354$ wild \infl animal viruses collected globally  between January 2020 and January 2024. We trained HA and NA \enet models for each subtype using recent sequence data, computed \erisk using the same method as described above, and identified the ones posing maximal risk (Fig.~\ref{figirat}c). $413$ strains turn out to have a predicted emergence score $>7$ (top 15$^{th}$ percentile, and the $95\%$ significance threshold for a normal distribution fit, \SUPPLEMENTARY Fig.~S-\ref{animalfrequency}-a)), and $190$ strains have score $>7.3$ (top $10$$^{th}$ percentile). However, many of these strains are highly similar, differing by only a few edits. To identify the sufficiently distinct risky strains, we constructed the standard phylogeny from  HA sequences with score $>6.25$ (Fig.~\ref{figphylo}), and collapsed all leaves within $15$ edits, showing only the most risky strain within a collapsed group. This leaves $80$ strains (Fig.~\ref{figphylo}), with $12$ having emergence risk $>7$, and $10$ with  risk above $7.3$ (top 10 rows in \EXTENDED Table~\ref{highrisktab}, see also the distribution of the scores of this set in \SUPPLEMENTARY Fig.~S-\ref{animalfrequency}-b).
Subtypes of the risky strains are H7N9(1), H1N2(2), H3N2(2), H1N1(2) and H5N1(3) with  the top 5 most risky strains as follows: A/Hong Kong/125/2017(H7N9), A/California/62/2018(H1N2), A/swine/Missouri/A02524408/2023(H3N2), A/swine/Tver\_region/RII-81-1S/2023(H1N1), A/swine/ North\_Carolina/ A02479173/ 2020(H1N1),  A/Common\_Buzzard /England/125155/2023(H5N1).

To visually check if high risk strains are edit-wise closer to human strains, we compare the HA sequences along with two frequency dominant human strains in 2021-2022 season (\EXTENDED Fig.~\ref{figriskyseq},\ref{figriskyseqlr},\ref{figriskyseq4} and \ref{figriskyseq4lr}). High-risk strains are somewhat closer in number of edits to these high frequency human strains, but both cases (high and low risk) show substantial residue replacements, both in and out of the receptor binding domain (RBD), to make reliable assessments unreliable  without the use of the \erisk score. %

Broadly the risk assessments are not surprising: swines are known to be efficient mixing vessels~\cite{ma2009pig,nelson2018origins,reid2003origin,Baumann}, and hence unsurprisingly host a large fraction of the risky strains ($>63\%$ over $7.0$). A large diversity of avian hosts for the strains ($>30\%$ over $7.0$, $>70\%$ of the reduced set with score $>6.25$) explains our finding of large number of high risk H5N1 strains (avian flu) among the hig-risk set, which correlates with current increasing concerns about H5N1~\cite{CDC_H5N1_Human_Case}, with at least two reported human cases in May 2024. Thus,  qualitatively our results  are well aligned with the current expectations; nevertheless the ability to quantitatively rank  specific strains which pose maximal risk is a crucial new capability enabling proactive pandemic mitigation efforts.

\subsection*{Conclusion}
While numerous tools exist for ad hoc quantification of genomic similarity~\cite{posada1998modeltest,goldberger2005genomic,huelsenbeck1997phylogeny,neher2014predicting,VanderMeer2010,Smith2009}, higher similarity between strains in  these frameworks is not sufficient to imply a high likelihood of a jump. To the best of our knowledge, the \enet algorithm is  the first of its kind to learn an appropriate biologically meaningful comparison metric from data, without assuming any model of DNA or amino acid substitution, or a genealogical tree a priori. While the effect of the environment and selection cannot be inferred from a single sequence, an entire database of observed strains, processed through the right lens, can parse out useful predictive models of these complex interactions. Our results are  aligned with recent studies demonstrating effective  predictability of  future mutations  for different organisms~\cite{mollentze2021identifying,maher2021predicting}.

The \qdist calculation is currently limited to analogous sequences (such as point variations of the same protein from different viral subtypes), and the \enet inference requires a  sufficient diversity of observed strains. A multi-variate regression analysis indicates  that the most important factor for our approach to succeed is  the diversity of the sequence dataset (See ``Multivariate Regression to Understand Data Characteristics Necessary For Emergenet Modeling" in \METHODS and \SUPPLEMENTARY  Table~S-\ref{tabreg}), which would exclude applicability to completely novel pathogens with no related human variants, and ones that evolve very slowly. Nevertheless, the tools reported here can potentially improve effectiveness of the annual flu shot, and, more importantly,   allow for the development of preemptive vaccines to  target risky animal strains  before the first human infection in the next pandemic.
Apart from outlining new precision public health measures to avert pandemics, such strategies might also help to non-controversially counter the impact of vaccine hesitancy which has interfered with optimal pandemic response in recent times.

\subsection*{Data Source}
We use two public sequence databases: 1) National Center for Biotechnology Information (NCBI) virus~\cite{hatcher2017virus} and 2) GISAID~\cite{bogner2006global} databases. The former is a community portal for viral sequence data, aiming to increase the usability of data archived in various NCBI repositories. GISAID has a  more restricted user agreement, and use of GISAID data in an analysis requires acknowledgment of the contributions of both the submitting and the originating laboratories (Corresponding acknowledgment tables are included as supplementary information). We collected a total of $463,266$ sequences in our analysis (see \SUPPLEMENTARY Table~S-\ref{tabseq}).

\subsection*{Data and Software Sharing} 
Working open-source software (requiring Python 3.x) is publicly available at \url{https://pypi.org/project/emergenet/}. All inferred \enet models inferred is  available at \url{https://doi.org/10.5281/zenodo.7387861}.

\bibliographystyle{naturemag}
\bibliography{allbib}

\section*{Acknowledgments}
This work has been partly funded by the PREEMPT program from Defense Advanced Research Projects Agency (D19AC00004/N/A), and intramural grants from the University of Chicago, Biological Science Division.

\clearpage   

\vspace*{0.1cm}
{\LARGE\textbf{Supplementary Materials}}

\section*{\METHODS}
\allowdisplaybreaks{
We briefly describe the proposed computational framework.

\subsection*{\enet Framework}
We do not assume that the mutational  variations at the individual indices of a genomic sequence are independent (See Fig~\ref{figscheme}a). Irrespective of whether mutations are truly random~\cite{hernandez2018algorithmically}, since only certain combinations of individual mutations are viable, individual mutations across a genomic sequence replicating in the wild  appear  constrained, which is what is explicitly  modeled in our approach.

Consider a set of random variables $X=\{X_i\}$, with $i \in \{1, \cdots, N\}$, each taking value from the respective sets $\Sigma_i$. Here each $X_i$ is the random variable modeling the ``outcome'' $i.e.$ the AA residue at the $i^{th}$ index of the protein sequence. A sample $x \in \prod_1^N \Sigma_i$ is an ordered $N$-tuple, which is a specific strain in this context,  consisting of a realization of each of the variables $X_i$ with the $i^{th}$ entry $x_i$ being the realization of random variable $X_i$.

We use the notation $x_{-i}$ and $x^{i,\sigma}$ to denote:
\begin{subequations}\cgather{
x_{-i} \triangleq x_1, \cdots, x_{i-1},x_{i+1},\cdots,x_N\\
x^{i,\sigma} \triangleq x_1, \cdots, x_{i-1},\sigma,x_{i+1},\cdots,x_N, \sigma \in \Sigma_i
}\end{subequations} Also, $\Dx(S)$ denotes the set of probability measures on  a set $S$, $e.g.$,  $\D$ is the set of  distributions on  $\Sigma_i$.

We note that $X$ defines a random field~\cite{vanmarcke2010random} over the index set $\{1, \cdots, N\}$. 

\begin{defn}[\enet]
For a random field $X=\{X_i\}$ indexed by $i \in \{1, \cdots, N\}$, the \enet is defined to be the set of predictors $\Phi=\{\qn\}$, $i.e.$, we have:
\cgather{
\qn : \prod_{j \neq i} \Sigma_j \rightarrow \D,
}  where for a sequence $x$, $\Phi_i(x_{-i}) $ estimates the distribution of $X_i$ on the set $\Sigma_i$.
\end{defn}
We use conditional inference trees as models for predictors~\cite{Hothorn06unbiasedrecursive}, although more general models are possible.

\subsection*{Biology-Aware Distance Between Sequences}
The mathematical form of our metric is not arbitrary; JS divergence is a symmetricised version of the more common KL divergence~\cite{cover} between distributions, and among  different possibilities, the \qdist  is the simplest metric such that the likelihood of a spontaneous jump (See Eq.~\eqref{fundeq} in Methods) is provably bounded above and below  by simple exponential functions of the \qdist.

\begin{defn}[\qdist: adaptive biologically meaningful dissimilarity between sequences]\label{defqdistance}
Given two sequences $x,y \in \prod_1^N\Sigma_i$, such that $x,y$ are drawn from the  populations $P,Q$  inducing the \enet $\Phi^P,\Phi^Q$, respectively,  we define a pseudo-metric $\theta(x,y) $, as follows:
\cgather{\label{q-distance}
\theta(x,y) \triangleq \mathbf{E}_i \left (  \J^{\frac{1}{2}} \left (\qn^P(x_{-i}) , \qn^Q(y_{-i})\right ) \right )
} 
where $ \J(\cdot,\cdot)$ is the Jensen-Shannon divergence~\cite{manning1999foundations} and $\mathbf{E}_i$ indicates expectation over the indices.
\end{defn}
The square-root in the definition arises naturally from the bounds we are able to prove, and is dictated by the form of Pinsker's inequality~\cite{cover}, ensuring that   the sum of the length of successive path fragments equates the length of the path.%

\subsection*{Persistence Probability}
We can formulate a computable characteristic for a sequence $x$ that is related to the more popular notion of replicative fitness. Referred to as \textit{persistence probability} of the sequence $x$, we note that the probability $Pr(x \rightarrow x)$ is the probability that a sequence does not ``jump away'', and higher these odds, lower are the odds that there are  incentives to change, $i.e.$, higher the replicative fitness.
\begin{defn}[Persistence probability of a sequence]\label{defmem}
Given a population $P$ inducing the \enet $\Phi^P$ and a sequence $x$, we can estimate the   probability $Pr(x \rightarrow x)$, denoted as the persistence probability of $x$ as:
\cgather{
\mem{x}^P \triangleq Pr(x \rightarrow x) = \prod_{j=1}^N \left ( \Phi^P_j(x_{-j}) \vert_{x_j} \right )
}
\end{defn}
$x_j$ is the $j^{th}$ entry in $x$, and is thus an element in the set $\Sigma_j$. Since we are mostly concerned with the case where $\Sigma_j$ is a finite set, $\Phi^P_j(x_{-j}) \vert_{x_j}$ is the entry in the probability mass function corresponding to the element of $\Sigma_j$ which appears at the  $j^{th}$ index in sequence $x$. 
\subsection*{Relating Persistence Probability to Replicative Fitness}
The concept of replicative fitness, denoted as \( f(x) \), is fundamentally connected to the persistence probability of a strain, \(\mem{x}\), which as per the definition above, is the likelihood that a strain \( x \) reproduces into itself rather than mutating into a different strain. Mathematically, this relationship can be derived by considering the population dynamics of strain \( x \). The change in the population \( N_x(t) \) of strain \( x \) over time can be modeled as:
\[
\frac{dN_x(t)}{dt} = f(x) N_x(t) - \mu_x N_x(t),
\]
where \( f(x) \) is the replicative fitness of strain \( x \) and \( \mu_x \) is the mutation rate to other strains. This equation indicates that the net growth rate of the population is determined by the difference between the replication rate and the mutation rate:
\[
\frac{dN_x(t)}{dt} = (f(x) - \mu_x) N_x(t).
\]
To express the persistence probability \(\mem{x}\) of strain \( x \), we consider the probability that the strain reproduces into itself rather than mutating. This can be formulated as the ratio of the replication rate to the total rate of change, which includes both replication and mutation:
\[
\mem{x} = \frac{f(x)}{f(x) + \mu_x}.
\]
thus that the persistence probability is directly proportional to the replicative fitness and inversely proportional to the mutation rate. Higher replicative fitness \( f(x) \) leads to a higher persistence probability \(\mem{x}\), while higher mutation rates \( \mu_x \) reduce the persistence probability. Therefore, strains with higher fitness are more likely to maintain their genetic identity over time, ensuring their continued presence in the population. This relationship underscores the critical role of replicative fitness in the evolutionary success and stability of strains within a dynamic environment.

\subsection*{Theoretical Probability Bounds}

The \enet framework  allows us to rigorously compute bounds on the probability of a spontaneous change of one strain to another, brought about by chance mutations. While any sequence of mutations is equally likely, the ``fitness'' of the resultant strain, or the probability that it will even result in a viable strain, or not. Thus the necessity of preserving  function  dictates that not all random changes  are viable, and the probability of observing some trajectories through the sequence space  are far greater  than others. The \enet framework allows us to explore this constrained dynamics, as revealed by a sufficiently large set of genomic sequences.

The mathematical intuition  relating  \qdist  to the log-likelihood of spontaneous change  is similar to quantifying the  odds of  a rare biased outcome when we  toss a fair coin.
While for an unbiased coin, the odds of roughly 50\% heads is overwhelmingly likely, large deviations do happen rarely, and it turns out that the probability of such rare deviations can be explicitly quantified with existing statistical theory~\cite{varadhan2010large}.
 Generalizing to non-uniform conditional distributions inferred by the \enet, the likelihood of a spontaneous transition  by random chance may also be similarly bounded.

We show in Theorem~\ref{thmbnd} in the supplementary text that at a significance level $\alpha$, with a sequence length $N$, the probability of spontaneous jump of sequence $x$ from population $P$ to sequence $y$ in population $Q$, $Pr(x \rightarrow y)$, is bounded by:
\cgather{\label{fundeq}
\mem{y}^Q e^{ \frac{\sqrt{8}N^2}{1-\alpha}\theta(x,y)} \geqq Pr(x \rightarrow y) \geqq \mem{y}^Q e^{-\frac{\sqrt{8}N^2}{1-\alpha}\theta(x,y)}}
where $\mem{y}^Q$ is the persistence probability of strain $y$ in the target population, $N$ is the sequence length, and $\alpha$ is the statistical signifacnce level.

\subsection*{Predicting Seasonal Vaccine Strains} 
 
Analyzing the distribution of sequences observed to circulate in the human population at the present time allows us to forecast effective vaccine strain(s) in the next flu season as follows:

Let $\dst$ be a  strain that will be in circulation  in the upcoming flu season at time $t+\delta$,
where $H^t$ is the set of observed strains presently in circulation in the human population (at time $t$). We will assume that the \enet is constructed using the sequences in the set $H^t$, and remains unchanged upto $t+\delta$. Since this set is a function of time, the inferred \enet also changes with time, and the induced \qdist is denoted as $\theta^{[t]}(\cdot,\cdot)$.

Let us estimate  $\dst$  as the strain that has the maximum average probability of arising from the current strain distribution. From the RHS bound established in Theorem~\ref{thmbnd} (See Eq.~\eqref{fundeq} above) in the supplementary text, we have for a strain $x \in H^t$:
\calign{
  &\ln  \frac{Pr(x \rightarrow \dsta)}{\mem{\dsta}} \geqq  -\frac{\sqrt{8}N^2}{1-\alpha}\theta^{[t]}(x,\dsta)\\
\Rightarrow &\sum_{x \in H^t} \ln  \frac{Pr(x \rightarrow \dsta)}{\mem{\dsta}}  
\geqq  \sum_{x \in H^t}-\frac{\sqrt{8}N^2}{1-\alpha}\theta^{[t]}(x,\dsta)\\
\Rightarrow  &\sum_{x\in H^t}  \theta^{[t]}(x,\dsta) - \abs{H^t}A \ln \mem{\dsta} \geqq  A \ln \frac{1}{\prod_{x \in H^t} Pr(x \rightarrow \dsta)} \intertext{where $A =\frac{1-\alpha}{\sqrt{8}N^2} $, where $N$ is the sequence length considered, and $\alpha$ is a fixed significance level. Since \textbf{minimizing the LHS maximizes the lower bound on the probability of the observed strains simultaneously giving rise to $\dsta$},  $\dst$ may be estimated as a solution to the optimization problem:}
\label{dompred1}&\dst = \argmin_{y \in \cup_{\tau \leqq t} H^\tau} \left ( \sum_{x\in H^t}  \theta^{[t]}(x,y) - \abs{H^t}A \ln \mem{y} \right ) \tag{Eq. \eqref{dompred0}}
}%
We can explicitly interpret the two terms in the Eqn.~\ref{dompred0}. The first term clearly minimies the distance from the centroid, measured in the \qdist metric. This alone will suffice in a static population, and has been observed before using the edit distance metric~\cite{huddleston2020integrating}. The second term penalizes the solution if it has lower replicative fitness. This justifies the interpretation that Eqn.~\ref{dompred0} prescribes the optimal solution as one that is close to the centroid of the circulating strain in recent past, while penalizing strains which do not have high replicative fitness. While this explanation probably seems obvious after the fact, the above argument derives it precisely, and highlights the role of the \qnet.

For validation of this scheme, we define the distance to the future circulation at  time $t+\delta$ for sequence $x$ as the average Hamming  distance, $h$, between $x$ and each sequence $y$ in the future population $H^{t+\delta}$, weighted by its frequency $f_{t+\delta}(y)$, following recent evaluation approaches in the literature~\cite{huddleston2020integrating}. We predict one year into the future, $\delta=1$, and thus:
\calign{\label{dist1}
&d_{\delta}(x) = \sum_{y \in H^{t+\delta}} f_{t+\delta}(y)h(x, y) \tag{Eq. ~\eqref{dist}} 
}%
In Eq.~\eqref{dompred0}, we are making the assumption that $H^t$ is a single cluster of strains. In practice, this might not be the case; we often see several distinct clusters arise in each season, which are often aligned with the distinct clades. We can cluster the strains using a clustering method (here we use the MeanShift algorithm~\cite{carreira2015review}) on the \qdist matrix computed between strains in $H^t$ such that we have $n$ disjoint clusters $H_1^t, H_2^t,\dots H_n^t$ with $\bigcup_{i=1}^nH_i^t = H^t$. We then compute clutser-specific predictions $x_{i*}^{t+\delta}$ using Eq.~\eqref{dompred0}, and obtain the unique vaccine recommendation as the weighted centroid of these cluster-specific recommendations, where the weights are the clutser sizes, $i.e.$:
\cgather{
\dst = \argmin_{x \in \{x_{i*}^{t+\delta}\}} \sum_i h(x_{i*}^{t+\delta},x)|H_i^t|
  }%
For two-cluster predictions where we aim to make two strain recommendations instead of one, we  take the predictions $x_{1*}^{t+\delta}$ and $x_{2*}^{t+\delta}$ from the two largest clusters, $H_1^t$ and $H_2^t$.%
Thus, we can make multiple predictions per season, by replacing $H^t$ in Eq.~\eqref{dompred0} with $H_i^t$ for $i = 1, 2, \cdot n$. When we provide recommendations for two vaccine strains,  we take the predictions from the largest two clusters, as is shown in Fig.~\ref{figseasonal}g-j in the main text.

\subsection*{Measure of Emergence Potential}
\def\ast{x_a^t}
\def\hst{x_h^{t+\delta}}

Recall, Eq.~\eqref{eqrho} states our proposed measure for emergence potential for a strain $x$, which is negative logarithm of the geometric mean of minimum \qdist to recent  human HA and NA sequences, where the \qdist correspond to \enet{s} constructed in a subtype specific manner from human sequences. The key rationale can be established by considering only one protein (say HA), and only one subtype of human models, as follows.

Theorem~\ref{thmbnd} can be rewritten as such: for a strain $x$ in animals, and a human circulating strain $x_h$, we have:
\cgather{
\left \lvert \ln  \frac{Pr(x \rightarrow x_h)}{Pr(x_h \rightarrow x_h)} \right \rvert \leqq \beta \theta(x,x_h),  \textrm{ with } \beta = \frac{\sqrt{8}N^2}{1-\alpha}}%
Now assume that the human strain $x_h$ is well-adapted to humans, implying that it has high fitness. A high fitness of any strain $y$  may be interpreted as a high value for $Pr(y \rightarrow y)$, and in the extreme case $Pr(y \rightarrow y) \approx 1$. Now if the RHS of the above relationship is close to zero, we conclude:
\cgather{
\theta(x,x_h) \rightarrow 0 \Rightarrow Pr(x \rightarrow x_h) \rightarrow Pr(x_h \rightarrow x_h)
}%
and
\cgather{
\theta(x,x_h) = 0 \Rightarrow Pr(x \rightarrow x_h) =  Pr(x_h \rightarrow x_h)
}%
This implies that for an animal strain $x$,  a small \qdist to some well-adpated circulating human strain $x_h$ implies a  high jump probability to a  similarly well-adapted human strain. Given a sufficiently large sample of well-adapted human strains, we can now quantify a measure of emergence potential as:
\cgather{
\rho_t(x) \triangleq -\log \min_{y \in H^t} \theta^{[t]}(x,y)
}%
where as before $H^t$ is the set of observed strains presently in circulation in the human population (at time $t$), restricted to observations over the past year. We do not use historical human strains from further back in time to prevent introducing errors from changing host immune characteristics over time. %

A large value of the emergence potential defined above thus can be interpreted as a large  probability of jumping to a currently well-adapted human strain wrt to both HA and NA sequences, due to the geometric mean of \qdist wrt to HA and NA in Eq.~\eqref{eqrho}.

\subsection*{Proof of Probability Bounds}\label{sec:proof}

\begin{thm}[Probability bound]\label{thmbnd}
Given a sequence  $x$ of length $N$ that transitions  to a strain $y\in Q$, we have the following bounds at significance level $\alpha$.
\cgather{
\mem{y}^Q e^{ \frac{\sqrt{8}N^2}{1-\alpha}\theta(x,y)} \geqq Pr(x \rightarrow y) \geqq \mem{y}^Q e^{-\frac{\sqrt{8}N^2}{1-\alpha}\theta(x,y)}
  }%
  where $\mem{y}^Q$ is the persistence probability of strain $y$ in the target population $Q$ (See Def.~\ref{defmem}), and $\theta(x,y)$ is the q-distance between $x,y$ (See Def.~\ref{defqdistance}).
\end{thm}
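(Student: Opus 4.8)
The plan is to compare $Pr(x\rightarrow y)$ with $\mem{y}^Q$ position by position. Both quantities factor over the $N$ sequence indices: $\mem{y}^Q=\prod_{j=1}^N \Phi_j^Q(y_{-j})\vert_{y_j}$ by Def.~\ref{defmem}, and $Pr(x\rightarrow y)$ is the analogous product $\prod_{j=1}^N \Phi_j^P(x_{-j})\vert_{y_j}$ of the source-population conditionals evaluated at the realized target residues. Writing $p_j \triangleq \Phi_j^P(x_{-j})\vert_{y_j}$ and $q_j \triangleq \Phi_j^Q(y_{-j})\vert_{y_j}$, taking logarithms turns the two products into sums, so $\ln \tfrac{Pr(x\rightarrow y)}{\mem{y}^Q} = \sum_{j=1}^N (\ln p_j - \ln q_j)$, and by the triangle inequality it suffices to bound $\sum_j \lvert \ln p_j - \ln q_j\rvert$ above by $\tfrac{\sqrt 8\, N^2}{1-\alpha}\theta(x,y)$; exponentiating the resulting two-sided inequality $\lvert \ln \tfrac{Pr(x\rightarrow y)}{\mem{y}^Q}\rvert \le \tfrac{\sqrt 8\, N^2}{1-\alpha}\theta(x,y)$ then yields both claimed bounds simultaneously.

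For a single index I would chain three estimates. First, the mean value theorem (equivalently convexity of $-\ln$) gives $\lvert \ln p_j - \ln q_j\rvert \le \lvert p_j - q_j\rvert / \min(p_j,q_j)$. Second, since $p_j,q_j$ are single coordinates of the distributions $\Phi_j^P(x_{-j})$ and $\Phi_j^Q(y_{-j})$, we have $\lvert p_j - q_j\rvert \le \lVert \Phi_j^P(x_{-j}) - \Phi_j^Q(y_{-j})\rVert_{\mathrm{TV}}$, and a Pinsker-type bound obtained by applying Pinsker's inequality to each Kullback–Leibler term in $\J = \tfrac12 D_{\mathrm{KL}}(P\Vert M) + \tfrac12 D_{\mathrm{KL}}(Q\Vert M)$ with $M=(P+Q)/2$ gives $\lVert P - Q\rVert_{\mathrm{TV}} \le \sqrt{2\,\J(P,Q)}$; hence $\lvert p_j - q_j\rvert \le \sqrt{2\,\J_j}$ where $\J_j \triangleq \J(\Phi_j^P(x_{-j}),\Phi_j^Q(y_{-j}))$. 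Third, I would invoke the structural property of the conditional inference trees grown at significance level $\alpha$: every realized residue is assigned conditional mass at least $\tfrac{1-\alpha}{2N}$, so $\min(p_j,q_j)\ge \tfrac{1-\alpha}{2N}$. Combining the three, $\lvert \ln p_j - \ln q_j\rvert \le \tfrac{2N}{1-\alpha}\sqrt{2\,\J_j} = \tfrac{\sqrt 8\, N}{1-\alpha}\sqrt{\J_j}$. Summing over $j$ and using $\theta(x,y) = \mathbf{E}_i \sqrt{\J_i} = \tfrac1N\sum_j \sqrt{\J_j}$ from Def.~\ref{defqdistance} gives $\sum_j \lvert \ln p_j - \ln q_j\rvert \le \tfrac{\sqrt 8\, N}{1-\alpha}\sum_j \sqrt{\J_j} = \tfrac{\sqrt 8\, N^2}{1-\alpha}\theta(x,y)$, which is exactly what is needed before exponentiating.

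The \textbf{main obstacle} is the only step that is not a routine inequality: the uniform lower bound $\min_j(p_j,q_j)\ge \tfrac{1-\alpha}{2N}$. The chain above (triangle inequality, MVT bound on $\ln$, the TV-versus-JS inequality) is standard; the genuine content lies in arguing that the minimum-split-significance threshold $\alpha$ used when inducing the trees, together with the sequence length $N$ and the additive-smoothing convention, prevents any leaf distribution from placing vanishingly small mass on a realized residue. I would isolate this as a separate lemma about the tree-induction procedure and handle apart the degenerate case where $y_j$ has exact zero mass under $\Phi_j^P(x_{-j})$ or $\Phi_j^Q(y_{-j})$, which forces $Pr(x\rightarrow y)=0$ and is excluded once smoothing is imposed. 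A secondary point worth pinning down cleanly is the precise definition of $Pr(x\rightarrow y)$ as the factorized product of position-wise source conditionals, since the entire argument rests on that factorization in the first step.
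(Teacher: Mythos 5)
Your proposal is correct and follows essentially the same route as the paper's proof: factorize $Pr(x\rightarrow y)$ and $\mem{y}^Q$ over indices, bound each per-index log-ratio by the coordinate difference divided by a probability floor, control that difference via the Pinsker-type relation between Jensen--Shannon divergence and total variation, sum to get $\tfrac{\sqrt{8}N^2}{1-\alpha}\theta(x,y)$, and exponentiate. Your floor $\min(p_j,q_j)\geqq\tfrac{1-\alpha}{2N}$ plays exactly the role of the paper's $a_0\approx(1-\alpha)/N$ (the factor-of-two difference cancels against your tighter total-variation bound), and your flagging of that floor as the step needing a separate justification correctly identifies the same heuristic significance-level argument the paper itself relies on.
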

\begin{proof}
Using Sanov's theorem~\cite{cover} on large deviations, we conclude that the probability of spontaneous jump from strain $x\in P$ to strain $y\in Q$, with the possibility $P \neq Q$, is given by:
\cgather{\label{eq29}
  Pr(x\rightarrow y) =\prod_{i=1}^N \left ( \Phi^P_i(x_{-i}) \vert_{y_i} \right )
}
Writing the factors on the right hand side as:
\cgather{
 \Phi^P_i(x_{-i}) \vert_{y_i} =  \Phi^Q_i(y_{-i}) \vert_{y_i} \left (  \frac{\Phi^P_i(x_{-i}) \vert_{y_i}}{\Phi^Q_i(y_{-i}) \vert_{y_i}}  \right )
}%
we note that $\Phi^P_i(x_{-i})$, $\Phi^Q_i(y_{-i})$ are distributions on the same index $i$, and hence:
  \cgather{
\vert  \Phi^P_i(x_{-i})_{y_i} - \Phi^Q_i(y_{-i})_{y_i}\vert \leqq \sum_{y_i \in \Sigma_i} \vert  \Phi^P_i(x_{-i})_{y_i} - \Phi^Q_i(y_{-i})_{y_i}\vert 
}%
Using a standard refinement of Pinsker's inequality~\cite{fedotov2003refinements}, and the relationship of Jensen-Shannon divergence with  total variation, we get:
\cgather{
  \theta_i \geqq \frac{1}{8} \vert  \Phi^P_i(x_{-i})_{y_i} - \Phi^Q_i(y_{-i})_{y_i}\vert^2
\Rightarrow \left   \lvert  1  - \frac{\Phi^Q_i(y_{-i})_{y_i}}{\Phi^P_i(x_{-i})_{y_i}} \right \rvert \leqq \frac{1}{a_0}\sqrt{8 \theta_i}
}%
where $a_0$ is the smallest non-zero probability value of generating the entry at any index. We will see that this parameter is related to statistical significance of our bounds. First, we can formulate a lower bound as follows:
\cgather{\label{eqLB}
 \log \left  ( \prod_{i=1}^N   \frac{\Phi^P_i(x_{-i}) \vert_{y_i}}{\Phi^Q_i(y_{-i}) \vert_{y_i}}  \right )
  = \sum_i \log  \left  (  \frac{\Phi^P_i(x_{-i}) \vert_{y_i}}{\Phi^Q_i(y_{-i}) \vert_{y_i}}  \right )
\geqq \sum_i \left  ( 1- \frac{\Phi^Q_i(y_{-i})_{y_i}}{\Phi^P_i(x_{-i})_{y_i}} \right ) \geqq  \frac{\sqrt{8}}{a_0}\sum_i\theta_i^{1/2} = -\frac{\sqrt{8}N}{a_0}\theta
}%
Similarly,  the upper bound may be derived as:
\cgather{\label{eqUB}
\log \left  ( \prod_{i=1}^N   \frac{\Phi^P_i(x_{-i}) \vert_{y_i}}{\Phi^Q_i(y_{-i}) \vert_{y_i}}  \right )
  = \sum_i \log  \left  (  \frac{\Phi^P_i(x_{-i}) \vert_{y_i}}{\Phi^Q_i(y_{-i}) \vert_{y_i}}  \right ) \leqq \sum_i \left  ( \frac{\Phi^Q_i(y_{-i})_{y_i}}{\Phi^P_i(x_{-i})_{y_i}} - 1 \right ) \leqq \frac{\sqrt{8}N}{a_0}\theta
}%
Combining Eqs.~\ref{eqLB} and \ref{eqUB}, we conclude:
\cgather{
\mem{y}^Q e^{ \frac{\sqrt{8}N}{a_0}\theta} \geqq Pr(x \rightarrow y) \geqq \mem{y}^Q e^{-\frac{\sqrt{8}N}{a_0}\theta}
}%
Now, interpreting $a_0$ as the probability of generating an unlikely event below our desired threshold ($i.e.$ a ``failure''), we note that the probability of generating at least one such event is given by $1-(1-a_0)^N$. Hence if $\alpha$ is the pre-specified significance level, we have for $N >> 1 $:
\cgather{
 a_0 \approx (1 -\alpha)/N
}%
Hence, we conclude, that at significance level $\geqq \alpha$, we have the bounds:
\cgather{
\mem{y}^Q e^{ \frac{\sqrt{8}N^2}{1-\alpha}\theta} \geqq Pr(x \rightarrow y) \geqq \mem{y}^Q e^{-\frac{\sqrt{8}N^2}{1-\alpha}\theta}
  }%
\end{proof}
\begin{rem}
This bound can be rewritten in terms of the log-likelihood of the spontaneous jump and  constants independent of the  initial sequence $x$ as:
\cgather{
\left \lvert \log Pr(x \rightarrow y) -C_0 \right \vert \leqq C_1 \theta
}%
where the constants are given by:
\calign{
C_0 &= \log \mem{y}^Q \\
C_1 &= \frac{\sqrt{8} N^2}{1-\alpha}
}%
\end{rem}

\subsection*{In-silico Corroboration of \enet{'s} Capability To Capture Biologically Meaningful Structure}
We compare the results of simulated mutational perturbations to sequences from our databases (for which we have already constructed \enet{s}), and then use NCBI BLAST (\href{https://blast.ncbi.nlm.nih.gov/Blast.cgi}{https://blast.ncbi.nlm.nih.gov/Blast.cgi}) to identify  if  our perturbed sequences match with existing sequences in the databases (\SUPPLEMENTARY Fig.~S-\ref{figsoa}). We find that in contrast to random variations, which rapidly diverge the trajectories, the \enet constraints tend to produce smaller variance in the trajectories, maintain a high degree of match as we extend our trajectories, and produces matches closer in time to the collection time of the  initial sequence, suggesting that the \enet  does indeed capture realistic constraints.

\subsection*{Multivariate Regression to Understand Data Characteristics Necessary For \enet Modeling}

We investigate the key factors that contribute to modeling a set of strains well within the \enet framework. We carry out a multivariate regression with data diversity, the complexity of inferred \enet and the edit distance of the WHO recommendation from a frequency-dominant strain as independent variables (See \SUPPLEMENTARY Table~S-\ref{tabreg} for definitions). Here we define data diversity as the number of clusters we have in the input set of sequences, such that any two sequences five or less mutations apart are in the same cluster. \enet complexity is measured by the number of decision nodes in the component decision trees of the recursive forest.

We select several plausible structures of the regression equation, and in each case conclude that  data diversity has the most important and statistically significant contribution (\SUPPLEMENTARY Table~S-\ref{tabreg}).

}

\clearpage

\setcounter{figure}{0}
\renewcommand{\figurename}{Extended Data Figure}                               
\setcounter{table}{0}                                     
\renewcommand{\tablename}{Extended Data Table}                                 
\begin{table}[!ht]
\captionN{H1N1 northern hemisphere}\label{tabrec0}
\sffamily\fontsize{6}{8}\selectfont
\begin{tabular}{L{0.35in}|L{1.1in}|L{1.1in}|L{1.1in}|L{1.1in}|C{0.27in}|C{0.27in}|C{0.27in}}\hline
\rowcolor{lightgray!50}Season & WHO  Recommendation & Enet  Recommendation,  Cluster 1 & Enet  Recommendation, Cluster 2 & Enet  Recommendation,  Single  Cluster & WHO  Error & Enet  Error & Enet  Error  Single \\\hline
2003-04& A/New  Caledonia/20/99 & A/Memphis/1/2001 & A/HaNoi/2704/2002 & A/Memphis/1/2001 &8.3&5.3&11.3\\\hline
2004-05& A/New  Caledonia/20/99 & A/Memphis/1/2001 & A/Memphis/1/2001 & A/Memphis/1/2001 &7.4&9.7&9.7\\\hline
2005-06& A/New  Caledonia/20/99 & A/Malaysia/25862/2003 & A/Malaysia/25862/2003 & A/Malaysia/25862/2003 &10.6&6.4&6.4\\\hline
2006-07& A/New  Caledonia/20/99 &A/Yazd/144/2006& A/Malaysia/30025/2004 & A/New  York/230/2003 &10.4&5.7&6.4\\\hline
2007-08& A/Solomon  Islands/3/2006 & A/New  York/1050/2006 & A/Incheon/2647/2007 & A/New  York/1050/2006 &10.5&11.2&13.6\\\hline
2008-09& A/Brisbane/59/2007 & A/England/545/2007 & A/Hong\_Kong/2613/2007 & A/England/545/2007 &28.9&27.8&28.5\\\hline
2009-10& A/Brisbane/59/2007 & A/Hawaii/02/2008 & A/Hong\_Kong/H090-751-V3 & A/Hawaii/02/2008 &426.2&2.4*&426.2\\\hline
2010-11& A/California/7/2009 &A/OKINAWA/283/2009& A/Qingdao/FF86/2009 &A/OKINAWA/283/2009&10.6&7.5&7.8\\\hline
2011-12& A/California/7/2009 & A/Florida/14/2010 & A/Taiwan/66179/2010 & A/Florida/14/2010 &12.8&7.6&7.6\\\hline
2012-13& A/California/7/2009 & A/England/WTSI1769/2010 & A/Mexico/3723/2011 & A/Singapore/GP2892/2010 &12.6&8.3&5.4\\\hline
2013-14& A/California/7/2009 & A/IIV-Vladimir/67/2011 & A/Srinigar/827/2011 & A/Helsinki/207/2013 &14.5&5.8&7.7\\\hline
2014-15& A/California/7/2009 & A/Nicaragua/6184\_13/2013 & A/Nicaragua/6184\_13/2013 & A/Nicaragua/6184\_13/2013 &20.0&11.1&11.1\\\hline
2015-16& A/California/7/2009 &A/SENDAI/42/2014& A/USA/VFFSP\_UNITHER \_00011/2014 &A/SENDAI/42/2014&16.8&6.9&6.9\\\hline
2016-17& A/California/7/2009 & A/Massachusetts/24/2015 & A/India/1399/2015 & A/Massachusetts/24/2015 &17.9&3.7&3.7\\\hline
2017-18& A/Michigan/45/2015 & A/Bolzano/11/2016 & A/Costa\_Rica/4631/2016 & A/Bolzano/11/2016 &7.4&6.4&6.4\\\hline
2018-19& A/Michigan/45/2015 & A/Cambodia/B0629529/2017 & A/Mississippi/26/2017 &A/Qatar/16-VI-17-0046212/2017&10.3&6.2&5.3\\\hline
2019-20& A/Brisbane/02/2018 & A/Michigan/398/2018 & A/Zambia/301/2017 & A/Bhutan/15/2019 &11.3&5.7&5.7\\\hline
2020-21& A/Guangdong-Maonan/SWL1536/2019 & A/Singapore/GP1719/2019 & A/Baltimore/R0675/2019 & A/Linkou/R0160/2018 &17.1&10.7&11.5\\\hline
2021-22& A/Victoria/2570/2019 &A/Togo/64/2021& A/Tianjin/00030/2020 &A/Togo/64/2021&10.8&8.4&8.4\\\hline
2022-23& A/Victoria/2570/2019 & A/Cameroon/7082/2021 & A/England/221340346/2022 & A/Lyon/CHU\_0186413/2020 &11.8&6.3&15.6\\\hline
2023-24& A/Victoria/4897/2022 & A/Bahrain/2446/2021 & A/South\_Africa/R05701/2022 & A/South\_Africa/14627/2021 &-1&-1&-1\\\hline
\end{tabular}

\fontsize{7}{8}\selectfont
\vskip 0.5em
$^\star$ A/Hong Kong/H090-751-V3, which is close to the 2009-10 H1N1 pandemic strain (5 edits), was collected on 2009-02-08, before the WHO recommendation was released, but submitted on 2011-07-08.
\end{table}
\begin{table}[!ht]
\captionN{H1N1 southern hemisphere}\label{tabrec1}
\sffamily\fontsize{6}{8}\selectfont
\begin{tabular}{L{0.35in}|L{1.1in}|L{1.1in}|L{1.1in}|L{1.1in}|C{0.27in}|C{0.27in}|C{0.27in}}\hline
\rowcolor{lightgray!50}Season & WHO  Recommendation & Enet  Recommendation,  Cluster 1 & Enet  Recommendation, Cluster 2 & Enet  Recommendation,  Single  Cluster & WHO  Error & Enet  Error & Enet  Error  Single \\\hline
2003& A/New  Caledonia/20/99 & A/HaNoi/2143/2001 & A/HaNoi/2143/2001 & A/HaNoi/2143/2001 &7.3&3.0&3.0\\\hline
2004& A/New  Caledonia/20/99 & A/HaNoi/2546/2002 & A/HaNoi/2704/2002 & A/HaNoi/2546/2002 &11.2&8.8&10.8\\\hline
2005& A/New  Caledonia/20/99 & A/Michigan/1/2003 & A/Michigan/1/2003 & A/Michigan/1/2003 &8.7&5.8&5.8\\\hline
2006& A/New  Caledonia/20/99 & A/Malaysia/25531/2003 & A/Auckland/619/2005 & A/Malaysia/25531/2003 &12.9&9.7&10.6\\\hline
2007& A/New  Caledonia/20/99 & A/Malaysia/1652509/2006 & A/Malaysia/25531/2003 & A/Malaysia/34291/2006 &21.6&15.1&14.9\\\hline
2008& A/Solomon  Islands/3/2006 & A/Malaysia/1715991/2007 & A/Canada/0379/2007 & A/New\_York/08/2007 &11.1&4.7&12.7\\\hline
2009& A/Brisbane/59/2007 & A/California/28/2007 & A/Vietnam/100/2008 & A/California/28/2007 &402.0&401.2&402.1\\\hline
2010& A/California/7/2009 & A/Hong\_Kong/H090-751-V3 & A/Hawaii/02/2008 & A/Hong\_Kong/H090-751-V3 &11.7&4.3&8.0\\\hline
2011& A/California/7/2009 & A/Guangdong/1250/2009 & A/Kenya/1329/2010 & A/Guangdong/1250/2009 &10.5&6.7&6.7\\\hline
2012& A/California/7/2009 & A/Singapore/GP81/2011 & A/Chongqing-Yuzhong/SWL11307/2009 & A/Scotland/ Inverness\_10/2009 &12.6&6.9&6.9\\\hline
2013& A/California/7/2009 & A/Virginia/07/2012 & A/Kenya/CDC-KNH-028/2010 & A/Singapore/GP4215/2010 &13.3&6.5&9.2\\\hline
2014& A/California/7/2009 & A/England/358/2013 & A/India/P1114854/2011 & A/Cruz  Alta/ LACENRS-129/2012 &13.8&4.3&2.5\\\hline
2015& A/California/7/2009 & A/North\_Carolina/07/2013 & A/shandong-zhifu/SWL1105/2014 & A/North\_Carolina/07/2013 &14.9&4.1&4.1\\\hline
2016& A/California/7/2009 & A/Colombia/1457/2015 & A/India/1399/2015 & A/Cruz  Alta/ LACENRS-129/2012 &16.8&4.2&5.0\\\hline
2017& A/Michigan/45/2015 & A/Ivanovo/CRIE-73/2016 & A/Albania/6485/2016 & A/Ivanovo/CRIE-73/2016 &5.5&4.5&4.5\\\hline
2018& A/Michigan/45/2015 &A/Qatar/16-VI-17-0046212/2017& A/Chongqing-Yuzhong/SWL1453/2017 & A/Texas/28/2016 &8.8&3.9&5.9\\\hline
2019& A/Michigan/45/2015 & A/California/145/2017 & A/California/145/2017 & A/California/145/2017 &11.2&6.3&6.3\\\hline
2020& A/Brisbane/02/2018 & A/Shanghai/192/2018 & A/Germany/9054/2019 & A/Nepal/19FL3860/2019 &28.9&23.2&23.7\\\hline
2021& A/Victoria/2570/2019 & A/Yaroslavl/138/2020 & A/Ontario/RV1058/2019 &A/Qatar/16-VI-19-0068514/2019&11.6&12.0&11.1\\\hline
2022& A/Victoria/2570/2019 & A/Netherlands/10253/2020 & A/India/PUN-NIV323483/2021 & A/Mecklenburg-Vorpommern/2/2020 &9.9&3.7&11.3\\\hline
2023& A/Sydney/5/2021 & A/Pakistan/GIHSN/ ICT/826/2023 &A/NIIGATA/922/2019& A/Pakistan/GIHSN/ ICT/826/2023 &-1&-1&-1\\\hline
\end{tabular}

\end{table}
\begin{table}[!ht]
\captionN{H3N2 northern hemisphere}\label{tabrec2}
\sffamily\fontsize{6}{8}\selectfont
\begin{tabular}{L{0.35in}|L{1.1in}|L{1.1in}|L{1.1in}|L{1.1in}|C{0.27in}|C{0.27in}|C{0.27in}}\hline
\rowcolor{lightgray!50}Season & WHO  Recommendation & Enet  Recommendation,  Cluster 1 & Enet  Recommendation, Cluster 2 & Enet  Recommendation,  Single  Cluster & WHO  Error & Enet  Error & Enet  Error  Single \\\hline
2003-04&A/Moscow/10/99& A/Hong  Kong/ CUHK53327/2002 & A/Netherlands/88/2003 & A/Hong  Kong/ CUHK24044/2002 &26.5&3.8&3.9\\\hline
2004-05& A/Fujian/411/2002 & A/Queensland/40/2003 & A/Singapore/ NHRC0001/2003 & A/New  York/214/2003 &10.4&5.7&7.5\\\hline
2005-06& A/California/7/2004 & A/Canterbury/18/2004 & A/Tairawhiti/369/2004 & A/Canterbury/18/2004 &13.0&5.1&5.2\\\hline
2006-07& A/Wisconsin/67/2005 & A/Mexico/TLA2227/2005 & A/New\_York/1034/2006 & A/Mexico/TLA2227/2005 &10.9&5.7&8.5\\\hline
2007-08& A/Wisconsin/67/2005 & A/Mexico/MEX2640/2005 &A/Madagascar/2694/2006& A/Kentucky/UR06-0044/2007 &11.2&4.3&3.3\\\hline
2008-09& A/Brisbane/10/2007 & A/LangSon/LS218/2007 & A/Malaysia/1767091/2007 & A/LangSon/LS218/2007 &5.0&2.3&2.4\\\hline
2009-10& A/Brisbane/10/2007 & A/Pennsylvania/PIT43/2008 & A/Managua/5/2007 & A/Pennsylvania/PIT43/2008 &9.2&7.2&7.2\\\hline
2010-11& A/Perth/16/2009 & A/Stockholm/89/2009 & A/Bordeaux/1942/2009 & A/California/VRDL384/2009 &10.4&8.0&6.9\\\hline
2011-12& A/Perth/16/2009 & A/Quebec/RV2804/2010 & A/Mumbai/2516/2009 & A/Singapore/N1604/2009 &10.6&7.8&7.9\\\hline
2012-13& A/Victoria/361/2011 & A/Singapore/C2011.573/2011 & A/Idaho/01/2010 & A/Cote\_d'Ivoire/ GR1596/2010 &6.9&7.9&9.1\\\hline
2013-14& A/Victoria/361/2011 & A/Boston/DOA2-162/2012 & A/Belem/119244/2012 & A/Boston/DOA2-162/2012 &8.8&4.0&4.0\\\hline
2014-15& A/Texas/50/2012 & A/Schleswig\_Holstein/7/2012 & A/Maryland/03/2014 & A/Schleswig\_Holstein/7/2012 &10.7&2.9&9.8\\\hline
2015-16& A/Switzerland/9715293/2013 & A/New  Hampshire/08/2014 & A/Ireland/14M02879/2014 & A/Thailand/CU-B11417/2014 &10.6&5.7&5.5\\\hline
2016-17& A/Hong  Kong/4801/2014 & A/Alaska/123/2015 & A/Beijing-Xicheng/ 13100/2014 & A/New\_Hampshire/09/2014 &6.2&6.9&7.1\\\hline
2017-18& A/Hong  Kong/4801/2014 & A/South\_Africa/R3703/2016 & A/South\_Africa/3731/2016 & A/Tanzania/2220/2016 &6.2&7.5&7.7\\\hline
2018-19& A/Singapore/INFIMH-16-0019/2016 & A/Hong\_Kong/3554/2017 & A/Taiwan/473/2017 & A/Connecticut/21/2014 &12.7&10.5&12.7\\\hline
2019-20& A/Kansas/14/2017 & A/Maldives/338/2018 & A/Tennessee/64/2018 & A/Arkansas/15/2016 &13.7&5.2&12.5\\\hline
2020-21& A/Hong  Kong/2671/2019 & A/Alaska/04/2019 & A/Guangxi-Mashan/32/2019 & A/Colombia/5215/2015 &15.9&15.7&21.9\\\hline
2021-22& A/Cambodia/e0826360/2020 & A/Bangladesh/1003/2020 & A/Nigeria/4976/2020 & A/Nepal/21FL0201/2021 &16.3&7.4&7.5\\\hline
2022-23& A/Darwin/9/2021 & A/Saarland/1/2022 & A/Nepal/21FL2439/2021 & A/Darwin/7/2021 &8.2&7.1&6.5\\\hline
2023-24& A/Darwin/9/2021 & A/Michigan/ UOM10045676784/2022 &A/Norway/34811/2022& A/Texas/78/2022 &-1&-1&-1\\\hline
\end{tabular}

\end{table}
\begin{table}[!ht]
\captionN{H3N2 southern hemisphere}\label{tabrec3}
\sffamily\fontsize{6}{8}\selectfont
\begin{tabular}{L{0.35in}|L{1.1in}|L{1.1in}|L{1.1in}|L{1.1in}|C{0.27in}|C{0.27in}|C{0.27in}}\hline
\rowcolor{lightgray!50}Season & WHO  Recommendation & Enet  Recommendation,  Cluster 1 & Enet  Recommendation, Cluster 2 & Enet  Recommendation,  Single  Cluster & WHO  Error & Enet  Error & Enet  Error  Single \\\hline
2003&A/Moscow/10/99& A/Canterbury/09/2002 & A/Hong\_Kong/ CUHK24167/2002 & A/Canterbury/57/2002 &26.0&6.6&17.9\\\hline
2004& A/Fujian/411/2002 & A/Wellington/3/2003 & A/Hong\_Kong/ CUHK6377/2003 & A/Queensland/40/2003 &9.3&7.3&7.3\\\hline
2005& A/Wellington/1/2004 & A/Canterbury/11/2004 & A/TayNguyen/TN905/2003 & A/Canterbury/11/2004 &6.7&4.0&4.1\\\hline
2006& A/California/7/2004 & A/Hong  Kong/ CUHK7711/2005 & A/New  Jersey/ NHRC0001/2005 & A/Canterbury/11/2004 &13.6&2.9&4.6\\\hline
2007& A/Wisconsin/67/2005 & A/Mexico/DIF2601/2005 & A/Illinois/NHRC0002/2006 & A/Mexico/DIF2601/2005 &12.0&8.2&8.3\\\hline
2008& A/Brisbane/10/2007 & A/Washington/UR06-0225/2007 & A/Australia/NHRC0013/2005 & A/Washington/UR06-0225/2007 &4.9&5.8&5.8\\\hline
2009& A/Brisbane/10/2007 & A/Kentucky/UR07-0124/2008 & A/Argentina/405/2007 & A/Vietnam/214/2008 &7.9&5.9&5.9\\\hline
2010& A/Perth/16/2009 & A/Hong\_Kong/H090-755-V2 & A/Sydney/3/2009 & A/California/VRDL255/2009 &8.9&8.8&8.0\\\hline
2011& A/Perth/16/2009 & A/Bangladesh/483/2009 & A/Singapore/C2010.310/2010 & A/Singapore/C2009.784/2009 &10.0&5.5&6.6\\\hline
2012& A/Perth/16/2009 & A/Uganda/UVRI/Kisenyi/ 005/2010-09-28 & A/Romania/55656/2011 & A/California/VRDL384/2009 &17.5&12.4&14.3\\\hline
2013& A/Victoria/361/2011 & A/Alborz/1095/2012 & A/West  Virginia/06/2011 & A/Kentucky/21/2009 &8.1&8.9&12.8\\\hline
2014& A/Texas/50/2012 & A/Niakhar/7504/2012 & A/Houston/JMM\_77/2012 & A/New\_York/05/2013 &8.8&6.4&6.7\\\hline
2015& A/Switzerland/9715293/2013 & A/Trinidad/3558/2013 & A/Singapore/G2-6.1/2013 & A/Boston/DOA2-162/2012 &10.7&11.4&9.9\\\hline
2016& A/Hong  Kong/4801/2014 & A/Scotland/146/2015\_(H3N2) &A/Barbados/2879/2015& A/Scotland/146/2015\_(H3N2) &6.9&5.3&7.6\\\hline
2017& A/Hong  Kong/4801/2014 & A/Taiwan/1098/2015 & A/Mexico/2117/2015 &A/VICTORIA/5070/2014&7.6&8.2&8.5\\\hline
2018& A/Singapore/INFIMH-16-0019/2016 &A/FUKUSHIMA/122/2016& A/Alberta/RV0043/2016 & A/Tanzania/2220/2016 &9.6&6.4&8.8\\\hline
2019& A/Switzerland/8060/2017 & A/England/7400/2018 & A/Singapore/SGH0650/2017 & A/Heilongjiang-Xiangyang/11347/2015 &15.9&10.2&14.0\\\hline
2020& A/South  Australia/34/2019 & A/Alicante/ 19\_2105\_20190326 & A/Jiangsu-Rugao/326/2018 & A/Singapore/KK1149/2017 &12.0&14.2&10.9\\\hline
2021& A/Hong  Kong/2671/2019 & A/Cambodia/e0826360/2020 & A/Bangladesh/8001/2020 & A/Indonesia/ NIHRDLPG854/2020 &18.3&9.1&12.9\\\hline
2022& A/Darwin/9/2021 & A/India/PUN-NIV301718/2021 & A/Marseille/0486/2021\_aug & A/Bangladesh/0002/2020 &7.1&5.3&6.7\\\hline
2023& A/Darwin/9/2021 & A/Bulgaria/1666/2022 & A/Michigan/ UOM10045976087/2022 & A/Bulgaria/1666/2022 &-1&-1&-1\\\hline
\end{tabular}

\end{table}

\begin{table}[!ht]\centering
\captionN{Out-performance of \enet recommendations over WHO 
for Influenza A vaccine composition}\label{enetimprovement}\centering
\sffamily\fontsize{7}{8}\selectfont
\begin{tabular}{C{.45in}|C{.55in}|C{.35in}|C{0.65in}|C{0.65in}|C{0.7in}|C{0.65in}|C{0.65in}|C{0.7in}}
\multicolumn{3}{c}{}&\multicolumn{3}{c}{Two decades}&\multicolumn{3}{|c}{One decade}\\\hline
\rowcolor{lightgray!50}Subtype & Hemisphere & Cluster & WHO Error & Enet Error & Improvement \% & WHO Error & Enet Error & Improvement \%\\\hline
H1N1& North &1&13.15* (33.80)&9.42* (30.26)&28.40* (11.73)&13.78&8.24&67.27\\\hline
H1N1& North &2&13.15* (33.80)&8.46* (8.15)&35.71* (314.64)&13.78&7.13&93.29\\\hline
H1N1& South &1&12.76* (32.22)&8.57* (28.25)&32.81* (14.07)&13.47&8.35&61.31\\\hline
H1N1& South &2&12.76* (32.22)&7.24* (26.93)&43.29* (19.64)&13.47&7.26&85.59\\\hline
H3N2& North &1&11.18&7.86&42.13&10.94&9.53&14.72\\\hline
H3N2& North &2&11.18&6.54&70.86&10.94&7.29&49.93\\\hline
H3N2& South &1&11.08&9.07&22.11&10.49&9.88&6.22\\\hline
H3N2& South &2&11.08&7.65&44.94&10.49&8.55&22.73\\\hline
\end{tabular}

\fontsize{7}{8}\selectfont
\flushleft
\vskip 0.5em
$^\star$ Average omits the 2009-10 H1N1 pandemic season. Complete average in parentheses.
\end{table}

\begin{table}[!ht]\centering
\captionN{\enet against Huddleston et al.~\cite{huddleston2020integrating} predictions; reported improvement over WHO distance to future}\label{huddlestoncomparison}\centering
\sffamily\fontsize{7}{8}\selectfont
\begin{tabular}{L{0.55in}|L{1.6in}|L{0.5in}|L{0.5in}|L{0.5in}|L{0.5in}|L{0.5in}|L{0.5in}}\hline
 \rowcolor{lightgray!50}Timepoint & WHO  Recommendation & Emergenet  Distance  to  Future & Emergenet Improvement & LBI + Mutational Load  Distance  to  Future & LBI + Mutational Load Improvement & HI + Mutational Load Distance  to  Future & HI + Mutational Load Improvement \\\hline
2003-10-01& A/Fujian/411/2002 &7.28&2.05&8.44&1.01&6.50&2.95\\\hline
2004-10-01& A/Wellington/1/2004 &4.06&2.63&4.38&1.70&4.38&1.70\\\hline
2005-04-01& A/California/7/2004 &5.18&7.82&4.60&2.89&4.60&2.89\\\hline
2006-04-01& A/Wisconsin/67/2005 &8.54&2.36&5.36&4.94&5.37&4.94\\\hline
2007-10-01& A/Brisbane/10/2007 &5.78&-0.88&3.78&2.00&3.78&2.00\\\hline
2009-10-01& A/Perth/16/2009 &8.03&0.83&6.95&1.00&7.93&0.02\\\hline
2012-04-01& A/Victoria/361/2011 &4.00&4.79&3.72&2.54&7.02&-0.75\\\hline
2013-10-01& A/Texas/50/2012 &6.66&2.10&6.54&2.11&6.89&1.76\\\hline
2014-10-01& A/Switzerland/9715293/2013 &9.85&0.87&3.88&6.94&3.88&6.94\\\hline
2015-10-01& A/Hong  Kong/4801/2014 &7.64&-0.74&7.33&-1.24&6.09&0.00\\\hline
2017-10-01& A/Singapore/INFIMH-16-0019/2016 &8.79&0.78&7.47&1.15&7.81&0.80\\\hline
2018-10-01& A/Switzerland/8060/2017 &13.97&1.94&17.08&-1.91&10.42&4.75\\\hline
&Average&7.48&2.05&6.63&1.93&6.22&2.33\\\hline
\end{tabular}

\flushleft
\fontsize{7}{8}\selectfont
$^\star$ Our strain population is slightly different, so our WHO distance to the future is also slightly different. You can retrieve our WHO distance to the future by adding the columns "``Emergenet Distance to Future" and ``Emergenet Improvement". Likewise, you can retrieve their distance to the future in a similar way.
\end{table}

\begin{table}[!ht]\centering
\captionN{Out-performance of \enet single-cluster recommendations over randomly selected strains}\label{enetimprovementrandom}\centering
\sffamily\fontsize{7}{8}\selectfont
\begin{tabular}{C{.45in}|C{.55in}|C{0.65in}|C{0.65in}|C{0.7in}|C{0.65in}|C{0.65in}|C{0.7in}}
\multicolumn{2}{c}{}&\multicolumn{3}{c}{Two decades}&\multicolumn{3}{|c}{One decade}\\\hline
\rowcolor{lightgray!50}Subtype & Hemisphere & Random Error & Enet Error & Improvement \% & Random Error & Enet Error & Improvement \%\\\hline
H1N1&North&13.40* (34.03)&9.42* (30.26)&29.69* (12.49)&14.77&8.24&79.36\\\hline
H1N1&South&15.88* (35.20)&8.57* (28.25)&46.03* (24.61)&14.50&8.35&73.54\\\hline
H3N2&North&10.80&7.86&37.37&12.45&9.53&30.61\\\hline
H3N2&South&12.30&9.07&35.54&12.87&9.88&30.26\\\hline
\end{tabular}

\fontsize{7}{8}\selectfont
\flushleft
\vskip 0.5em
$^\star$ Average omits the 2009-10 H1N1 pandemic season. Complete average in parentheses.
\end{table}

\clearpage

\ifFIGS
\begin{table}[!ht]\centering
\captionN{Influenza A strains evaluated by IRAT and corresponding \enet predicted risk scores}\label{irattab}
\sffamily\fontsize{6}{8}\selectfont
\begin{tabular}{L{1.5in}|L{0.26in}|L{0.5in}|L{0.6in}|L{0.3in}|L{0.3in}|L{0.3in}|L{0.3in}|L{0.3in}|L{0.3in}|L{0.3in}}\hline
 \rowcolor{lightgray!50}Influenza  Virus & Virus  Type & Date  of  Risk  Assessment & Risk  Score  Category & Emergence  Score & Mean  Low  Acceptable  Emergence & Mean  High  Acceptable  Emergence & Geom  Mean  Risk & Predicted  Emergence & Predicted  Emergence  Low & Predicted  Emergence  High \\\hline
 A/Hong  Kong/125/2017 &H7N9&2017-05-01& Moderate-High &6.5&5.65&7.51&0.00001&7.73&6.19&9.18\\\hline
 A/Shanghai/02/2013 &H7N9&2016-04-01& Moderate-High &6.4&5.52&7.43&0.00001&7.73&6.19&9.18\\\hline
 A/California/62/2018 &H1N2&2019-07-01& Moderate &5.8&4.22&7.16&0.00001&7.73&6.19&9.18\\\hline
 A/Indiana/08/2011 &H3N2&2012-12-01& Moderate &6.0&-1&-1&0.00005&7.66&6.13&9.1\\\hline
 A/Sichuan/06681/2021 &H5N6&2021-10-01& Moderate &5.3&3.88&6.45&0.00128&6.58&5.2&7.92\\\hline
 A/Anhui-Lujiang/39/2018 &H9N2&2019-07-01& Moderate &6.2&4.76&7.57&0.00163&6.42&5.06&7.74\\\hline
 A/Ohio/13/2017 &H3N2&2019-07-01& Moderate &6.6&5.01&7.59&0.00200&6.27&4.93&7.58\\\hline
 A/mink/Spain/3691-8\_22VIR10586-10/2022 &H5N1&2023-04-01& Moderate &5.1&3.96&6.27&0.00264&6.07&4.75&7.35\\\hline
 A/swine/Shandong/1207/2016 &H1N1&2020-07-01& Moderate &7.5&6.33&8.65&0.00315&5.93&4.64&7.2\\\hline
 A/Vietnam/1203/2004 &H5N1&2011-11-01& Moderate &5.2&-1&-1&0.00505&5.54&4.31&6.78\\\hline
 A/American  wigeon/South  Carolina/AH0195145/2021 &H5N1&2022-03-01& Moderate &4.4&3.28&5.51&0.00692&5.28&4.08&6.49\\\hline
 A/Northern  pintail/Washington/40964/2014 &H5N2&2015-03-01& Low-Moderate &3.8&2.6&5.0&0.01381&4.68&3.56&5.83\\\hline
 A/American  green-winged  teal/Washington/1957050/2014 &H5N1&2015-03-01& Low-Moderate &3.6&2.4&4.6&0.01484&4.61&3.5&5.76\\\hline
 A/canine/Illinois/12191/2015 &H3N2&2016-06-01&Low&3.7&2.81&4.9&0.01750&4.46&3.38&5.6\\\hline
 A/Bangladesh/0994/2011 &H9N2&2014-02-01& Moderate &5.6&4.49&6.74&0.03224&3.92&2.91&4.99\\\hline
 A/Yunnan/14564/2015 &H5N6&2016-04-01& Moderate &5.0&4.07&6.18&0.04202&3.68&2.7&4.73\\\hline
 A/gyrfalcon/Washington/41088/2014 &H5N8&2015-03-01& Low-Moderate &4.2&2.9&5.3&0.05259&3.47&2.53&4.51\\\hline
 A/chicken/Tennessee/17-007431-3/2017 &H7N9&2017-10-01&Low&3.1&2.2&3.94&0.05422&3.45&2.5&4.48\\\hline
 A/Jiangxi-Donghu/346/2013 &H10N8&2014-02-01& Moderate &4.3&3.37&5.96&0.06794&3.24&2.32&4.25\\\hline
 A/turkey/Indiana/1573-2/2016 &H7N8&2017-07-01&Low&3.4&2.4&4.26&0.07830&3.11&2.21&4.11\\\hline
 A/chicken/Tennessee/17-007147-2/2017 &H7N9&2017-10-01&Low&2.8&2.01&3.71&0.07995&3.09&2.2&4.09\\\hline
 A/Astrakhan/3212/2020 &H5N8&2021-03-01& Moderate &4.6&3.64&5.82&0.13983&2.58&1.76&3.53\\\hline
 A/Netherlands/219/2003 &H7N7&2012-06-01& Moderate &4.6&3.22&4.39&0.14384&2.56&1.74&3.50\\\hline
\end{tabular}

\flushleft
\fontsize{7}{8}\selectfont
$^\star$ First seven columns available from IRAT\cite{Influenz24:online}. Last four columns predicted by \enet.\\
$^{\star\star}$ The strain A/duck/New York/1996, analyzed by IRAT to have emergence score 2.3 on 2011-11-01, is omitted because its HA segment is unavailable.
\end{table}

\vskip 3em

\ifFIGS
\begin{table}[!ht]\centering
\captionN{\textcolor{black}{Count of identified animal strains above estimated emergence risk threshold}}\label{riskytab}
\sffamily\fontsize{8}{8}\selectfont
\begin{tabular}{L{.6in}|L{.9in}|L{.9in}|L{.9in}|L{.9in}}\hline
\rowcolor{lightgray!50}Subtype& Score  $>$ 6.5 & Score  $>$ 7.0 & Score  $>$ 7.5 & Score  $>$ 7.7 \\\hline
H1N1& 7  (8.75\%) & 3  (25.0\%) & 2  (33.33\%) & 1  (25.0\%) \\\hline
H3N2& 14  (17.5\%) & 3  (25.0\%) & 2  (33.33\%) & 1  (25.0\%) \\\hline
H7N9& 1  (1.25\%) & 1  (8.33\%) & 1  (16.67\%) & 1  (25.0\%) \\\hline
H9N2& 0  (0.0\%) & 0  (0.0\%) & 0  (0.0\%) & 0  (0.0\%) \\\hline
H5N1& 49  (61.25\%) & 3  (25.0\%) & 0  (0.0\%) & 0  (0.0\%) \\\hline
H1N2& 8  (10.0\%) & 2  (16.67\%) & 1  (16.67\%) & 1  (25.0\%) \\\hline
H3N3& 1  (1.25\%) & 0  (0.0\%) & 0  (0.0\%) & 0  (0.0\%) \\
\hline\end{tabular}

\end{table}
\else
\refstepcounter{table}\label{riskytab}
\fi

\clearpage

\ifFIGS
\begin{table}[!ht]\centering
\captionN{Influenza A strains evaluated by IRAT and corresponding \enet predicted risk scores, sampling 75\% of strains from each subtype in the current human population (averaged over 20 random seeds)}\label{irattabsample}
\sffamily\fontsize{6}{8}\selectfont
\begin{tabular}{L{1.65in}|L{0.26in}|L{0.5in}|L{0.6in}|L{0.3in}|L{0.3in}|L{0.3in}|L{0.3in}|L{0.3in}|L{0.45in}}\hline
 \rowcolor{lightgray!50} Influenza  Virus & Virus  Type & Date  of  Risk  Assessment & Risk  Score  Category & Emergence  Score & Geom  Mean  Risk & Geom  Mean  Risk  SEM & Predicted  Emergence & Predicted  Emergence  SEM & Predicted  Emergence  W/O  Sampling \\\hline
 A/Hong  Kong/125/2017 &H7N9&2017-05-01& Moderate-High &6.5&0.00001&0.00000&7.73&0.00000&7.73\\\hline
 A/California/62/2018 &H1N2&2019-07-01& Moderate &5.8&0.00001&0.00000&7.73&0.00000&7.73\\\hline
 A/Shanghai/02/2013 &H7N9&2016-04-01& Moderate-High &6.4&0.00002&0.00001&7.72&0.01700&7.73\\\hline
 A/Indiana/08/2011 &H3N2&2012-12-01& Moderate &6.0&0.00028&0.00007&7.4&0.06600&7.67\\\hline
 A/Ohio/13/2017 &H3N2&2019-07-01& Moderate &6.6&0.00227&0.00010&6.19&0.02900&6.27\\\hline
 A/Anhui-Lujiang/39/2018 &H9N2&2019-07-01& Moderate &6.2&0.00327&0.00029&5.95&0.06900&6.42\\\hline
 A/swine/Shandong/1207/2016 &H1N1&2020-07-01& Moderate &7.5&0.00366&0.00047&5.9&0.09000&5.93\\\hline
 A/mink/Spain/3691-8\_22VIR10586-10/2022 &H5N1&2023-04-01& Moderate &5.1&0.00371&0.00059&5.99&0.14100&6.06\\\hline
 A/canine/Illinois/12191/2015 &H3N2&2016-06-01&Low&3.7&0.00854&0.00101&5.18&0.08200&4.46\\\hline
 A/Northern  pintail/Washington/40964/2014 &H5N2&2015-03-01& Low-Moderate &3.8&0.00920&0.00103&5.11&0.08500&4.67\\\hline
 A/Sichuan/06681/2021 &H5N6&2021-10-01& Moderate &5.3&0.01089&0.00640&5.9&0.25200&6.58\\\hline
 A/American  green-winged  teal/Washington/1957050/2014 &H5N1&2015-03-01& Low-Moderate &3.6&0.02009&0.00108&4.36&0.04400&4.61\\\hline
 A/Bangladesh/0994/2011 &H9N2&2014-02-01& Moderate &5.6&0.02494&0.00260&4.21&0.06900&3.92\\\hline
 A/Vietnam/1203/2004 &H5N1&2011-11-01& Moderate &5.2&0.03444&0.00184&3.88&0.04600&5.54\\\hline
 A/Yunnan/14564/2015 &H5N6&2016-04-01& Moderate &5.0&0.03817&0.00250&3.8&0.05900&3.68\\\hline
 A/gyrfalcon/Washington/41088/2014 &H5N8&2015-03-01& Low-Moderate &4.2&0.03849&0.00224&3.79&0.06200&3.47\\\hline
 A/Jiangxi-Donghu/346/2013 &H10N8&2014-02-01& Moderate &4.3&0.05588&0.00553&3.5&0.08700&3.24\\\hline
 A/turkey/Indiana/1573-2/2016 &H7N8&2017-07-01&Low&3.4&0.05800&0.00512&3.45&0.08500&3.11\\\hline
 A/chicken/Tennessee/17-007431-3/2017 &H7N9&2017-10-01&Low&3.1&0.05819&0.00524&3.47&0.10400&3.45\\\hline
 A/American  wigeon/South  Carolina/AH0195145/2021 &H5N1&2022-03-01& Moderate &4.4&0.06305&0.00707&3.46&0.13100&5.28\\\hline
 A/chicken/Tennessee/17-007147-2/2017 &H7N9&2017-10-01&Low&2.8&0.07537&0.00629&3.24&0.10900&3.09\\\hline
 A/Netherlands/219/2003 &H7N7&2012-06-01& Moderate &4.6&0.09285&0.00821&3.03&0.08700&2.56\\\hline
 A/Astrakhan/3212/2020 &H5N8&2021-03-01& Moderate &4.6&0.15732&0.00509&2.48&0.02800&2.58\\\hline
\end{tabular}

\flushleft
\fontsize{7}{8}\selectfont
$^\star$ First five columns available from IRAT\cite{Influenz24:online}. Last five columns predicted by \enet.\\
$^{\star\star}$ The strain A/duck/New York/1996, analyzed by IRAT to have emergence score 2.3 on 2011-11-01, is omitted because its HA segment is unavailable.
\end{table}

\vskip 3em

\ifFIGS
\begin{table}[!ht]\centering
\vskip .2em
\captionN{\textcolor{black}{Baseline performance with naive sequence match: correlation with 95\% confidence bounds of edit distance between animal strains and high fitness human strains (H1N1). Sequence match with edit distance metric was limited to  the RBD, and  specific biologically important residue locations within the RBD}}\label{baselinetab}
\sffamily\fontsize{8}{8}\selectfont
\begin{tabular}{L{1in}|L{1in}|L{1in}|L{.8in}|L{1in}|L{.7in}}\hline
\rowcolor{lightgray!50} %
  & HA  \qdist  (2020-2022  sequences) & Geometric  mean  of  HA  and  NA  \qdist  (2020-2022  sequences) & HA  \qdist  (IRAT  sequences) & Geometric  mean  of  HA  and  NA  \qdist  (IRAT  sequences) & CDC computed IRAT emergence  score \\\hline
 On  RBD  (63-286) &$0.671\pm0.005$&$0.773\pm0.007$&$0.214\pm0.061$&$0.376\pm0.071$&-0.43\\\hline
 On  seleted  residues  on  RBD$^\star$ &$-0.21\pm0.007$&$-0.203\pm0.01$&$-0.545\pm0.02$&$-0.234\pm0.03$&-0.16\\\hline
 On  220  loop  (210-240) &$-0.052\pm0.00$&$-0.131\pm0.00$&$-0.523\pm0.09$&$-0.172\pm0.07$&-0.15\\\hline
\end{tabular}

\fontsize{7}{8}\selectfont
\flushleft $^\star$ Residue positions 130,131,132,133,134,135,150,151,152,153,154,155,156,157,158,159,160,190,191,192,193,194,195,220,221,222,223,224,225,226, comprising the 190-helix, the 150-loop, the 220-loop and the 130-loop, which are known to play key role in HA function~\cite{wu2020influenza}
\end{table}
\else
\refstepcounter{table}\label{baselinetab}
\fi

\clearpage

\ifFIGS
\begin{figure*}[!ht]
  \centering
  \tikzexternalenable
    \tikzsetnextfilename{sequence}
\vspace{-5pt}
  \iftikzX
  \input{Figures/sequence_}  
  \vspace{0pt}   
  
  \else
  \includegraphics[width=0.95\textwidth]{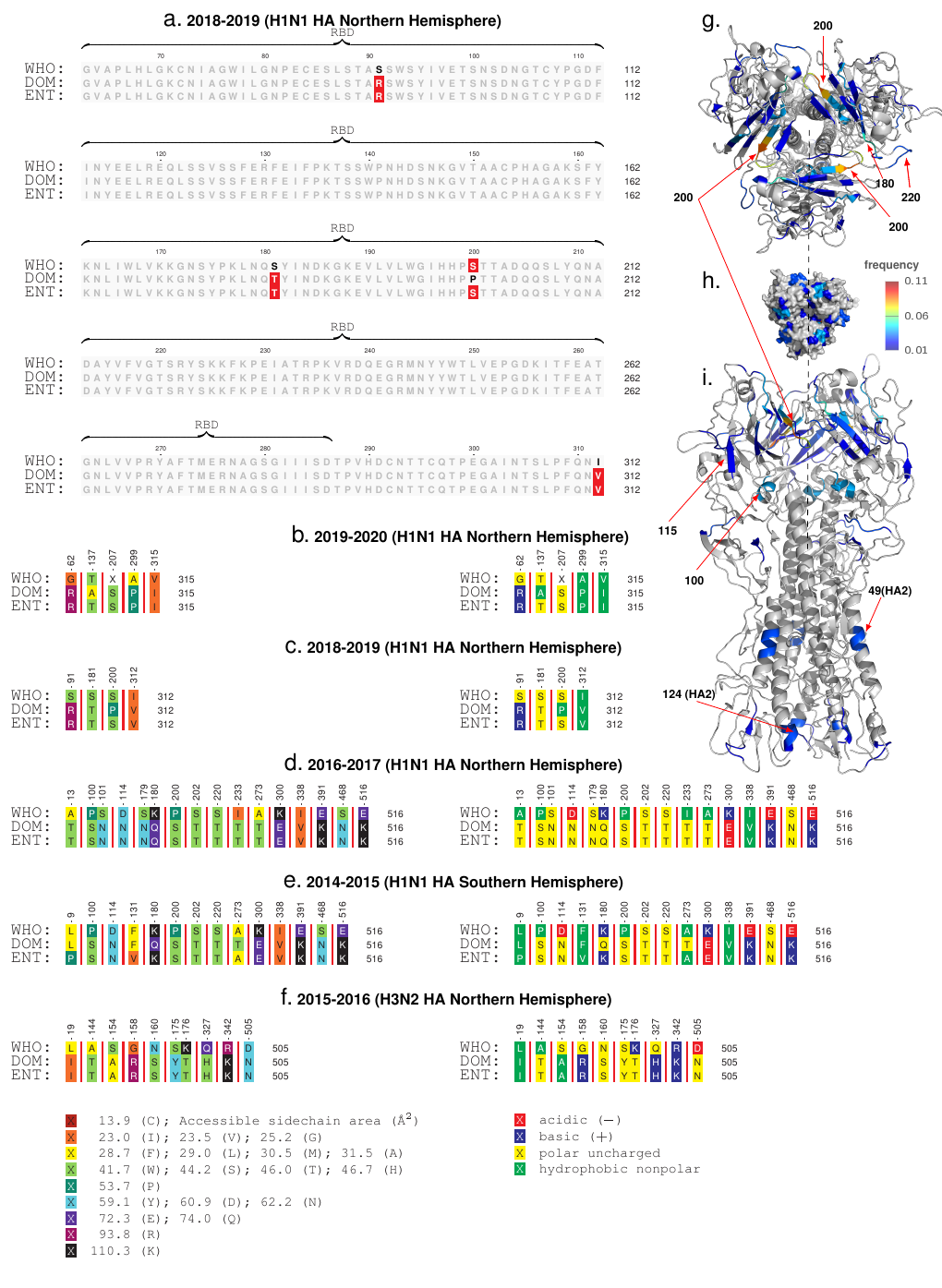}  \vspace{-5pt}   

  \fi
  
\vspace{0pt}

\captionN{\textbf{Sequence comparisons.} \textbf{Panel a} Comparing the \enet  (ENT) and the WHO recommendation (WHO), and the observed dominant strain (DOM), we note that the correct \enet  predictions tend to be within the RBD, both for H1N1 and H3N2 for HA. \textbf{Panels b-f} Additionally, by comparing the type, side chain area, and the accessible side chain area, we note that DOM and ENT are often close in important chemical properties, while WHO deviations are  not. \textbf{Panels g-i} show the localization of the deviations in the molecular structure of HA, where we note that the changes are most frequent in the HA1 sub-unit (the globular head), and around residues and structures that have been commonly implicated in receptor binding interactions $e.g$ the $\approx 200$ loop, the $\approx 220$ loop and the $\approx 180$-helix~\cite{tzarum2015structure,lazniewski2018structural,garcia2015dynamic}.}\label{figseq}
\end{figure*}
\else
\refstepcounter{figure}\label{figseq}
\fi

\ifFIGS

\begin{table}[b]\centering
  \captionN{\textcolor{black}{Top ranked risky strains amongst $6,354$ animal strains collected post-2020 unique upto 15 edits in HA sequence}}\label{highrisktab}

  \vspace{-8pt}

\bf\sffamily\fontsize{7}{5}\selectfont

\begin{tabular}{L{2.75in}|L{.45in}|L{.60in}|L{.6in}|C{1in}}\hline
Strain&Subtype& HA  accession & NA  accession & Predicted  IRAT  emergence \\
\rowcolor{colh7n9x!40}A/Shanghai/02/2013&H7N9&EPI448936&EPI448938&7.7327\\
\rowcolor{colh3n2x!30}A/swine/North\_Carolina/A02751333/2022&H3N2&EPI2396893&EPI2396894&7.7327\\
\rowcolor{colh1n2x!30}A/swine/Iowa/CEIRS-1495/2023&H1N2&EPI2908633&EPI2908631&7.7327\\
\rowcolor{colh1n1x!40}A/swine/Tver\_region/RII-81-1S/2023&H1N1&EPI2965146&EPI2965145&7.7327\\
\rowcolor{colh3n2x!30}A/Indiana/08/2011&H3N2&EPI344405&EPI344404&7.6364\\
\rowcolor{colh1n1x!40}A/swine/North\_Carolina/A02479173/2020&H1N1&EPI1780425&EPI1780426&7.5991\\
\rowcolor{colh5n1x!20}A/Common\_Buzzard/England/125155/2023&H5N1&EPI2574053&EPI2574052&7.4234\\
\rowcolor{colh5n1x!20}A/backyard\_chicken/Uruguay/UDELAR-144-M3/2023&H5N1&EPI2758928&EPI2758926&7.4152\\
\rowcolor{colh5n1x!20}A/Tufted\_duck/Netherlands/1/2023&H5N1&EPI2904803&EPI2904802&7.4125\\
\rowcolor{colh1n2x!30}A/swine/Iowa/A02479084/2020&H1N2&EPI1779472&EPI1779473&7.3386\\
\rowcolor{colh3n2x!30}A/swine/Minnesota/A02245579/2020&H3N2&EPI1775813&EPI1775814&7.2054\\
\rowcolor{colh1n1x!40}A/swine/ISU-Illinois/A02861867/2023&H1N1&EPI2908017&EPI2908016&7.1952\\
\rowcolor{colh1n1x!40}A/swine/Iowa/A02751092/2022&H1N1&EPI2381136&EPI2381137&6.9698\\
\rowcolor{colh5n1x!20}A/peregrine\_falcon/Netherlands/22000191-001/2022&H5N1&EPI1980868&EPI1980867&6.9508\\
\rowcolor{colh5n1x!20}A/otter/Finland/2860\_21VIR9619-5/2021&H5N1&EPI2197232&EPI2197231&6.9362\\
\rowcolor{colh1n2x!30}A/swine/North\_Carolina/A02751482/2023&H1N2&EPI2456343&EPI2456344&6.9019\\
\rowcolor{colh5n1x!20}A/gull/Estonia/TA2113284-4\_21VIR7512-8/2021&H5N1&EPI1945389&EPI1945388&6.8990\\
\rowcolor{colh5n1x!20}A/lynx/Finland/Vi209-22\_22VIR3127-2/2022&H5N1&EPI2197224&EPI2197223&6.8905\\
\rowcolor{colh5n1x!20}A/domestic\_duck/England/076401/2021&H5N1&EPI2071928&EPI2071927&6.8736\\
\rowcolor{colh3n2x!30}A/swine/Illinois/A02525253/2021&H3N2&EPI1910375&EPI1910376&6.8375\\
\rowcolor{colh1n2x!30}A/swine/South\_Korea/GNJJ/2020&H1N2&EPI2258341&EPI2258343&6.8270\\
\rowcolor{colh5n1x!20}A/Great\_skua/Iceland/2023AI04253/2022&H5N1&EPI2635214&EPI2635213&6.8097\\
\rowcolor{colh5n1x!20}A/Gyrfalcon/Iceland/2023AI04255/2022&H5N1&EPI2635279&EPI2635277&6.8093\\
\rowcolor{colh5n1x!20}A/herring\_gull/England/248639/2022&H5N1&EPI2158600&EPI2158599&6.8089\\
\rowcolor{colh5n1x!20}A/chicken/England/085598/2022&H5N1&EPI2089022&EPI2089021&6.8089\\
\rowcolor{colh5n1x!20}A/lesser\_black-backed\_gull/Netherlands/22012469-002/2022&H5N1&EPI2137792&EPI2137790&6.8089\\
\rowcolor{colh5n1x!20}A/Anser\_anser/Spain/1035-5\_22VIR6312-8/2022&H5N1&EPI2102649&EPI2102648&6.8089\\
\rowcolor{colh5n1x!20}A/Mallard/Netherlands/18/2022&H5N1&EPI2197857&EPI2197856&6.8089\\
\rowcolor{colh5n1x!20}A/turkey/Italy/21VIR10251/2021&H5N1&EPI1944389&EPI1944388&6.8089\\
\rowcolor{colh5n1x!20}A/grey\_heron/Netherlands/21038941-001/2021&H5N1&EPI1941418&EPI1941417&6.8089\\
\rowcolor{colh5n1x!20}A/duck/France/21343/2021&H5N1&EPI2536840&EPI2536875&6.8089\\
\rowcolor{colh5n1x!20}A/duck/Spain/2095-2\_22VIR8632-5/2022&H5N1&EPI2191149&EPI2191148&6.8085\\
\rowcolor{colh5n1x!20}A/goose/France/22P004055/2022&H5N1&EPI2780410&EPI2780409&6.8072\\
\rowcolor{colh5n1x!20}A/turkey/Poland/H1923\_21RS3290-1/2021&H5N1&EPI2782408&EPI2782407&6.8070\\
\rowcolor{colh5n1x!20}A/laying\_hen/Italy/21VIR11168/2021&H5N1&EPI2140973&EPI2140972&6.8066\\
\rowcolor{colh5n1x!20}A/turkey/Italy/21VIR8585-1/2021&H5N1&EPI1923192&EPI1923194&6.8066\\
\rowcolor{colh5n1x!20}A/chicken/Czech\_Republic/61-1/2022&H5N1&EPI1999361&EPI1999360&6.8066\\
\rowcolor{colh5n1x!20}A/turkey/Italy/21VIR10001/2021&H5N1&EPI2139999&EPI2139998&6.8066\\
\rowcolor{colh5n1x!20}A/chicken/Italy/21VIR9940-5/2021&H5N1&EPI2140039&EPI2140038&6.8066\\
\rowcolor{colh5n1x!20}A/chicken/Italy/21VIR10094/2021&H5N1&EPI2140083&EPI2140082&6.8053\\
\rowcolor{colh5n1x!20}A/chicken/Italy/21VIR10263/2021&H5N1&EPI2140199&EPI2140198&6.8043\\
\rowcolor{colh5n1x!20}A/goose/Czech\_Republic/25322-179/2021&H5N1&EPI2021894&EPI2021896&6.8035\\
\rowcolor{colh5n1x!20}A/broiler/Italy/21VIR11886-1/2021&H5N1&EPI2141602&EPI2141601&6.8025\\
\rowcolor{colh1n2x!30}A/swine/France/59-200284/2020&H1N2&EPI1976498&EPI1976500&6.8016\\
\rowcolor{colh5n1x!20}A/swan/Slovenia/2049\_22VIR777-3/2021&H5N1&EPI1995084&EPI1995083&6.8003\\
\rowcolor{colh5n1x!20}A/turkey/Italy/21VIR10852/2021&H5N1&EPI2140553&EPI2140552&6.7956\\
\rowcolor{colh5n1x!20}A/turkey/Wales/065047/2021&H5N1&EPI2071520&EPI2071519&6.7830\\
\rowcolor{colh5n1x!20}A/chicken/Luxembourg/23023602/2023&H5N1&EPI2364554&EPI2364572&6.7826\\
\rowcolor{colh5n1x!20}A/Common\_Teal/Netherlands/1/2022&H5N1&EPI2185431&EPI2185430&6.7823\\
\rowcolor{colh5n1x!20}A/Common\_Tern/Netherlands/12/2022&H5N1&EPI2119243&EPI2119242&6.7804\\
\rowcolor{colh5n1x!20}A/turkey/Spain/140-38\_22VIR2142-19/2022&H5N1&EPI1998202&EPI1998201&6.7697\\
\rowcolor{colh3n2x!30}A/swine/Minnesota/A02524797/2020&H3N2&EPI1907265&EPI1907266&6.7679\\
\rowcolor{colh3n2x!30}A/canine/Pennsylvania/CVM-985419/2023&H3N2&EPI2906952&EPI2906950&6.7569\\
\rowcolor{colh5n1x!20}A/Herring\_gull/France/22P019328/2022&H5N1&EPI2780738&EPI2780737&6.7308\\
\rowcolor{colh5n1x!20}A/otter/Netherlands/22001014-005/2022&H5N1&EPI1965229&EPI1965228&6.7203\\
\rowcolor{colh5n1x!20}A/Harbour\_Seal/Scotland/162919/2022&H5N1&EPI2398403&EPI2398402&6.7092\\
\rowcolor{colh1n2x!30}A/swine/Denmark/S20996-3/2021&H1N2&EPI1980810&EPI1980812&6.7069\\
\rowcolor{colh1n2x!30}A/swine/Germany/2021AI04886/2021\_(H1pdmN2&H1N2&EPI2551752&EPI2551751&6.6937\\
\rowcolor{colh5n1x!20}A/American\_Crow/BC/AIVPHL-1468/2023&H5N1&EPI2856554&EPI2856544&6.6846\\
\rowcolor{colh5n1x!20}A/chicken/Italy/21VIR10812-11/2021&H5N1&EPI2140660&EPI2140658&6.6807\\
\rowcolor{colh5n1x!20}A/great\_crested\_grebe/Netherlands/22001219-002/2022&H5N1&EPI1990440&EPI1990439&6.6777\\
\rowcolor{colh5n1x!20}A/fox/England/015850/2022&H5N1&EPI2437468&EPI2437467&6.6711\\
\rowcolor{colh1n1x!40}A/swine/Germany/2022AI00363/2022\_(H1avN1)&H1N1&EPI2551584&EPI2551583&6.6577\\
\rowcolor{colh5n1x!20}A/barnacle\_goose/Netherlands/22007405-004/2022&H5N1&EPI2028769&EPI2028768&6.6518\\
\rowcolor{colh1n2x!30}A/swine/England/129502/2023&H1N2&EPI2818798&EPI2818800&6.6469\\
\rowcolor{colh3n2x!30}A/swine/Iowa/ISU-A02862194/2023&H3N2&EPI2971632&EPI2971631&6.6454\\
\rowcolor{colh1n1x!40}A/swine/Iowa/A02636162/2021&H1N1&EPI1932974&EPI1932975&6.6453\\
\rowcolor{colh3n2x!30}A/swine/Iowa/ISU-A02861853/2023&H3N2&EPI2874299&EPI2874298&6.6389\\
\hline\end{tabular}

\end{table}
\else
\refstepcounter{table}\label{highrisktab}
\fi

\clearpage

\ifFIGS
  
\begin{figure}[t]
  \tikzexternalenable
  \tikzsetnextfilename{riskyseq_h3n2}
  \centering
  \iftikzX  
  \begin{tikzpicture}[font=\bf\sffamily\fontsize{8}{8}\selectfont]
  \def\SEQA{Figures/plotdata/risky4.fasta}
  \def\SEQB{Figures/plotdata/risky4_h3n2.fasta}
  \def\SEQC{Figures/plotdata/risky4_domtop.fasta}
  \def\SEQC{Figures/plotdata/risky4_h3n2.fasta}
  \def\LENA{550}
  \def\LENB{63}
  \def\LENC{286}
  \def\LENE{1}
  \def\LEND{550}
  \def\COLM{jet}
  \def\rndfileA{rndfile1.png}
  \def\rndfileB{rndfile2.png}
  \def\rndfileC{rndfile3.png} 

  \newcommand{\panelX}[2] {
    \begin{tikzpicture}[font=\bf\sffamily\fontsize{7}{7}\selectfont]
      \node[ ] (A) at (0,0) {
        \mnp{3.2in}{\begin{texshade}{#1}
            \shadingmode[accessible area]{functional}
            \hideallmatchpositions
            \rulersteps{1}
            \setfont{residues}{sf}{up}{bf}{tiny} 
            \setfont{numbering}{sf}{up}{bf}{tiny} 
            \setfont{names}{tt}{up}{bf}{small}
            \setfont{legend}{tt}{up}{bf}{scriptsize}
            \threshold[80]{50}
            \setends{1}{1..\LENA}
            \showruler{1}{top}
            \hideconsensus
            \shadeallresidues
            #2
          \end{texshade}}};
\node[] (B) at (A.north east) {  \mnp{3.5in}{      
          \begin{texshade}{#1}
            \shadingmode[hydropathy]{functional}
            \hideallmatchpositions
            \rulersteps{1}
            \setfont{residues}{sf}{up}{bf}{tiny} 
            \setfont{numbering}{sf}{up}{bf}{tiny} 
            \setfont{names}{tt}{up}{bf}{small}
            \setfont{legend}{tt}{up}{bf}{scriptsize}
            \threshold[80]{50}
            \setends{1}{1..\LENA}
            \showruler{1}{top}
            \hideconsensus
            \shadeallresidues
            #2
          \end{texshade}}};
    \end{tikzpicture}
    }

  \node[%
  ] (T1) at (0,0){  
    \begin{tikzpicture}
      \node [
      ]
      (A) at (0,0.0) {
        \mnp{\textwidth}{
          \begin{texshade}{\SEQC}
           \residuesperline*{70}
           \shadingmode[allmatchspecial]{identical}
            \shadingcolors{grays}
            \conservedresidues{Red1!40}{lightgray!10}{upper}{bf}
            \allmatchresidues{Red1!50}{lightgray!10}{upper}{bf}
            \nomatchresidues{white}{Red1!70}{upper}{bf}
            \setfont{residues}{sf}{up}{bf}{tiny} 
            \setfont{numbering}{sf}{up}{bf}{tiny} 
            \setfont{names}{tt}{up}{bf}{small}
            \setfont{legend}{tt}{up}{bf}{scriptsize}
            \setfont{features}{tt}{up}{bf}{scriptsize}
            \feature{top}{1}{\LENB..\LENC}{brace[black]}{RBD}
            \setends{1}{\LENE..\LEND}
            \showruler{1}{top}
            \hideconsensus
             \hidelegend
          \end{texshade}
        }};
    \end{tikzpicture}};

\end{tikzpicture}
 \else
  \includegraphics[width=.975\textwidth]{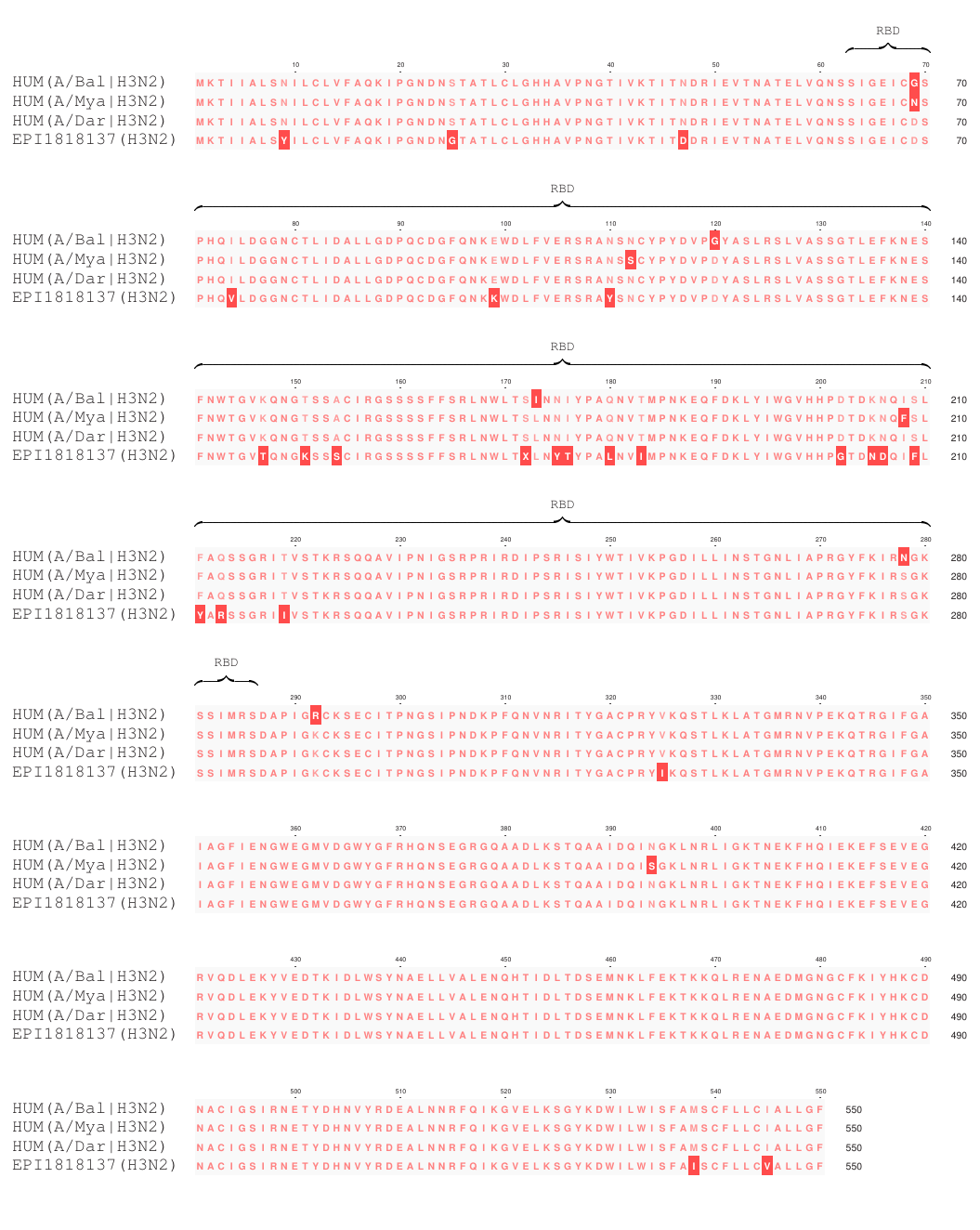}
  \fi 
  \vspace{-18pt}
  
\captionN{\textcolor{black}{\textbf{High-risk animal strain comparison.} HA sequence comparison  with 2020-2021 dominat frequency human strains (A/Baltimore/JH/001/2021, A/Myanmar/I026/2021, A/Darwin/12/2021)  with \enet estimated top H3N2 risky strain EPI1818137 (emergence score $> 7.26$, 2020-2022 April)  showing  differences in and out of the RBD.}}\label{figriskyseq}
\end{figure}
\else
\refstepcounter{figure}\label{figriskyseq}
\fi

\clearpage

\section*{Supplementary Figures \& Tables}

\setcounter{figure}{0}
\renewcommand{\figurename}{S-Fig.}
\setcounter{table}{0}
\renewcommand{\tablename}{S-Tab.}

\begin{table}[H]
\mnp{\textwidth}{  
    \centering
    \captionN{Number of Influenza sequences collected from public databases}\label{tabseq}
    \sffamily\fontsize{8}{8}\selectfont
    \begin{tabular}{L{0.5in}|L{0.5in}|L{0.7in}|L{0.7in}|L{0.5in}}\hline
\rowcolor{lightgray!50} Database & Influenza  Subtype & No.  HA  Sequences & No.  NA  Sequences & Total \\\hline
GISAID&H1N1&73,905&73,920&147,825\\\hline
NCBI&H1N1&18,577&16,913&35,490\\\hline
GISAID&H3N2&108,829&108,860&217,689\\\hline
NCBI&H3N2&18,840&15,249&34,089\\\hline
GISAID&H1N2&1,340&1,340&2,680\\\hline
GISAID&H1N7&18&18&36\\\hline
GISAID&H3N8&406&405&811\\\hline
GISAID&H4N6&68&68&136\\\hline
GISAID&H5N1&8,245&8,145&16,390\\\hline
GISAID&H5N2&35&35&70\\\hline
GISAID&H5N3&44&43&87\\\hline
GISAID&H5N5&74&74&148\\\hline
GISAID&H5N6&282&282&564\\\hline
GISAID&H5N8&1,561&1,513&3,074\\\hline
GISAID&H6N1&31&31&62\\\hline
GISAID&H6N2&47&41&88\\\hline
GISAID&H6N6&62&62&124\\\hline
GISAID&H7N3&122&120&242\\\hline
GISAID&H7N9&1,274&1,273&2,547\\\hline
GISAID&H9N2&451&451&902\\\hline
GISAID&H10N3&43&43&86\\\hline
GISAID&H10N7&15&15&30\\\hline
GISAID&H11N9&33&33&66\\\hline
GISAID&H13N6&15&15&30\\\hline
Both&Total&234,317&228,949&463,266\\\hline
\end{tabular}

}
\end{table}

\vskip 1em

\begin{table}[H]
    \def\ACOL{teal!30}
    \def\BCOL{Red1!30}
    \def\CCOL{gray!30}
    \captionN{Examples: \enet induced distance varying for fixed sequence pair when background population changes (rows 1-5), sequences with small edit distance and large \qdist, and the converse (rows 6-9)}\label{tabex}
    \sffamily\fontsize{8}{8}\selectfont
    \centering
    \begin{tabular}{L{.5in}|L{1.6in}|L{1.6in}|L{.6in}|L{.4in}|L{.4in}}\hline
\rowcolor{lightgray!50}Hamming Distance & Sequence A & Sequence B & \enet E-Distance & Year A$^\star$ & Year B$^\star$\\\hline
18 & A/Singapore/23J/2007 & A/Tennessee/UR06-0294/2007 & 0.0111 & 2007 & 2007\\\hline
18 & A/Singapore/23J/2007 & A/Tennessee/UR06-0294/2007 & 0.0094 & 2008 & 2008\\\hline
18 & A/Singapore/23J/2007 & A/Tennessee/UR06-0294/2007 & 0.0027 & 2009 & 2009\\\hline
18 & A/Singapore/23J/2007 & A/Tennessee/UR06-0294/2007 & 0.0025 & 2010 & 2010\\\hline
18 & A/Singapore/23J/2007 & A/Tennessee/UR06-0294/2007 & 0.6163 & 2007 & 2010\\\hline
\rowcolor{\ACOL}11 & A/Naypyitaw/M783/2008 & A/Singapore/201/2008 &      0.8852 & 2008 & 2008\\\hline
\rowcolor{\ACOL}15 & A/Cambodia/W0908339/2012 & A/Singapore/DMS1233/2012&0.2737 & 2012 & 2012\\\hline
\rowcolor{\BCOL}126 & A/South Dakota/03/2008 & A/Singapore/10/2008 &     0.3034 & 2008 & 2008\\\hline
\rowcolor{\BCOL}141 & A/Jodhpur/3248/2012 & A/Cambodia/W0908339/2012 &   0.2405 & 2012 & 2012\\\hline
\end{tabular}

    \flushleft
    \fontsize{7}{8}\selectfont
    $^\star$ Year A and year B correspond to the assumed collection years for sequences A and B respectively for the purpose of this example. Sequence A in row 1 is collected in 2007, but is assumed to be from different years in rows 2-4 to demonstrate the change in \qdist from sequence B, arising only from a change in the background population.
\end{table}

\vskip 1em

\begin{table}[H]
\mnp{\textwidth}{
    \centering
    \captionN{Correlation between \qdist and edit distance between sequence pairs}\label{tabcor}
    \sffamily\fontsize{8}{8}\selectfont
    \begin{tabular}{L{1.35in}|L{.65in}}\hline
        \rowcolor{lightgray!50}Phenotypes & Correlation \\\hline
        Influenza H1N1 HA  &0.76\\\hline
        Influenza H1N1 NA &0.74\\\hline
        Influenza H3N2 HA &0.85\\\hline
        Influenza H3N2 NA &0.79\\\hline
    \end{tabular}
}
\end{table}

\ifFIGS
  
\begin{figure}[t]
  \tikzexternalenable
  \tikzsetnextfilename{riskyseq_h3n2_lr}
  \centering
\tikzXtrue
  \iftikzX  
  \begin{tikzpicture}[font=\bf\sffamily\fontsize{8}{8}\selectfont]
  \def\SEQA{Figures/plotdata/risky4.fasta}
  \def\SEQB{Figures/plotdata/risky4_h3n2.fasta}
  \def\SEQC{Figures/plotdata/risky4_domtop.fasta}
  \def\SEQC{Figures/plotdata/risky4_h3n2_lowrisk.fasta}
  \def\LENA{550}
  \def\LENB{63}
  \def\LENC{286}
  \def\LENE{1}
  \def\LEND{550}
  \def\COLM{jet}
  \def\rndfileA{rndfile1.png}
  \def\rndfileB{rndfile2.png}
  \def\rndfileC{rndfile3.png} 

  \newcommand{\panelX}[2] {
    \begin{tikzpicture}[font=\bf\sffamily\fontsize{7}{7}\selectfont]
      \node[ ] (A) at (0,0) {
        \mnp{3.2in}{\begin{texshade}{#1}
            \shadingmode[accessible area]{functional}
            \hideallmatchpositions
            \rulersteps{1}
            \setfont{residues}{sf}{up}{bf}{tiny} 
            \setfont{numbering}{sf}{up}{bf}{tiny} 
            \setfont{names}{tt}{up}{bf}{small}
            \setfont{legend}{tt}{up}{bf}{scriptsize}
            \threshold[80]{50}
            \setends{1}{1..\LENA}
            \showruler{1}{top}
            \hideconsensus
            \shadeallresidues
            #2
          \end{texshade}}};
\node[] (B) at (A.north east) {  \mnp{3.5in}{      
          \begin{texshade}{#1}
            \shadingmode[hydropathy]{functional}
            \hideallmatchpositions
            \rulersteps{1}
            \setfont{residues}{sf}{up}{bf}{tiny} 
            \setfont{numbering}{sf}{up}{bf}{tiny} 
            \setfont{names}{tt}{up}{bf}{small}
            \setfont{legend}{tt}{up}{bf}{scriptsize}
            \threshold[80]{50}
            \setends{1}{1..\LENA}
            \showruler{1}{top}
            \hideconsensus
            \shadeallresidues
            #2
          \end{texshade}}};
    \end{tikzpicture}
    }

  \node[%
  ] (T1) at (0,0){  
    \begin{tikzpicture}
      \node [
      ]
      (A) at (0,0.0) {
        \mnp{\textwidth}{
          \begin{texshade}{\SEQC}
           \residuesperline*{70}
           \shadingmode[allmatchspecial]{identical}
            \shadingcolors{grays}
            \conservedresidues{Red1!40}{lightgray!10}{upper}{bf}
            \allmatchresidues{Red1!50}{lightgray!10}{upper}{bf}
            \nomatchresidues{white}{Red1!70}{upper}{bf}
            \setfont{residues}{sf}{up}{bf}{tiny} 
            \setfont{numbering}{sf}{up}{bf}{tiny} 
            \setfont{names}{tt}{up}{bf}{small}
            \setfont{legend}{tt}{up}{bf}{scriptsize}
            \setfont{features}{tt}{up}{bf}{scriptsize}
            \feature{top}{1}{\LENB..\LENC}{brace[black]}{RBD}
            \setends{1}{\LENE..\LEND}
            \showruler{1}{top}
            \hideconsensus
             \hidelegend
          \end{texshade}
        }};
    \end{tikzpicture}};

\end{tikzpicture}
 \else
  \includegraphics[width=.975\textwidth]{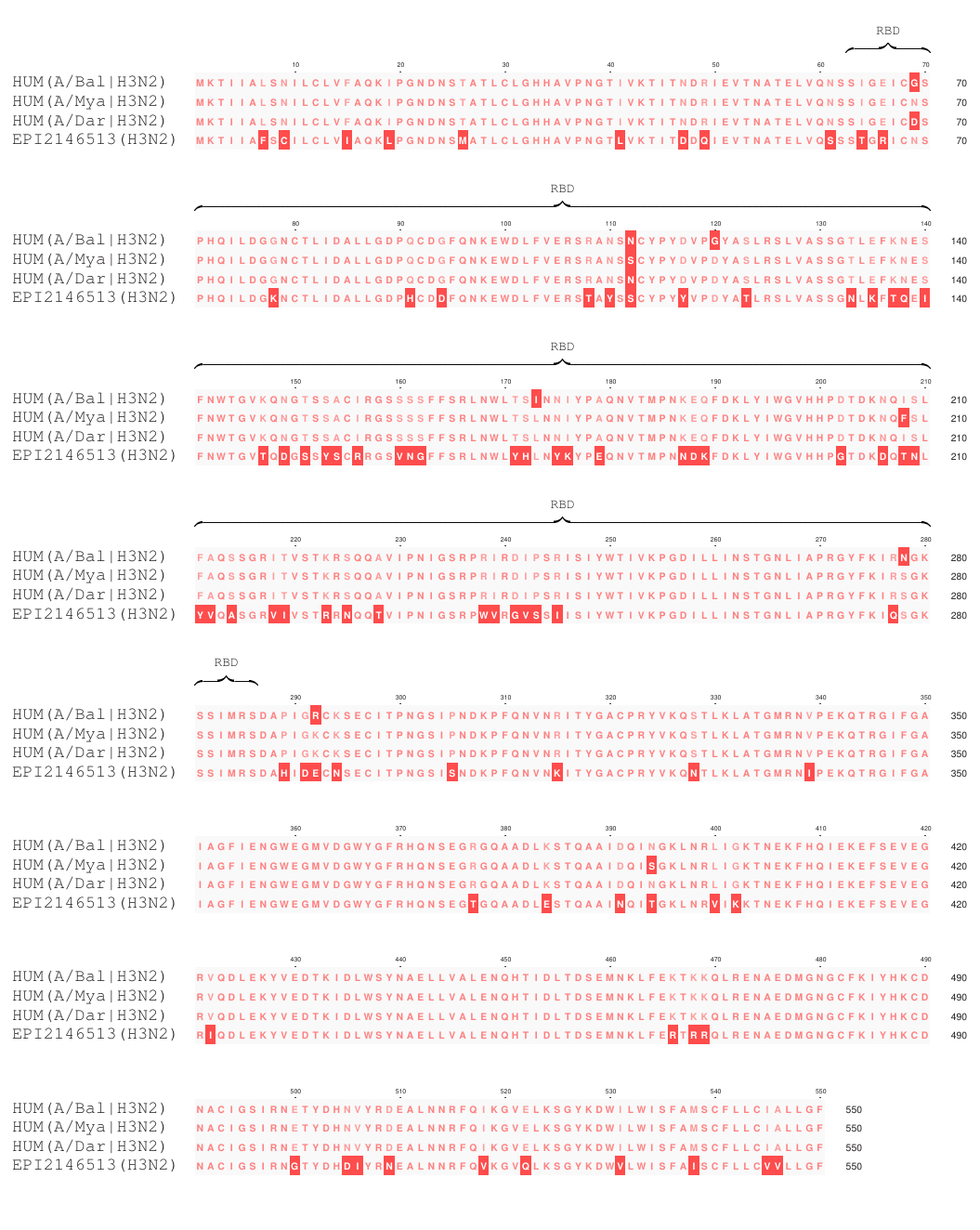}
  \fi 
  \vspace{-20pt}
  
\captionN{\textcolor{black}{\textbf{Low-risk animal strain comparison.} HA sequence comparison  with 2020-2021 dominant frequency human strains (A/Baltimore/JH/001/2021, A/Myanmar/I026/2021, A/Darwin/12/2021)  with \enet estimated relatively medium-risk H3N2 EPI2146849  (emergence risk score 6.5) showing  substantially more  differences compared to high-risk strain comaprison shown in \EXTENDED Fig.~\ref{figriskyseqlr}.}}\label{figriskyseqlr}
\end{figure}
\else
\refstepcounter{figure}\label{figriskyseqlr}
\fi

\ifFIGS
  
\begin{figure}[t]
  \tikzexternalenable
  \tikzsetnextfilename{riskyseq_h1n1}
  \centering
  \iftikzX  
  \begin{tikzpicture}[font=\bf\sffamily\fontsize{8}{8}\selectfont]
  \def\SEQA{Figures/plotdata/risky4.fasta}
  \def\SEQB{Figures/plotdata/risky4_h3n2.fasta}
  \def\SEQC{Figures/plotdata/risky4_domtop.fasta}
  \def\SEQC{Figures/plotdata/risky4_h1n1.fasta}
  \def\LENA{550}
  \def\LENB{63}
  \def\LENC{286}
  \def\LENE{1}
  \def\LEND{550}
  \def\COLM{jet}
  \def\rndfileA{rndfile1.png}
  \def\rndfileB{rndfile2.png}
  \def\rndfileC{rndfile3.png} 

  \newcommand{\panelX}[2] {
    \begin{tikzpicture}[font=\bf\sffamily\fontsize{7}{7}\selectfont]
      \node[ ] (A) at (0,0) {
        \mnp{3.2in}{\begin{texshade}{#1}
            \shadingmode[accessible area]{functional}
            \hideallmatchpositions
            \rulersteps{1}
            \setfont{residues}{sf}{up}{bf}{tiny} 
            \setfont{numbering}{sf}{up}{bf}{tiny} 
            \setfont{names}{tt}{up}{bf}{small}
            \setfont{legend}{tt}{up}{bf}{scriptsize}
            \threshold[80]{50}
            \setends{1}{1..\LENA}
            \showruler{1}{top}
            \hideconsensus
            \shadeallresidues
            #2
          \end{texshade}}};
\node[] (B) at (A.north east) {  \mnp{3.5in}{      
          \begin{texshade}{#1}
            \shadingmode[hydropathy]{functional}
            \hideallmatchpositions
            \rulersteps{1}
            \setfont{residues}{sf}{up}{bf}{tiny} 
            \setfont{numbering}{sf}{up}{bf}{tiny} 
            \setfont{names}{tt}{up}{bf}{small}
            \setfont{legend}{tt}{up}{bf}{scriptsize}
            \threshold[80]{50}
            \setends{1}{1..\LENA}
            \showruler{1}{top}
            \hideconsensus
            \shadeallresidues
            #2
          \end{texshade}}};
    \end{tikzpicture}
    }

  \node[%
  ] (T1) at (0,0){  
    \begin{tikzpicture}
      \node [
      ]
      (A) at (0,0.0) {
        \mnp{\textwidth}{
          \begin{texshade}{\SEQC}
           \residuesperline*{70}
           \shadingmode[allmatchspecial]{identical}
            \shadingcolors{grays}
            \conservedresidues{Red1!40}{lightgray!10}{upper}{bf}
            \allmatchresidues{Red1!50}{lightgray!10}{upper}{bf}
            \nomatchresidues{white}{Red1!70}{upper}{bf}
            \setfont{residues}{sf}{up}{bf}{tiny} 
            \setfont{numbering}{sf}{up}{bf}{tiny} 
            \setfont{names}{tt}{up}{bf}{small}
            \setfont{legend}{tt}{up}{bf}{scriptsize}
            \setfont{features}{tt}{up}{bf}{scriptsize}
            \feature{top}{1}{\LENB..\LENC}{brace[black]}{RBD}
            \setends{1}{\LENE..\LEND}
            \showruler{1}{top}
            \hideconsensus
             \hidelegend
          \end{texshade}
        }};
    \end{tikzpicture}};

\end{tikzpicture}
 \else
  \includegraphics[width=.975\textwidth]{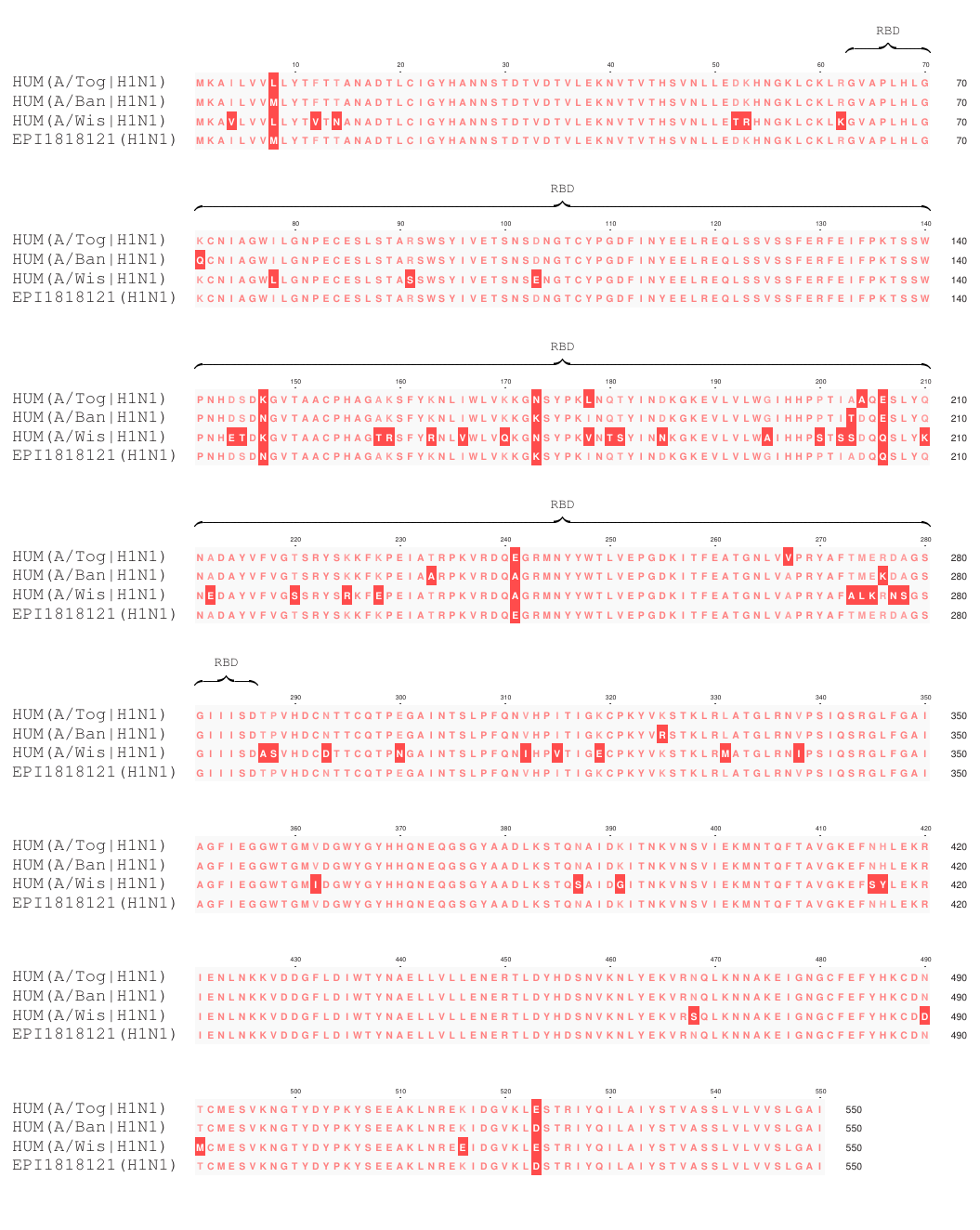}
  \fi 
  \vspace{-18pt}
  
\captionN{\textcolor{black}{\textbf{High-risk comparison.} HA sequence comparison  with dominant frequency human strains (A/Togo/0172/2021, A/Bangladesh/9004/2021,  A/Wisconsin/04/2021)  with \enet estimated top H1N1 risky strain EPI1818121  (2020-2022 April, emergence score 7.69) showing  differences  both in and out of the RBD.}}\label{figriskyseq4}
\end{figure}
\else
\refstepcounter{figure}\label{figriskyseq4}
\fi

\ifFIGS
  
\begin{figure}[t]
  \tikzexternalenable
  \tikzsetnextfilename{riskyseq_h1n1_lr}
  \centering
  \iftikzX  
  \begin{tikzpicture}[font=\bf\sffamily\fontsize{8}{8}\selectfont]
  \def\SEQA{Figures/plotdata/risky4.fasta}
  \def\SEQB{Figures/plotdata/risky4_h3n2.fasta}
  \def\SEQC{Figures/plotdata/risky4_domtop.fasta}
  \def\SEQC{Figures/plotdata/risky4_h1n1_lowrisk.fasta}
  \def\LENA{550}
  \def\LENB{63}
  \def\LENC{286}
  \def\LENE{1}
  \def\LEND{550}
  \def\COLM{jet}
  \def\rndfileA{rndfile1.png}
  \def\rndfileB{rndfile2.png}
  \def\rndfileC{rndfile3.png} 

  \newcommand{\panelX}[2] {
    \begin{tikzpicture}[font=\bf\sffamily\fontsize{7}{7}\selectfont]
      \node[ ] (A) at (0,0) {
        \mnp{3.2in}{\begin{texshade}{#1}
            \shadingmode[accessible area]{functional}
            \hideallmatchpositions
            \rulersteps{1}
            \setfont{residues}{sf}{up}{bf}{tiny} 
            \setfont{numbering}{sf}{up}{bf}{tiny} 
            \setfont{names}{tt}{up}{bf}{small}
            \setfont{legend}{tt}{up}{bf}{scriptsize}
            \threshold[80]{50}
            \setends{1}{1..\LENA}
            \showruler{1}{top}
            \hideconsensus
            \shadeallresidues
            #2
          \end{texshade}}};
\node[] (B) at (A.north east) {  \mnp{3.5in}{      
          \begin{texshade}{#1}
            \shadingmode[hydropathy]{functional}
            \hideallmatchpositions
            \rulersteps{1}
            \setfont{residues}{sf}{up}{bf}{tiny} 
            \setfont{numbering}{sf}{up}{bf}{tiny} 
            \setfont{names}{tt}{up}{bf}{small}
            \setfont{legend}{tt}{up}{bf}{scriptsize}
            \threshold[80]{50}
            \setends{1}{1..\LENA}
            \showruler{1}{top}
            \hideconsensus
            \shadeallresidues
            #2
          \end{texshade}}};
    \end{tikzpicture}
    }

  \node[%
  ] (T1) at (0,0){  
    \begin{tikzpicture}
      \node [
      ]
      (A) at (0,0) {
        \mnp{\textwidth}{
          \begin{texshade}{\SEQC}
           \residuesperline*{70}
           \shadingmode[allmatchspecial]{identical}
            \shadingcolors{grays}
            \conservedresidues{Red1!40}{lightgray!10}{upper}{bf}
            \allmatchresidues{Red1!50}{lightgray!10}{upper}{bf}
            \nomatchresidues{white}{Red1!70}{upper}{bf}
            \setfont{residues}{sf}{up}{bf}{tiny} 
            \setfont{numbering}{sf}{up}{bf}{tiny} 
            \setfont{names}{tt}{up}{bf}{small}
            \setfont{legend}{tt}{up}{bf}{scriptsize}
            \setfont{features}{tt}{up}{bf}{scriptsize}
            \feature{top}{1}{\LENB..\LENC}{brace[black]}{RBD}
            \setends{1}{\LENE..\LEND}
            \showruler{1}{top}
            \hideconsensus
             \hidelegend
          \end{texshade}
        }};
    \end{tikzpicture}};

\end{tikzpicture}
 \else
  \includegraphics[width=.975\textwidth]{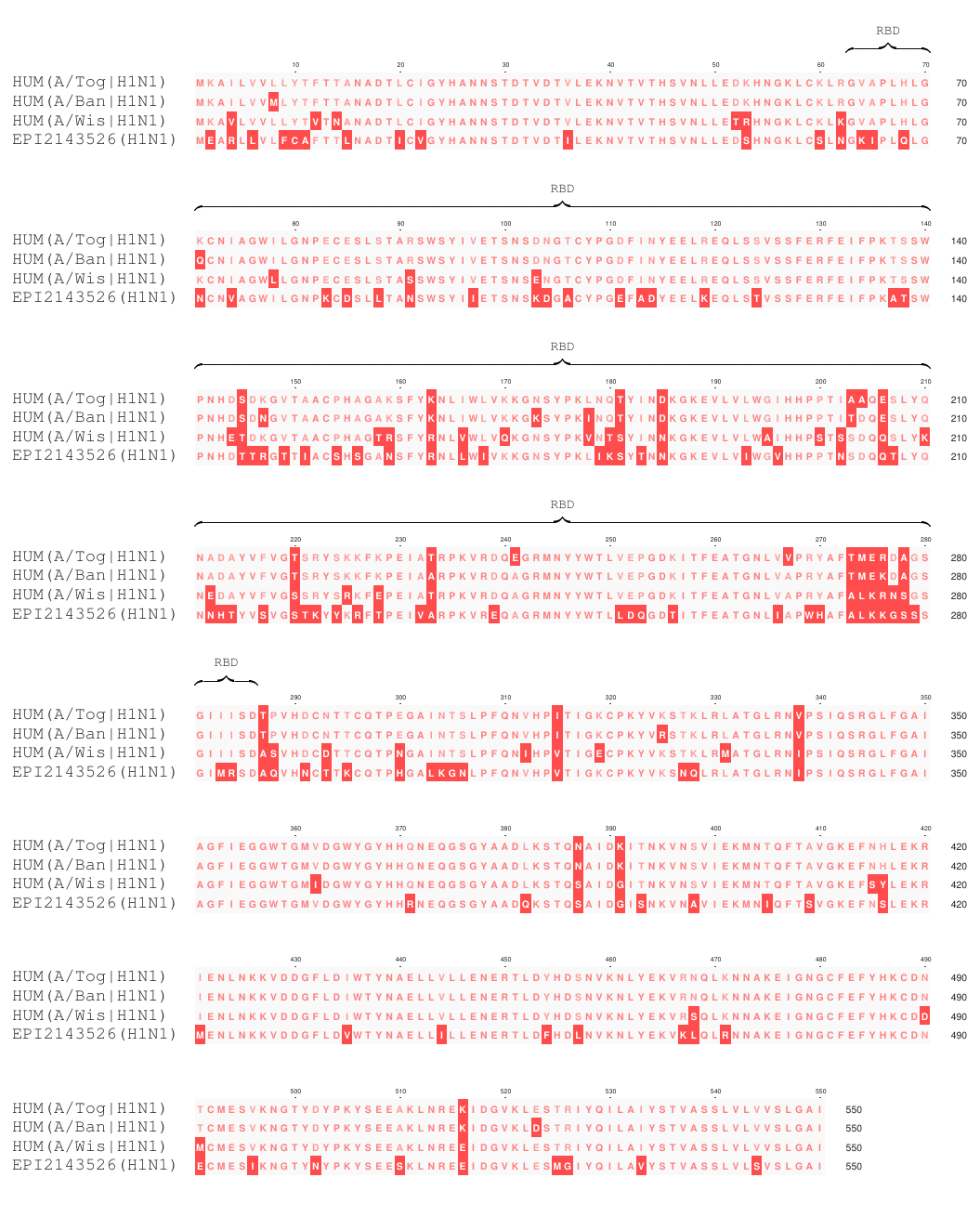}
  \fi 
  \vspace{-18pt}
  
\captionN{\textcolor{black}{\textbf{Low-risk comparison.} HA sequence comparison  with dominant frequency human strains (A/Togo/0172/2021, A/Bangladesh/9004/2021,  A/Wisconsin/04/2021)  with \enet estimated  H1N1 low-risk strain EPI2143526 (emergence score 5.66) showing  substantially more differences compared to \EXTENDED Fig.~\ref{figriskyseq4}.}}\label{figriskyseq4lr}
\end{figure}
\else
\refstepcounter{figure}\label{figriskyseq4lr}
\fi

\ifFIGS
\begin{figure}[t]
    \centering    
    \tikzexternalenable  
    \tikzsetnextfilename{blastvalid}
    \iftikzX 
    \input{Figures/sonet}   
    \vspace{-5pt}    
    \else 
    \includegraphics[width=0.9\textwidth]{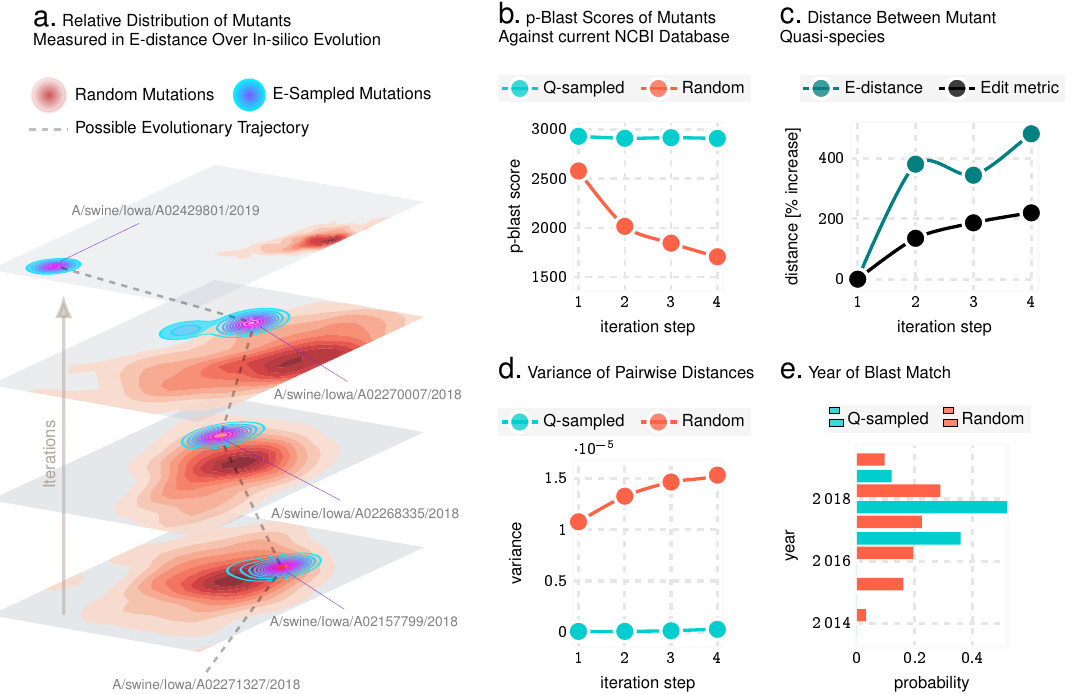}
    \fi
    \captionN{\textbf{\qdist validation in-silico using Influenza A sequences from NCBI database. Panel a} illustrates that the \enet induced modeling of evolutionary trajectories initiated from known haemagglutinnin (HA)  sequences  are distinct from random paths in the strain space. In particular, random trajectories have more variance, and more importantly, diverge to different regions of the landscape compared to \enet predictions. \textbf{Panels b-e} show that unconstrained Q-sampling  produces sequences maintain a higher degree of similarity to known sequences, as verified by blasting against known HA sequences, have a smaller rate of growth of variance, and produce matches in closer time frames to the initial sequence. \textbf{Panel c} shows that this is not due to simply restricting the  mutational variations, which increases rapidly in both the \enet and the classical metric.}\label{figsoa}
\end{figure}
\else
\refstepcounter{figure}\label{figsoa}
\fi

\vskip 5em

\begin{figure}[t]
  \centering    
  \includegraphics[width=\textwidth]{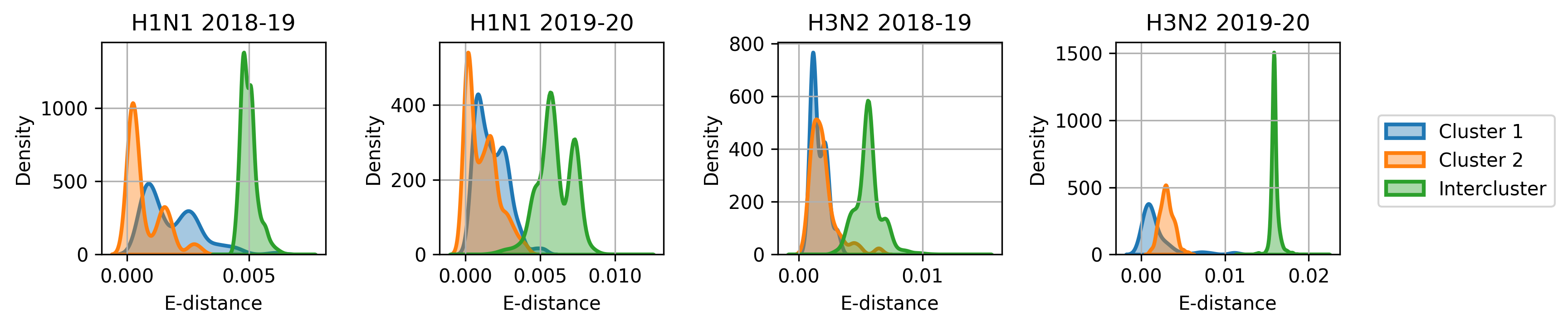}
  \captionN{\textbf{Sampling effects on \qdist.} We trained 100 \qnet models the HA segment with different random samples of 3000 strains. We sample a pair of random strains from each of the two largest clusters and compute their \qdist under each \qnet model. There is little to no overlap between the same cluster E-distances and the inter-cluster distances.}\label{randomsampling}
\end{figure}

\clearpage

\begin{table}[H]
    \centering
    \captionN{IRAT predictions that broadly corroborated with outbreaks}\label{iratoutbreak}\centering
    \sffamily\fontsize{8}{8}\selectfont
    \begin{tabular}{L{1.5in}|L{.45in}|L{.6in}|L{.55in}|L{2.7in}}\hline
        \rowcolor{lightgray!50}Influenza Virus & Subtype & IRAT Emergence Score & IRAT Impact Score & Description \\\hline
        A/swine/Shandong/1207/2016 & H1N1 & 7.5 & 6.9 & Cases of ILI (Influenza-like-illness) saw a substantial increase in weekly cases in the Shandong province of China during the years 2016-2017 when compared to previous years~\cite{yin2023}. This strain was also chosen as a representative strain in a study by Sun, et al. which found that its potential for human infectivity ``greatly enhances the opportunity for virus adaptation in humans and raises concerns for the possible generation of pandemic viruses"~\cite{sun2020}. \\\hline
        A/Ohio/13/2017 & H3N2 & 6.6 & 5.8 & The CDC reports that flu activity in the United States during the 2017–2018 season started to increase in November and was dominated by Influenza A (H3N2) viruses through February~\cite{influenza2017-28}.\\\hline
        A/Hong Kong/125/2017 & H7N9 & 6.5 & 7.5 & The CDC reports that "during March 31, 2013–August 7, 2017, a total of 1,557 human infections with Asian H7N9 viruses were reported; at least 605 (39\%) of these infections resulted in death." 759 of these infections were reported during the fifth epidemic (October 1, 2016–August 7, 2017)~\cite{cdc2017}. \\\hline
    \end{tabular}
\end{table}

\vskip 5em

\ifFIGS
\begin{figure}[H]
    \centering
    \tikzexternalenable
    \tikzsetnextfilename{dom}
    \iftikzX
    \begin{tikzpicture}[font=\bf\sffamily\fontsize{8}{8}\selectfont]
\def\DATAA{Figures/plotdata/h1n1humanHA_dom_ldist.dat}
\def\DATAB{Figures/plotdata/h1n1humanNA_dom_ldist.dat}
\def\DATAC{Figures/plotdata/h3n2humanHA_dom_ldist.dat}
\def\DATAD{Figures/plotdata/h3n2humanNA_dom_ldist.dat}

\node[] (A) at (0,0) {
\begin{tikzpicture}
\begin{groupplot}[group style={
        group size=2 by 2,
        xlabels at=edge bottom,
        xticklabels at=edge bottom,
        vertical sep=0.2in, horizontal sep=.35in,
    },height=2in,width=2in]

\nextgroupplot[ylabel=probability,ylabel style={xshift=-.8in,yshift=-.1in}]
\addplot [
    hist={
        bins=50,
        data min=0,
        data max=30,
density=true
    }  , fill=black!50
] table [y index=0] {\DATAA};
\addlegendentry{HA H1N1}

\nextgroupplot[]
\addplot [
    hist={
        bins=50,
        data min=0,
        data max=30,
density=true
    }    , fill=black!50
] table [y index=0] {\DATAB};
\addlegendentry{NA H1N1}

\nextgroupplot[xlabel=Edit Distance from Dominant Strain,xlabel style={xshift=.8in}]
\addplot [
    hist={
        bins=50,
        data min=0,
        data max=30,
density=true
    }    , fill=black!50
] table [y index=0] {\DATAC};
\addlegendentry{HA H3N2}

\nextgroupplot[]
\addplot [
    hist={
        bins=50,
        data min=0,
        data max=30,
density=true
    }    , fill=black!50
] table [y index=0] {\DATAD};
\addlegendentry{NA H3N2}

\end{groupplot}
\end{tikzpicture}
};

\node [anchor=south west] (L1) at (A.north west) {{\large a.} Distribution around dominant strain};

\end{tikzpicture}  
    \else
    \includegraphics[width=.6\textwidth]{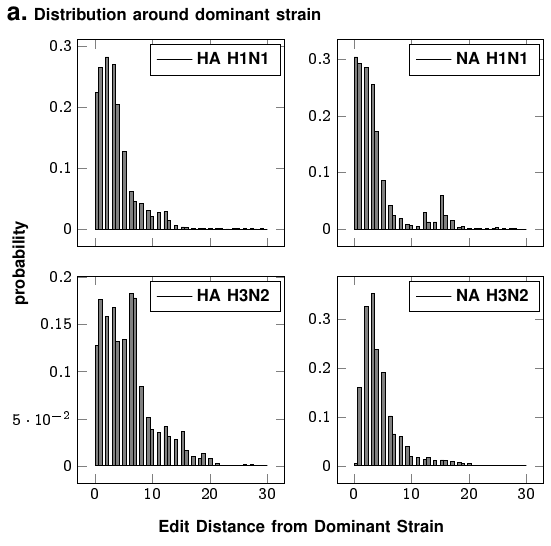}
    \fi
    \captionN{\textbf{No. of mutations from a seasonal dominant (maximal frequency) strain over the years} The quasispecies that circulates each season for each sub-type is tightly distributed around the maximal strains on average.}\label{figdom}
\end{figure}
\else
\refstepcounter{figure}\label{figdom}
\fi

\clearpage

\ifFIGS
\begin{figure}[H]
  \tikzexternalenable
  \tikzsetnextfilename{straindist}
  \centering
 \iftikzX
 \begin{tikzpicture}
   \node[] (A) at (0,0) {   \begin{tikzpicture}

\definecolor{darkgray176}{RGB}{176,176,176}
\definecolor{darkolivegreen85850}{RGB}{85,85,0}
\def\OPC{.4}

\begin{axis}[
tick align=outside,
tick pos=left,
x grid style={darkgray176},
xlabel={emergence score},
xmin=2.73363774372318, xmax=8.2507791550608,
xtick style={color=black},
y grid style={darkgray176},
ylabel={probability density},
ymin=0, ymax=1.8, xmax=8,
ytick style={color=black}
]
\draw[draw=white,fill=darkolivegreen85850,opacity=\OPC] (axis cs:2.98441689878398,0) rectangle (axis cs:3.12008345555513,0.0290014689244097);
\draw[draw=white,fill=darkolivegreen85850,opacity=\OPC] (axis cs:3.12008345555513,0) rectangle (axis cs:3.25575001232628,0.107885464398804);
\draw[draw=white,fill=darkolivegreen85850,opacity=\OPC] (axis cs:3.25575001232628,0) rectangle (axis cs:3.39141656909742,0.04176211525115);
\draw[draw=white,fill=darkolivegreen85850,opacity=\OPC] (axis cs:3.39141656909742,0) rectangle (axis cs:3.52708312586857,0.105565346884851);
\draw[draw=white,fill=darkolivegreen85850,opacity=\OPC] (axis cs:3.52708312586857,0) rectangle (axis cs:3.66274968263972,0.0974449355860166);
\draw[draw=white,fill=darkolivegreen85850,opacity=\OPC] (axis cs:3.66274968263972,0) rectangle (axis cs:3.79841623941087,0.0127606463267403);
\draw[draw=white,fill=darkolivegreen85850,opacity=\OPC] (axis cs:3.79841623941087,0) rectangle (axis cs:3.93408279618202,0.00812041129883469);
\draw[draw=white,fill=darkolivegreen85850,opacity=\OPC] (axis cs:3.93408279618202,0) rectangle (axis cs:4.06974935295317,0.00696035254185835);
\draw[draw=white,fill=darkolivegreen85850,opacity=\OPC] (axis cs:4.06974935295317,0) rectangle (axis cs:4.20541590972432,0.0719236429325361);
\draw[draw=white,fill=darkolivegreen85850,opacity=\OPC] (axis cs:4.20541590972432,0) rectangle (axis cs:4.34108246649547,0.0533627028209135);
\draw[draw=white,fill=darkolivegreen85850,opacity=\OPC] (axis cs:4.34108246649547,0) rectangle (axis cs:4.47674902326662,0.131086639538332);
\draw[draw=white,fill=darkolivegreen85850,opacity=\OPC] (axis cs:4.47674902326662,0) rectangle (axis cs:4.61241558003776,0.0962848768290403);
\draw[draw=white,fill=darkolivegreen85850,opacity=\OPC] (axis cs:4.61241558003776,0) rectangle (axis cs:4.74808213680891,0.153127755920883);
\draw[draw=white,fill=darkolivegreen85850,opacity=\OPC] (axis cs:4.74808213680891,0) rectangle (axis cs:4.88374869358006,0.416461093754524);
\draw[draw=white,fill=darkolivegreen85850,opacity=\OPC] (axis cs:4.88374869358006,0) rectangle (axis cs:5.01941525035121,0.048722467793008);
\draw[draw=white,fill=darkolivegreen85850,opacity=\OPC] (axis cs:5.01941525035121,0) rectangle (axis cs:5.15508180712236,0.0533627028209139);
\draw[draw=white,fill=darkolivegreen85850,opacity=\OPC] (axis cs:5.15508180712236,0) rectangle (axis cs:5.29074836389351,0.121806169482521);
\draw[draw=white,fill=darkolivegreen85850,opacity=\OPC] (axis cs:5.29074836389351,0) rectangle (axis cs:5.42641492066466,0.141527168351119);
\draw[draw=white,fill=darkolivegreen85850,opacity=\OPC] (axis cs:5.42641492066466,0) rectangle (axis cs:5.56208147743581,0.235491927666207);
\draw[draw=white,fill=darkolivegreen85850,opacity=\OPC] (axis cs:5.56208147743581,0) rectangle (axis cs:5.69774803420696,0.116005875697639);
\draw[draw=white,fill=darkolivegreen85850,opacity=\OPC] (axis cs:5.69774803420696,0) rectangle (axis cs:5.8334145909781,0.0858443480162528);
\draw[draw=white,fill=darkolivegreen85850,opacity=\OPC] (axis cs:5.8334145909781,0) rectangle (axis cs:5.96908114774925,0.124126286996474);
\draw[draw=white,fill=darkolivegreen85850,opacity=\OPC] (axis cs:5.96908114774925,0) rectangle (axis cs:6.1047477045204,0.183289283602269);
\draw[draw=white,fill=darkolivegreen85850,opacity=\OPC] (axis cs:6.1047477045204,0) rectangle (axis cs:6.24041426129155,0.712276076783503);
\draw[draw=white,fill=darkolivegreen85850,opacity=\OPC] (axis cs:6.24041426129155,0) rectangle (axis cs:6.3760808180627,0.520866381882399);
\draw[draw=white,fill=darkolivegreen85850,opacity=\OPC] (axis cs:6.3760808180627,0) rectangle (axis cs:6.51174737483385,0.390939801101043);
\draw[draw=white,fill=darkolivegreen85850,opacity=\OPC] (axis cs:6.51174737483385,0) rectangle (axis cs:6.647413931605,1.06957417393222);
\draw[draw=white,fill=darkolivegreen85850,opacity=\OPC] (axis cs:6.647413931605,0) rectangle (axis cs:6.78308048837615,0.786519837229991);
\draw[draw=white,fill=darkolivegreen85850,opacity=\OPC] (axis cs:6.78308048837615,0) rectangle (axis cs:6.9187470451473,0.5846696135161);
\draw[draw=white,fill=darkolivegreen85850,opacity=\OPC] (axis cs:6.9187470451473,0) rectangle (axis cs:7.05441360191845,0.417621152511497);
\draw[draw=white,fill=darkolivegreen85850,opacity=\OPC] (axis cs:7.05441360191845,0) rectangle (axis cs:7.19008015868959,0.142687227108096);
\draw[draw=white,fill=darkolivegreen85850,opacity=\OPC] (axis cs:7.19008015868959,0) rectangle (axis cs:7.32574671546074,0.0881644655302055);
\draw[draw=white,fill=darkolivegreen85850,opacity=\OPC] (axis cs:7.32574671546074,0) rectangle (axis cs:7.46141327223189,0.100925111856946);
\draw[draw=white,fill=darkolivegreen85850,opacity=\OPC] (axis cs:7.46141327223189,0) rectangle (axis cs:7.59707982900304,0.0220411163825514);
\draw[draw=white,fill=darkolivegreen85850,opacity=\OPC] (axis cs:7.59707982900304,0) rectangle (axis cs:7.73274638577419,0.0928047005581117);
\addplot [ultra thick, blue]
table {%
3 1.64282096522451e-96
3.05050505050505 2.13350411808735e-92
3.1010101010101 1.93728372410951e-88
3.15151515151515 1.24829966781793e-84
3.2020202020202 5.78826832604809e-81
3.25252525252525 1.95719403794843e-77
3.3030303030303 4.88675787582173e-74
3.35353535353535 9.11740099094306e-71
3.4040404040404 1.28552385932322e-67
3.45454545454545 1.38450583039176e-64
3.50505050505051 1.15061152129626e-61
3.55555555555556 7.45036072034177e-59
3.60606060606061 3.79342139888932e-56
3.65656565656566 1.53210437455164e-53
3.70707070707071 4.9495608801192e-51
3.75757575757576 1.28918043010318e-48
3.80808080808081 2.72782927858905e-46
3.85858585858586 4.72295935063286e-44
3.90909090909091 6.73751669880276e-42
3.95959595959596 7.97141193207241e-40
4.01010101010101 7.87145578540114e-38
4.06060606060606 6.52641164025328e-36
4.11111111111111 4.569784231891e-34
4.16161616161616 2.71715330348933e-32
4.21212121212121 1.37918981770429e-30
4.26262626262626 6.00653234623986e-29
4.31313131313131 2.2553870818693e-27
4.36363636363636 7.33562853810856e-26
4.41414141414141 2.07592805602813e-24
4.46464646464646 5.13343574595906e-23
4.51515151515152 1.11381544040339e-21
4.56565656565657 2.12885422607899e-20
4.61616161616162 3.59797258070855e-19
4.66666666666667 5.39681111755655e-18
4.71717171717172 7.20961106662641e-17
4.76767676767677 8.60700328790825e-16
4.81818181818182 9.21240536577935e-15
4.86868686868687 8.86825516890493e-14
4.91919191919192 7.70123962543382e-13
4.96969696969697 6.05070581110215e-12
5.02020202020202 4.3131498569394e-11
5.07070707070707 2.79706716925316e-10
5.12121212121212 1.65449820423863e-09
5.17171717171717 8.94911910313761e-09
5.22222222222222 4.43713273894592e-08
5.27272727272727 2.02140776260153e-07
5.32323232323232 8.48049616152464e-07
5.37373737373737 3.28365528498611e-06
5.42424242424242 1.17593697626404e-05
5.47474747474747 3.90293540148479e-05
5.52525252525253 0.000120293128544033
5.57575757575758 0.000344957505106087
5.62626262626263 0.000922086976098236
5.67676767676768 0.0023016536514041
5.72727272727273 0.00537434149537648
5.77777777777778 0.0117586986667198
5.82828282828283 0.0241462968361667
5.87878787878788 0.0466106942029483
5.92929292929293 0.0847090172596881
5.97979797979798 0.145153536434966
6.03030303030303 0.234857667448328
6.08080808080808 0.359308387954052
6.13131313131313 0.520479487288689
6.18181818181818 0.714799395089476
6.23232323232323 0.931886042652547
6.28282828282828 1.15472091893308
6.33333333333333 1.36159739994157
6.38383838383838 1.52962369454664
6.43434343434343 1.63899222320363
6.48484848484848 1.6768867376563
6.53535353535354 1.63994659336175
6.58585858585859 1.5346393478967
6.63636363636364 1.37553766532227
6.68686868686869 1.18210177546558
6.73737373737374 0.974918684166032
6.78787878787879 0.772354353666207
6.83838383838384 0.588290767678703
6.88888888888889 0.431197881010828
6.93939393939394 0.304398147747119
6.98989898989899 0.207133295097946
7.04040404040404 0.135972930000807
7.09090909090909 0.0861772140575356
7.14141414141414 0.0527720204722758
7.19191919191919 0.0312472207072691
7.24242424242424 0.0179032619138057
7.29292929292929 0.00993286589046611
7.34343434343434 0.00533997536086925
7.39393939393939 0.00278367926314787
7.44444444444444 0.00140799128045752
7.49494949494949 0.000691448679172268
7.54545454545454 0.000329891080314846
7.5959595959596 0.00015300178580084
7.64646464646465 6.90233215071097e-05
7.6969696969697 3.03054272413675e-05
7.74747474747475 1.29573379449716e-05
7.7979797979798 5.3978600549771e-06
7.84848484848485 2.19215753380118e-06
7.8989898989899 8.68349228741695e-07
7.94949494949495 3.35669873767083e-07
8 1.2669020596184e-07
};
\addplot [ultra thick, red, dashed]
table {%
7.1050425349515 0
7.1050425349515 1.8
} node [rotate=90, anchor=north, xshift=-1in,align=center] {95\% significance: 7.1,\\p-value: $7\times 10^{-3}$};
\end{axis}

\end{tikzpicture}};
   \node[anchor=south west] (B) at ([xshift=.20in]A.south east) {\begin{tikzpicture}

\definecolor{darkgray176}{RGB}{176,176,176}
\definecolor{darkolivegreen85850}{RGB}{85,85,0}
\def\OPC{.4}

\begin{axis}[
tick align=outside,
tick pos=left,
x grid style={darkgray176},
xlabel={emergence score},
xmin=3.8, xmax=8.2,
xtick style={color=black},
y grid style={darkgray176},
ylabel={probability density},
ymin=0, ymax=2.7,
xmin=6,xmax=8,
ytick style={color=black}
]
\draw[draw=white,fill=darkolivegreen85850,opacity=\OPC] (axis cs:6.25864263504708,0) rectangle (axis cs:6.33234782258344,0.737367720014954);
\draw[draw=white,fill=darkolivegreen85850,opacity=\OPC] (axis cs:6.33234782258344,0) rectangle (axis cs:6.4060530101198,0.147473544002989);
\draw[draw=white,fill=darkolivegreen85850,opacity=\OPC] (axis cs:6.40605301011979,0) rectangle (axis cs:6.47975819765615,0.589894176011963);
\draw[draw=white,fill=darkolivegreen85850,opacity=\OPC] (axis cs:6.47975819765615,0) rectangle (axis cs:6.55346338519251,1.03231480802092);
\draw[draw=white,fill=darkolivegreen85850,opacity=\OPC] (axis cs:6.55346338519251,0) rectangle (axis cs:6.62716857272886,0.884841264017944);
\draw[draw=white,fill=darkolivegreen85850,opacity=\OPC] (axis cs:6.62716857272886,0) rectangle (axis cs:6.70087376026522,1.76968252803587);
\draw[draw=white,fill=darkolivegreen85850,opacity=\OPC] (axis cs:6.70087376026522,0) rectangle (axis cs:6.77457894780157,1.03231480802094);
\draw[draw=white,fill=darkolivegreen85850,opacity=\OPC] (axis cs:6.77457894780157,0) rectangle (axis cs:6.84828413533793,4.57167986409266);
\draw[draw=white,fill=darkolivegreen85850,opacity=\OPC] (axis cs:6.84828413533793,0) rectangle (axis cs:6.92198932287428,0.589894176011963);
\draw[draw=white,fill=darkolivegreen85850,opacity=\OPC] (axis cs:6.92198932287428,0) rectangle (axis cs:6.99569451041064,0.442420632008967);
\draw[draw=white,fill=darkolivegreen85850,opacity=\OPC] (axis cs:6.99569451041064,0) rectangle (axis cs:7.06939969794699,0);
\draw[draw=white,fill=darkolivegreen85850,opacity=\OPC] (axis cs:7.06939969794699,0) rectangle (axis cs:7.14310488548335,0);
\draw[draw=white,fill=darkolivegreen85850,opacity=\OPC] (axis cs:7.14310488548335,0) rectangle (axis cs:7.2168100730197,0.294947088005978);
\draw[draw=white,fill=darkolivegreen85850,opacity=\OPC] (axis cs:7.2168100730197,0) rectangle (axis cs:7.29051526055606,0);
\draw[draw=white,fill=darkolivegreen85850,opacity=\OPC] (axis cs:7.29051526055606,0) rectangle (axis cs:7.36422044809241,0.147473544002989);
\draw[draw=white,fill=darkolivegreen85850,opacity=\OPC] (axis cs:7.36422044809241,0) rectangle (axis cs:7.43792563562877,0.442420632008972);
\draw[draw=white,fill=darkolivegreen85850,opacity=\OPC] (axis cs:7.43792563562877,0) rectangle (axis cs:7.51163082316512,0);
\draw[draw=white,fill=darkolivegreen85850,opacity=\OPC] (axis cs:7.51163082316512,0) rectangle (axis cs:7.58533601070148,0);
\draw[draw=white,fill=darkolivegreen85850,opacity=\OPC] (axis cs:7.58533601070148,0) rectangle (axis cs:7.65904119823783,0.294947088005978);
\draw[draw=white,fill=darkolivegreen85850,opacity=\OPC] (axis cs:7.65904119823783,0) rectangle (axis cs:7.73274638577419,0.589894176011963);
\addplot [ultra thick, blue]
table {%
4 2.48461515655749e-58
4.04040404040404 1.35950110429041e-56
4.08080808080808 6.99930189466505e-55
4.12121212121212 3.39066289707433e-53
4.16161616161616 1.5455007803732e-51
4.2020202020202 6.62839788639855e-50
4.24242424242424 2.67486974388844e-48
4.28282828282828 1.01566721343991e-46
4.32323232323232 3.62873250473985e-45
4.36363636363636 1.2198690237974e-43
4.4040404040404 3.85856817991897e-42
4.44444444444444 1.14840184196259e-40
4.48484848484848 3.2160019915496e-39
4.52525252525253 8.47409765284221e-38
4.56565656565657 2.10099658477941e-36
4.60606060606061 4.9013075561697e-35
4.64646464646465 1.07585388090113e-33
4.68686868686869 2.22202710913801e-32
4.72727272727273 4.31817403945319e-31
4.76767676767677 7.89597305964355e-30
4.80808080808081 1.35851957439073e-28
4.84848484848485 2.19928174603079e-27
4.88888888888889 3.35004441968198e-26
4.92929292929293 4.80147849717136e-25
4.96969696969697 6.47521417399091e-24
5.01010101010101 8.21652183814397e-23
5.05050505050505 9.81017188517657e-22
5.09090909090909 1.1020973202597e-20
5.13131313131313 1.16497874777579e-19
5.17171717171717 1.1586994101632e-18
5.21212121212121 1.08437200191155e-17
5.25252525252525 9.54861828385559e-17
5.29292929292929 7.91147564764875e-16
5.33333333333333 6.16778442097801e-15
5.37373737373737 4.52434383508725e-14
5.41414141414141 3.12274661282383e-13
5.45454545454546 2.02802178656815e-12
5.49494949494949 1.23926222898549e-11
5.53535353535354 7.12538830976495e-11
5.57575757575758 3.85485979866049e-10
5.61616161616162 1.96229078578296e-09
5.65656565656566 9.39880996113549e-09
5.6969696969697 4.23581612350895e-08
5.73737373737374 1.79620564341908e-07
5.77777777777778 7.16687352839029e-07
5.81818181818182 2.6906558433686e-06
5.85858585858586 9.50476293171839e-06
5.8989898989899 3.15921478690733e-05
5.93939393939394 9.88033669799871e-05
5.97979797979798 0.00029074955744319
6.02020202020202 0.000805046760761499
6.06060606060606 0.00209738364997321
6.1010101010101 0.0051414945072321
6.14141414141414 0.0118592056693428
6.18181818181818 0.0257381040591505
6.22222222222222 0.0525596216881004
6.26262626262626 0.100990991398713
6.3030303030303 0.182586123107264
6.34343434343434 0.310604415765306
6.38383838383838 0.497166904595274
6.42424242424242 0.748775322239817
6.46464646464647 1.06109815971892
6.50505050505051 1.41486279529048
6.54545454545455 1.7751204153991
6.58585858585859 2.09554039797798
6.62626262626263 2.32765699708683
6.66666666666667 2.43274537443365
6.70707070707071 2.39237389309643
6.74747474747475 2.21368682475444
6.78787878787879 1.92733865947262
6.82828282828283 1.57889995374277
6.86868686868687 1.21704298652965
6.90909090909091 0.882697610008487
6.94949494949495 0.602383004303315
6.98989898989899 0.38680154161365
7.03030303030303 0.233699836861674
7.07070707070707 0.132856665732593
7.11111111111111 0.0710661871494866
7.15151515151515 0.0357682266120685
7.19191919191919 0.0169389516436896
7.23232323232323 0.00754797364022816
7.27272727272727 0.00316467377483218
7.31313131313131 0.00124848189334709
7.35353535353535 0.000463436542030452
7.39393939393939 0.000161865028665661
7.43434343434343 5.31949627080015e-05
7.47474747474747 1.64491217753463e-05
7.51515151515152 4.78596644608046e-06
7.55555555555556 1.31024137629966e-06
7.5959595959596 3.37510817713263e-07
7.63636363636364 8.18048045018801e-08
7.67676767676768 1.86562628425096e-08
7.71717171717172 4.00336528761438e-09
7.75757575757576 8.08314767690916e-10
7.7979797979798 1.53564395386765e-10
7.83838383838384 2.74508174556529e-11
7.87878787878788 4.61715820944807e-12
7.91919191919192 7.30716572598043e-13
7.95959595959596 1.08812273647639e-13
8 1.52461974384175e-14
};
\addplot [ultra thick, red, dashed]
table {%
7.1 0
7.1 3
}node [rotate=90, anchor=north, xshift=-1in,align=center] {95\% significance: 7.1,\\p-value: $6\times 10^{-3}$};
\end{axis}

\end{tikzpicture}};

   \node[anchor=south west,align=left] at (A.north west) {{\Large a.} All animal strains \\ (normal dist. fitted to values $> 6.25$)};
   \node[anchor=south west] at (B.north west) {{\Large b.} Strains $> 6.25$
unique up to 15 edits};
   \end{tikzpicture}
 \else
  \includegraphics[width=\textwidth]{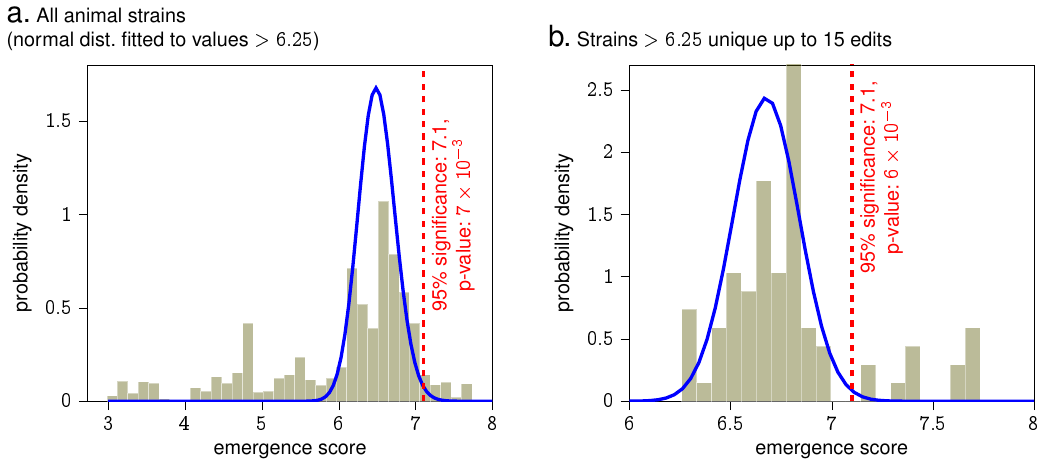}
  \fi 
  \vspace{-15pt}
  
    \captionN{\textcolor{black}{\textbf{Distribution of \enet predicted emergence scores for animal strains.} \textbf{Panel a} We plot the computed emergence risk score distribution of $6,354$ animal strains collected post-2020. \textbf{Panel b} We  show the distribution of computed emergence scores only for strains which are unique up to 15 edits in their HA sequence. Note that the right tail, where the high risk strains lie, are extreme occurrences in a statistically significant sense with p-values lower that 0.01.  %
      }}\label{animalfrequency}
\end{figure}
\else
\refstepcounter{figure}\label{animalfrequency}
\fi

\vskip 5em

\begin{figure}[H]
  \tikzexternalenable
  \tikzsetnextfilename{varianttime}
  \centering
 \tikzXtrue
  \begin{tikzpicture}

\definecolor{black17}{RGB}{17,17,17}
\definecolor{darkcyan}{RGB}{0,139,139}
\definecolor{darkgoldenrod}{RGB}{184,134,11}
\definecolor{darkorchid}{RGB}{153,50,204}
\definecolor{gold}{RGB}{25,215,200}
\definecolor{lightgray203}{RGB}{203,203,203}
\definecolor{lightgray204}{RGB}{204,204,204}
\definecolor{limegreen}{RGB}{50,175,50}
\definecolor{orangered}{RGB}{255,69,0}
\definecolor{slateblue}{RGB}{106,90,205}
\definecolor{steelblue}{RGB}{70,130,180}
\definecolor{whitesmoke240}{RGB}{240,240,240}

\begin{axis}[
axis background/.style={fill=gray!1},
axis line style={black, semithick,},
legend cell align={left},
legend style={
  fill opacity=0.80,
  draw opacity=1,
  text opacity=1,
  at={(1.1,1)},
  anchor=north west,
  draw=lightgray204,
  fill=whitesmoke240
},
height=3in,
width=3in,
tick align=outside,
tick pos=left,
x grid style={lightgray203},
xlabel={time to collection [months]},
xmajorgrids,
xmin=-0.2, xmax=4.2,
xtick style={color=black},
xtick={-1,0,1,2,3,4,5},
xticklabels={0,12,6,3,1,0,},
y grid style={lightgray203},
ylabel={emergence score},
ymajorgrids,
ymin=5.33960029394284, ymax=7.35802255467382,
ytick style={color=black}
]
\addplot [ultra thick, black17, dashed]
table {%
0 6.30109495956234
1 6.54492133939915
2 6.57975253377974
3 6.67097358733071
4 6.73897136145912
};
\addlegendentry{mean behavior}
\addplot [very thick, steelblue]
table {%
0 5.5170892375892
1 5.80433607249119
2 5.83113578260054
3 5.96530504442993
4 5.96530504442993
};
\addlegendentry{A/Hessen/47/2020}
\addplot [very thick, limegreen]
table {%
0 6.2204869006929
1 6.2204869006929
2 6.2204869006929
3 6.2204869006929
4 6.2204869006929
};
\addlegendentry{A/Ohio/09/2015}
\addplot [very thick, gold]
table {%
0 5.85204413110674
1 6.06681921845406
2 6.06681921845406
3 6.14456031860239
4 6.58121809692394
};
\addlegendentry{A/Wisconsin/71/2016}
\addplot [very thick, slateblue]
table {%
0 7.22478447358162
1 7.22478447358162
2 7.22478447358162
3 7.22478447358162
4 7.22478447358162
};
\addlegendentry{A/Iowa/39/2015}
\addplot [very thick, orangered]
table {%
0 7.11181784477211
1 7.24126198332346
2 7.26627608827696
3 7.26627608827696
4 7.26627608827696
};
\addlegendentry{A/Ohio/35/2017}
\addplot [very thick, darkcyan]
table {%
0 6.021641707042
1 6.47735165074596
2 6.70418739072788
3 6.70418739072788
4 6.81151180543359
};
\addlegendentry{A/Iowa/32/2016}
\addplot [very thick, darkgoldenrod]
table {%
0 5.4313467603397
1 6.29478179452952
2 6.29478179452952
3 6.81263986095956
4 6.81263986095956
};
\addlegendentry{A/Hunan/42443/2015}
\addplot [very thick, darkorchid]
table {%
0 7.02954862137446
1 7.02954862137446
2 7.02954862137446
3 7.02954862137446
4 7.02954862137446
};
\addlegendentry{A/Minnesota/19/2011}
\end{axis}

\end{tikzpicture}
 \else
  \includegraphics[width=.975\textwidth]{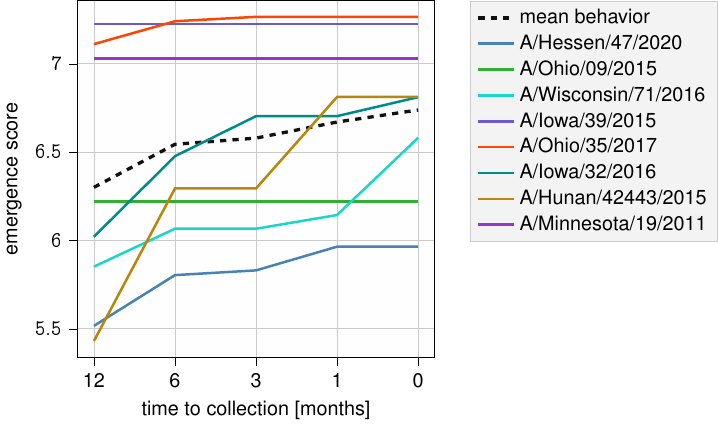}
  \fi 
    \captionN{\textcolor{black}{\textbf{\enet predicted risk scores for top 8 risky variant strains.} We evaluate  H1N1 and H1N2 animal strains that emerged in humans with \enet at twelve, six, three, one, and zero months before their collection date, showing how often risk can be observed to increase over time.}}\label{figvariantrisk}
\end{figure}

\clearpage

\begin{table}[H]
    \centering
    \captionN{Emergence score estimations for fifteen H1N1 and H1N2 variant strains (animal strains collected in humans)}\label{tabvariant}\centering
    \sffamily\fontsize{8}{8}\selectfont
   \begin{tabular}{L{2in}|L{.6in}|L{.6in}|L{.60in}|L{.6in}|L{.6in}|L{.6in}}\hline
\rowcolor{lightgray!50} Variant  strain &Subtype& 12  months  prior & 6  months  prior & 3  months  prior & 1  month  prior & At  collection \\\hline
A/Ohio/24/2017&H1N2&7.5&7.6&7.6&7.6&7.6\\\hline
A/Ohio/35/2017&H1N2&7.1&7.2&7.3&7.3&7.3\\\hline
A/Iowa/39/2015&H1N1&7.2&7.2&7.2&7.2&7.2\\\hline
A/Minnesota/19/2011&H1N2&7.0&7.0&7.0&7.0&7.0\\\hline
A/Hunan/42443/2015&H1N1&5.4&6.3&6.3&6.8&6.8\\\hline
A/Iowa/32/2016&H1N2&6.0&6.5&6.7&6.7&6.8\\\hline
A/Wisconsin/71/2016&H1N2&5.8&6.1&6.1&6.1&6.6\\\hline
A/Ohio/09/2015&H1N1&6.2&6.2&6.2&6.2&6.2\\\hline
A/Michigan/383/2018&H1N2&4.5&5.2&6.1&6.1&6.1\\\hline
A/Hessen/47/2020&H1N1&5.5&5.8&5.8&6.0&6.0\\\hline
A/Wisconsin/03/2021&H1N1&5.9&5.9&5.9&5.9&5.9\\\hline
A/California/71/2021&H1N2&5.3&5.3&5.4&5.5&5.5\\\hline
A/Netherlands/10370-1b/2020&H1N1&5.0&5.0&5.0&5.0&5.0\\\hline
A/Netherlands/3315/2016&H1N1&3.3&4.1&4.1&4.1&4.3\\\hline
A/Bretagne/24241/2021&H1N2&4.2&4.2&4.2&4.2&4.2\\
\hline\end{tabular}

\end{table}

\vskip 5em

\begin{table}[H]
    \centering
    \captionN{High risk strain percentage comparison between variant pool and animal surveillance }\label{tabvariantpc}\centering
    \sffamily\fontsize{8}{8}\selectfont
   \begin{tabular}{L{2.5in}|L{.6in}|L{.60in}|L{.6in}|L{.6in}|L{.6in}}\hline
\rowcolor{lightgray!50} Emergence  score  thresholds &6.0&6.5&6.8&7.0&7.1\\\hline
 \%  in  variants &60.0&46.7&40.0&26.7&20.0\\\hline
 \%  in  animal  surveillance &69.1&45.5&18.5&6.4&5.6\\
\hline\end{tabular}

\end{table}

\clearpage

\begin{table}[H]
    \centering
    \captionN{General linear model for evaluating effect of data diversity on \enet performance}\label{tabreg}\centering
    \sffamily\fontsize{8}{8}\selectfont
    \begin{tabular}{L{1.5in}|L{2in}}\hline
        \rowcolor{lightgray!50}  Variable Name & Description \\\hline
        enet\_complexity & Cumulative number of nodes in all predictors in the corresponding \enet \\\hline
        data\_diversity &  Number of clusters in set of input sequence where each sequence in a specific cluster is separated by at least $5$ mutations from sequences not in the cluster\\\hline
        ldistance\_WHO & Deviation of WHO predicted strain from the dominant strain\\\hline
\end{tabular}
\vskip 1em
\mnp{6.5in}{
    \fontsize{8}{8}\selectfont
    \verbatiminput{Figures/tabdata/model1.txt}
}
\vskip 2em
\mnp{6.5in}{
    \fontsize{8}{8}\selectfont
    \verbatiminput{Figures/tabdata/model2.txt}
}
\end{table}

\end{document}